\definecolor{Brown}{rgb}{0.55,0.0,0.10}
\definecolor{dgreen}{rgb}{0.00,0.56,0.00}
\definecolor{vertmoinsfonce}{rgb}{0.00,0.50,0.00}
\definecolor{vert}{rgb}{0.00,0.60,0.00}
\definecolor{llightggray}{rgb}{0.97,0.97,0.97}
\definecolor{lightggray}{rgb}{0.9,0.9,0.9}
\definecolor{ggray}{rgb}{0.5,0.5,0.5}
\definecolor{darkggray}{rgb}{0.25,0.25,0.25}
\definecolor{ddarkggray}{rgb}{0.1,0.1,0.1}
\definecolor{bleu}{rgb}{0.00,0.00,1.00}
\definecolor{darkblue}{rgb}{0,0,0.7}
\newtheorem{lemma}{Lemma}
\newtheorem{theorem}{Theorem}[section]
\newtheorem{definition}[theorem]{Definition}
\newtheorem{corollary}[theorem]{Corollary}
\newtheorem{remark}[theorem]{Remark}
\newcommand{\mrb}[1]{\textcolor{black}{#1}}
\newcommand{\mlt}[1]{\textcolor{black}{#1}}
\newcommand{\R}{\mathbb{R}}
\newcommand{\N}{\mathbb{N}}
\newcommand{\prob}{\mathbb{P}}
\newcommand{\E}{\mathbb{E}}
\newcommand{\Q}{\mathbb{Q}}
\font\dsrom=dsrom10 scaled 1200 \def \indic{\textrm{\dsrom{1}}}
\newcommand{\UN}{\indic}
\newcommand{\C}{\mathcal{C}}
\newcommand{\QQ}{\mathcal{Q}}
\newcommand{\W}{\mathcal{W}}
\newcommand{\PP}{\mathcal{P}}
\newcommand{\mc}{\mathcal}
\acrodef{DMC}[DMC]{Discrete Memoryless Channel}
\acrodef{KL}[KL]{Kullback-Leibler}
\begin{document}

\title{State Leakage and Coordination with Causal State Knowledge at the Encoder}


\author{Ma\"{e}l~Le~Treust,~\IEEEmembership{Member,~IEEE} and
Matthieu~R.~Bloch,~\IEEEmembership{Senior Member,~IEEE}
\thanks{
Manuscript received December 17, 2018; revised October 21, 2020; accepted October 29, 2020. Ma\"{e}l~Le~Treust acknowledges financial support of INS2I CNRS for projects JCJC CoReDe 2015, PEPS StrategicCoo 2016, BLANC CoS 2019; DIM-RFSI under grant EX032965; CY Advanced studies and The Paris Seine Initiative 2018 and 2020. This research has been conducted as part of the Labex MME-DII (ANR11-LBX-0023-01). Matthieu~R.~Bloch acknowledges financial support of National Science Foundation under award CCF 1320304. The authors gratefully acknowledge the financial support of SRV ENSEA for visits at Georgia Tech Atlanta in 2014 and at ETIS in Cergy in 2017. This work was presented in part at the IEEE International Symposium on Information Theory (ISIT), Barcelona, Spain, in July 2016 \cite{LeTreustBloch(ISIT)16}. \textit{(Corresponding author: Ma\"{e}l~Le~Treust.)}

M.~Le~Treust is with ETIS UMR 8051, CY Cergy Paris Universit\'e, ENSEA, CNRS, 6 avenue du Ponceau, 95014 Cergy-Pontoise CEDEX, France
(e-mail: mael.le-treust@ensea.fr).

M.~R.~Bloch is with School of Electrical and Computer Engineering, Georgia Institute of Technology, Atlanta, Georgia 30332
(e-mail: matthieu.bloch@ece.gatech.edu).

Communicated by N. Merhav, Associate Editor for Shannon Theory.

  }}

%

\maketitle

\begin{abstract}


We revisit the problems of state masking and state amplification through the lens of empirical coordination. 
Specifically, we characterize the rate-equivocation-coordination trade-offs regions of a state-dependent channel in which the encoder has causal and strictly causal state knowledge. We also extend this characterization to the cases of two-sided state information and noisy channel feedback. Our approach is based on the notion of core of the receiver's knowledge, \mrb{which we introduce} to capture what the decoder can infer about all the signals involved in the model. \mrb{Finally, we exploit the aforementioned results} to solve a channel state estimation zero-sum game in which the encoder prevents the decoder to estimate the channel state accurately.

\end{abstract}

\begin{IEEEkeywords}
Shannon theory, state-dependent channel, state leakage, empirical coordination, state masking, state amplification, causal encoding, two-sided state information, noisy channel feedback.
\end{IEEEkeywords}

\IEEEpeerreviewmaketitle


\section{Introduction}\label{sec:Introduction}

The study of state-dependent channels can be traced back to the early works of Shannon~\cite{Shannon58} \mrb{and} Gelf'and and Pinsker~\cite{gelfand-it-1980}, which identified optimal coding strategies to transmit reliably in the presence of a state known at the encoder causally or non-causally, respectively. The insights derived from the models have since proved central to the study of diverse topics including wireless communications~\cite{Costa1983,Keshet2008}, information-hiding and watermarking~\cite{Moulin2003}, and information transmission in repeated games~\cite{GossnerHernandezNeyman06}. The present work relates to the latter application and studies state-dependent channels with causal state knowledge from the perspective of \emph{empirical coordination}~\cite{CuffPermuterCover10}.

Previous studies that have explored the problem of not only decoding messages at the receiver but also estimating the channel state\mlt{,} are particularly relevant to the present work. 
\begin{figure}[!ht]
\begin{center}
 \psset{xunit=0.9cm,yunit=0.9cm}
 \begin{pspicture}(0,0)(9,2.2)
 \pscircle(0,1.5){0.445}
 \pscircle(0,0.5){0.445}
 \psframe(2,0)(3,1)
 \pscircle(5,0.5){0.45}
 \psframe(7,0)(8,1)
 \psline[linewidth=1pt]{->}(0.5,0.5)(2,0.5)
 \psline[linewidth=1pt]{->}(3,0.5)(4.5,0.5)
 \psline[linewidth=1pt]{->}(5.5,0.5)(7,0.5)
 \psline[linewidth=1pt]{->}(8,0.5)(9.5,0.5)
 \psline[linewidth=1pt]{->}(0.5,1.5)(5,1.5)(5,1)
 \psline[linewidth=1pt]{->}(2.5,1.5)(2.5,1)
 \rput[u](1,0.8){$M$}
 \rput[u](1,1.8){$S^{i}$}
 \rput[u](3.75,0.8){$X_i $}
 \rput[u](6.25,0.8){$Y^n$}
 \rput[u](8.85,0.8){$(\hat{M}, V^n)$}
 \rput(0,0.5){$\PP_{M}$}
 \rput(0,1.5){$\PP_S$}
 \rput(2.5,0.5){$f_i$}
 \rput(5,0.5){$\mc{T}$}
 \rput(7.5,0.5){$g,h$}
 \end{pspicture}
\caption{\mrb{The memoryless channel $\mc{T}_{Y|XS}$ depends on the state drawn i.i.d. according to $\PP_S$.} The encoding function is causal  $f_i: \mc{M} \times \mc{S}^{i} \rightarrow \mc{X}$, for all $i\in\{1,\ldots,n\}$ and the decoding functions $g: \mc{Y}^n  \rightarrow \mc{M} $ and  $h: \mc{Y}^n  \rightarrow \Delta(\mc{V}^n)$ are non-causal.}
\label{fig:CE}
\end{center}
\end{figure}
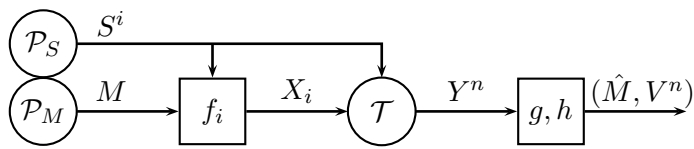
The \emph{state masking} formulation of the problem~\cite{MerhavShamai(StateMasking)07} aims at characterizing the trade-off between the rate of reliable communication and the minimal leakage about the channel state. The rate-leakage capacity region of state masking has been successfully characterized for both causal and non-causal state knowledge. The \emph{state amplification} formulation~\cite{KimSutivongCover(StateAmplification)08}, in which the state is conveyed to the receiver instead of being masked, aims at characterizing the trade-off between the rate of reliable communication and the reduction of uncertainty about the state. The rate-uncertainty reduction capacity region of state amplification has also been successfully characterized for causal and non-causal state knowledge. The state amplification formulation was subsequently extended in the causal case by replacing the reduction of uncertainty about the state by an average distortion function~\cite{ChoudhuriKimMitra13} (this model was dubbed causal \emph{state communication}). \mrb{Note that, in such a scenario, the channel output feedback at the encoder increases the region of achievable rate-distortion pairs~\cite{BrossLapidoth18}.} The rate-distortion capacity region of state communication has been successfully characterized for causal and strictly causal state knowledge, and has been characterized for noiseless and noisy non-causal state knowledge in the case of Gaussian channels with a quadratic distortion~\cite{SutivongChiangCoverKim05,Tian15}. Both formulations have been combined in~\cite{Koyluoglu2016} to study the trade-off between amplification and leakage rates in a channel with two receivers having opposite objectives. The amplification-leakage capacity region has been investigated for non-causal state knowledge, via generally non-matching inner and outer bounds. As a perhaps more concrete example,~\cite{TutuncuogluITW14} has studied the trade-off between amplification and leakage in the context of an energy harvesting scenario. An extreme situation of state masking, called state obfuscation, in which the objective is to make the channel output \mlt{sequence nearly} independent of the channel state\mlt{s}, has recently been investigated in~\cite{Wang2020}.

We revisit \mrb{here}, the problems of state masking and state amplification with causal and strictly causal state knowledge through the lens of \emph{empirical coordination}~\cite{CuffPermuterCover10}, \cite{KramerSavari07}. Empirical coordination refers to the control of the joint histograms of the various sequences such as states, codewords, that appear in channel models, and is related to the coordination of autonomous decision makers in game theory~\cite{GossnerHernandezNeyman06}. Specifically, the study of empirical coordination over state-dependent channels is a proxy for characterizing the utility of autonomous decision makers playing a repeated game in the presence of \mrb{an} environment variable (the state), random~\cite{GossnerHernandezNeyman06,GossnerVieille02} or adversarial~\cite{GossnerTomala06,GossnerTomala07,GossnerLarakiTomala09}, and of an observation structure (the channel) describing how agents observe each other's actions. The characterization of the empirical coordination capacity requires the design of coding schemes in which the actions of the decision makers are sequences that embed coordination information. The empirical coordination capacity has been studied for state-dependent channels under different constraints including strictly causal and causal encoding~\cite{CuffSchieler11}, for perfect channel~\cite{Cuff(ImplicitCoordination)11}, for strictly causal and causal decoding~\cite{LeTreust(EmpiricalCoordination)17}, with source feedforward~\cite{LarrousseLasaulceBloch(IT)18}, for lossless decoding~\cite{LeTreust(CorrelationITW)14}, with secrecy constraint~\cite{SchielerCuff(RateDistortion14)}, with two-sided state information~\cite{LeTreust(ISIT-TwoSided)15} and with channel feedback~\cite{LeTreust(ISITfeedbacks)15}. Empirical coordination is \mrb{also} a powerful tool for controlling the \mrb{Bayesian posterior beliefs} of the decoder, e.g. in the problems of Bayesian persuasion~\cite{LeTreustTomala19} and strategic communication~\cite{LeTreustTomala_IT19}. 

The main contribution of the present work is to show that empirical coordination provides a natural framework in which to jointly study the problems of reliable communication, state masking, and state amplification.  \mrb{This connection highlights some of the benefits of empirical coordination beyond those already highlighted in earlier works~\cite{CuffSchieler11}--\cite{LeTreust(ISITfeedbacks)15}.} In particular, we obtain the following.
\begin{itemize}
\item We introduce and characterize the notion of \emph{core of the receiver's knowledge}, which captures what the decoder can exactly know about the other variables involved in the system. For instance, this allows us to characterize the rate-leakage-coordination region for the causal state-dependent channel (Theorem~\ref{theo:LeakageCE}). Our definition of leakage refines previous work by exactly characterizing the leakage rate instead of only providing a single-sided bound. When specialized, our result (Theorem~\ref{theo:LeakageCEd}) simultaneously recovers the constraints already established both in~\cite[Section V]{MerhavShamai(StateMasking)07} and~\cite[Theorem 2]{KimSutivongCover(StateAmplification)08}.
\item We revisit the problem of causal state communication and characterize the normalized \ac{KL}-divergence between the decoder's posterior beliefs and a target belief induced by coordination (Theorem~\ref{theo:Estimation}). This allows us to characterize the rate-distortion trade-off for a zero-sum game, in which the decoder attempts to estimate the state while the encoder tries to mask it (Theorem~\ref{theo:DecoderEstimation}). 
\item We extend the results to other models, including two-sided state information (Theorem~\ref{theo:LeakageCE2SI}), noisy feedback (Theorem~\ref{theo:LeakageCENoisyFeedback}), and strictly causal encoding (Theorem~\ref{theo:LeakageSCE}).
\end{itemize}

The rest of the paper is organized as follows. In Section~\ref{sec:ProblemFormulation}, we formally introduce the model, along with necessary definitions and notation, and we state our main results. In Section~\ref{sec:ChannelStateEstimation}, we investigate the channel state estimation problem by introducing the \ac{KL}-divergence and the decoder's posterior beliefs. In Section~\ref{sec:Extensions} and Section~\ref{sec:StrictlyCausal}, we present some extensions of our results to different scenarios. The proofs \mrb{of most results} are \mrb{provided} in Appendices~\ref{sec:AchievabilityLeakageCE}-\ref{sec:ConverseTheoDecoderEstimation}, \mrb{with some details relegated to Supplementary Materials.}

%


%
%
%
%

\section{Problem formulation and main result}\label{sec:ProblemFormulation}

\subsection{Notation}
\label{sec:notation}

Throughout the paper, capital letters, e.g., $S$, denote random variables while lowercase letters, e.g., $s$. denote their realizations and calligraphic fonts, e.g., $\mc{S}$, denote the alphabets in which the realizations take values. All alphabets considered in the paper are assumed finite, i.e., $\vert\mc{S}\vert<\infty$. Sequences of random variables and realizations are denoted by $S^n=(S_1,\dots,S_n)$ and $s^n=(s_1,\ldots,s_n)$, respectively. We denote the set of probability distributions over $\mc{S}$ by $\Delta(\mc{S})$. For a distribution $\QQ_S\in\Delta(\mc{S})$, we drop the subscript and simply write $\QQ(s)$ in place of $\QQ_S(s)$ for the probability mass assigned to realization $s\in\mc{S}$. \mrb{The notation $\QQ_{X}(\cdot|y)\in \Delta(\mc{X})$ denotes the conditional distribution of $X\in\mc{X}$, given the realization $y\in \mc{Y}$.} For two distributions $\QQ_X,\PP_X\in\Delta(\mc{X})$, $||\QQ_X - \PP_X||_{1}= \sum_{x\in \mc{X}} | \QQ(x) - \PP(x)|$ stands for the $\ell_1$-distance between the vectors \mrb{representing} the distributions,
see also \cite[pp. 370]{cover-book-2006} and \cite[pp. 19]{CsiszarKorner(Book)11}. \mrb{We write} $Y  -\!\!\!\!\minuso\!\!\!\!-X    -\!\!\!\!\minuso\!\!\!\!-  W$ \mrb{when $Y$, $X$, and $W$ form a Markov chain in that order.} The notation $\UN(v=s)$ stands for the indicator function, which is equal to 1 if $v=s$ and 0 otherwise.

For a sequence $s^n\in\mc{S}^n$, $\textsf{N}(\mrb{ \tilde{s}}|s^n)$ denotes the occurrence number of symbol $\mrb{ \tilde{s}} \in \mc{S}$ in the sequence $s^n$. The empirical distribution ${Q}_S^n\in\Delta(\mc{S})$ of sequence $s^n\in\mc{S}^n$
is then defined as
\begin{align}
 \forall  \mrb{ \tilde{s}}\in \mc{S}, \quad{Q}^n( \mrb{ \tilde{s}}) = \; \frac{\textsf{N}(\mrb{ \tilde{s}}|s^n)}{n}.
\label{eq:EmpiricalDistribution}
\end{align}
Given $\delta>0$ and a distribution $\QQ_{SX}\in\Delta(\mc{S}\times\mc{X})$, $T_{\delta}(\QQ_{SX})$ stands for the set of sequences $(s^n,x^n)$ that are jointly typical with tolerance $\delta>0$ with respect to the distribution $\QQ_{SX}$, i.e., such that 
\begin{align}
\Big|\Big|{Q}^n_{SX}  - \QQ_{SX} \Big|\Big|_1 = \sum_{s,x}\Big|{Q}^n( s, x) - \QQ(s,x)\Big| \leq \delta.
\end{align}
\mrb{We denote by $\prob\big(S^n\in T_{\delta}(\PP_{S})\big)$ the probability value assigned to the event $\big\{S^n\in T_{\delta}(\PP_{S})\big\}$, according to the distribution of $S^n$.}

\subsection{System model}\label{sec:ModelDefinition}

The problem under investigation is illustrated in Figure~\ref{fig:CE}. A uniformly distributed message represented by the random variable $M \in \mc{M}$ is to be transmitted over a state dependent memoryless channel characterized by the conditional distribution $\mc{T}_{Y|XS}$ and a channel state \mrb{$S\in\mc{S}$} drawn according to the i.i.d. distribution $\PP_S$. For $n\in\mathbb{N}^{\star}=\mathbb{N}\setminus\{0\}$, the message $M$ and the state sequence $S^n$ are encoded into a codeword $X^n\in\mc{X}^n$ using an encoder, subject to \mrb{causal} constraints to be precised later. Upon observing the output $Y^n\in\mc{Y}^n$ of the noisy channel, the receiver uses a decoder to form an estimate $\hat{M} \in \mc{M}$ of $M$ and to generate actions $V^n\in\mc{V}^n$, whose exact role will be precised shortly. For now, $V^n$ can be thought of as an estimate of the state sequence $S^n$ but more generally captures the ability of the receiver to coordinate with the transmitter \mrb{and the channel state}.  Both $\mc{T}_{Y|XS}$ and $\PP_S$ are assumed known to all parties.
We are specifically interested in causal encoders formally defined as follows.
\begin{definition}
  \label{def:CodeLeakageCE}
  \mrb{A code with causal encoding consists of stochastic encoding functions $f_i : \mc{M} \times \mc{S}^{i}  \longrightarrow \Delta(\mc{X})$ $\forall i \in \{1, \ldots,n\}$, a deterministic decoding function $g : \mc{Y}^n  \longrightarrow  \mc{M}$, and a stochastic receiver action function $h : \mc{Y}^n  \longrightarrow  \Delta(\mc{V}^n)$. The set of codes with causal encoding with length $n$ and message set $\mc{M}$ is denoted $\mc{C}_{\sf{c}}(n,\mc{M})$.}
\end{definition}

A code $c \in \mc{C}_{\sf{c}}(n,\mc{M})$, the uniform distribution of the messages $\PP_{M}$, the source $\PP_S$ and the channel $\mc{T}_{Y|XS}$, induce a distribution on $(M,S^n,X^n,Y^n,V^n,\hat{M})$ \mrb{given by}
\begin{align}
\PP_{M}   \prod_{i=1}^n\bigg[ \PP_{S_i}  f_{X_i|S^iM}  \mc{T}_{Y_i|X_iS_i}\bigg] h_{V^n|Y^n}   \UN \big( \hat{M} = g(Y^n)\big) .\label{eq:GeneralDistribution}
\end{align}
Since the sequences $(S^n,X^n,Y^n,V^n)$ are random, the empirical distribution $Q^n_{SXYV}$ is also a random variable. The performance of codes is measured as follows.
\begin{definition}\label{def:achievability}
Fix a target rate $\textsf{R}\geq 0$, a target state leakage $\textsf{E}\geq 0$ and a target distribution $\QQ_{SXYV}$. The triple $(\textsf{R},\textsf{E},\mrb{\QQ_{SXYV}} )$ is achievable if for all $\varepsilon>0$, there exists  $\bar{n}\in \N^{\star}$, for all $n \geq \bar{n}$, there exists a code \mrb{$c\in\mc{C}_{\sf{c}}(n,\mc{M})$} that satisfies
\begin{align*}
\frac{\log_2 |\mc{M}|}{n}  \geq& \textsf{R} - \varepsilon,\\
 \bigg| \mc{L}_{\textsf{e}}(c)  - \textsf{E} \; \bigg|  \leq&\varepsilon, \qquad \text{\normalsize with } \qquad \mc{L}_{\textsf{e}}(c) = \frac{1}{n}  I(S^n;Y^n)  ,\\
 \PP_{\textsf{e}}(c)  =& \prob\bigg( M \neq \hat{M} \bigg)\nonumber\\
&+\prob\bigg(\Big|\Big|Q_{SXYV}^n - \QQ_{SXYV} \Big|\Big|_{1}> \varepsilon\bigg) \leq \varepsilon.  \end{align*}
We denote by \mrb{$\mc{A}_{\sf{c}}$} the set of achievable triples $(\textsf{R},\textsf{E},\mrb{\QQ_{SXYV}} )$.
\end{definition}

In layman's term, performance is captured along three metrics:
\begin{inparaenum}[i)]
  \item the rate at which the message $M$ can be reliably transmitted;
  \item the information leakage rate about the state sequence $S^n$ at the receiver; and 
  \item the ability of the encoder to coordinate with the receiver, captured by the empirical coordination with respect to $Q_{SXYV}$. 
\end{inparaenum}
\mrb{The need to coordinate with receiver action $V$ is motivated by problems in which the terminals represent decision makers that choose actions $(X,V)$ as a function of the system state $S$, as in~\cite{GossnerHernandezNeyman06}. The state can also be used to represent a system to control, in which case coordination also ties to the Witsenhausen's counterexample~\cite{Grover2015,Oechtering2019}.}


\begin{figure}[!ht]
\begin{center}
 \psset{xunit=0.5cm,yunit=0.4cm}
 \begin{pspicture}(-1,-0.5)(7.5,9.5)
 \psline[linewidth=1pt]{->}(0,-1)(0,10)
 \psline[linewidth=1pt]{->}(-1,0)(11,0)
 \pspolygon[fillstyle=vlines*,linecolor=blue,hatchcolor = blue](0,2)(0,7)(1,7)(8,2)
 \psline[linewidth=0.5pt,linestyle = dotted]{-}(0,7)(10,7)
 \psline[linewidth=0.5pt,linestyle = dotted]{-}(0,8)(10,8)
 \psline[linewidth=0.5pt,linestyle = dotted]{-}(0,2)(10,2)
 \psline[linewidth=0.5pt,linestyle = dotted]{-}(8,0)(8,9.5)
 \psdots(0,2)(0,8)(8,0)(0,0)(0,7)
 \rput[r](-0.2,8){$I(S,W_1 ;Y)$}
 \rput[r](-0.2,7){$H(S)$}
 \rput[r](-0.2,2){$I(S;Y,W_1,W_2)$}
 \rput[b](5.5,-1.2){$I(W_1,W_2;Y) - I(W_2;S|W_1)$}
 \rput[b](11,-0.9){$\textsf{R}$}
 \rput[r](-0.2,9.5){$\textsf{E}$}
 \rput[b](-0.5,-0.75){$0$}
 \rput[r](7,8.7){$ H(S)< I(S,W_1;Y)$}
 \end{pspicture}
\label{fig:KimSutivongCover08}
\end{center}
\caption{\mrb{The region of achievable $(\textsf{R},\textsf{E}  ) \in\mc{A}_{\sf{c}}$ for a given distribution $\QQ_{SW_1W_2XYV}$ for which $H(S)<I(S,W_1;Y)$.}}\label{fig:RegionRE}
\end{figure}

\subsection{Main result}\label{sec:MainResult}

\begin{theorem}\label{theo:LeakageCE}
Consider a target distribution $\QQ_{SXYV}$ that decomposes as $\QQ_{SXYV} = \PP_S     \QQ_{X|S} \mc{T}_{Y|XS}      \QQ_{V|SXY}$. \mrb{Then, $(\textsf{R},\textsf{E},\QQ_{SXYV} )\in\mc{A}_{_{\sf{c}}}$} if and only if there exist two auxiliary random variables $(W_1,W_2)$ with distribution $ {\QQ}_{SW_1W_2XYV} \in \Q_{\sf{c}}$ \mrb{satisfying}
\begin{align}
I(S;W_1,W_2,Y) \leq \textsf{E}&\leq H(S), \label{eq:theoremCEachie2} \\
\textsf{R} + \textsf{E} &\leq  I(W_1,S; Y),\label{eq:theoremCEachie3}
\end{align}
where $\Q_{\sf{c}}$ is the set of distributions ${\QQ}_{SW_1W_2XYV}$ with marginal $\QQ_{SXYV}$ that decompose as 
\begin{align}
 \PP_S   \QQ_{W_1}    \QQ_{W_2|SW_1}  \QQ_{X|SW_1}
     \mc{T}_{Y|XS} \QQ_{V|YW_1W_2} , \label{eq:distribution}
\end{align}
and \mrb{such that} $\max(|\mc{W}_1|,|\mc{W}_2|) \leq  | \mc{S}\times  \mc{X}\times \mc{Y} \times \mc{V} | +1 $.
\end{theorem}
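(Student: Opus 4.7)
The plan is to establish achievability through a random coding scheme that combines Shannon strategies (to accommodate causal state at the encoder) with a two-layer superposition-cover structure in the spirit of state amplification and coordination coding, and to match it with a single-letter converse built on Fano's inequality together with the newly introduced ``core of the receiver's knowledge.'' What distinguishes the theorem from earlier state-masking and state-amplification results is that $\textsf{E}$ is controlled on \emph{both} sides by single-letter information quantities, which is precisely what the core formalizes.

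For achievability I would fix $\QQ_{SW_1W_2XYV}\in\Q_{\sf{c}}$ and a target $\textsf{E}\in[I(S;W_1,W_2,Y),H(S)]$, and set $\textsf{A}=\textsf{E}-I(S;W_1,W_2,Y)\geq 0$. I would generate $W_1^n(m,k)$ i.i.d.\ from $\QQ_{W_1}$, indexed by $(m,k)\in\mc{M}\times\{1,\ldots,2^{n\textsf{A}}\}$, and then superposition-draw $W_2^n(m,k,j)$ i.i.d.\ from $\QQ_{W_2|W_1}$. The encoder draws $K$ uniformly, uses the covering lemma to pick $J^{\star}$ so that $(W_1^n(M,K),W_2^n(M,K,J^{\star}),S^n)$ is jointly $\delta$-typical, and produces $X_i$ causally via the Shannon strategy $\QQ_{X|SW_1}(\cdot\mid S_i,W_{1,i}(M,K))$. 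The decoder jointly decodes $(\widehat M,\widehat K,\widehat J)$ from $Y^n$; standard packing/covering bookkeeping, complemented by a binning of the $W_2$ layer whenever $I(W_2;Y|W_1)<I(W_2;S|W_1)$, shows decoding succeeds under \eqref{eq:theoremCEachie3}. It then draws $V_i$ i.i.d.\ from $\QQ_{V|YW_1W_2}(\cdot\mid Y_i,W_{1,i},W_{2,i})$. Joint typicality yields empirical coordination; for the leakage, $H(W_1^n,W_2^n\mid Y^n)=o(n)$ by Fano, so
\begin{equation*}
\tfrac{1}{n}I(S^n;Y^n)=\tfrac{1}{n}I(S^n;Y^n,W_1^n,W_2^n)-o(1)\longrightarrow I(S;Y,W_1,W_2)+\textsf{A}=\textsf{E},
\end{equation*}
the additional $\textsf{A}$ bits being contributed by the correlation between the amplification label $K$ and $S^n$ introduced by the covering step.

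For the converse, I would fix an achievable triple, invoke a uniform time-sharing variable $T$, and identify auxiliaries $W_{1,i}=(M,Y^{i-1})$ and $W_{2,i}$ chosen so that $V_i$ is a stochastic function of $(Y_i,W_{1,i},W_{2,i})$, following the ``core'' philosophy. Fano's inequality on $M$ yields $n\textsf{R}\leq I(M;Y^n)+n\varepsilon_n$. Using $M\perp S^n$, one has $I(M;Y^n)+I(S^n;Y^n)\leq I(M,S^n;Y^n)$; chain-ruling and single-letterizing with the above auxiliaries then gives $\textsf{R}+\textsf{E}\leq I(W_1,S;Y)$. The lower bound $\textsf{E}\geq I(S;W_1,W_2,Y)$ follows because the chosen $(W_{1,i},W_{2,i})$ are functions of the decoder's observation history, so $I(S^n;W_1^n,W_2^n\mid Y^n)=o(n)$; the upper bound $\textsf{E}\leq H(S)$ is trivial. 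I would verify the Markov structure \eqref{eq:distribution}---namely $W_1\perp S$, $X-(S,W_1)-W_2$, and $V-(Y,W_1,W_2)-(S,X)$---directly from the independence $M\perp S$ and the causal encoder/decoder definitions, and derive the cardinality bounds on $\mc{W}_1,\mc{W}_2$ from a Carathéodory-type support-lemma argument preserving $\QQ_{SXYV}$ along with the pertinent information functionals.

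The main obstacle I anticipate is the simultaneous design of $W_1$ and $W_2$ so that (i) the induced joint distribution lies in $\Q_{\sf{c}}$---in particular the conditional independences $X\perp W_2\mid(S,W_1)$ and $V\perp(S,X)\mid(Y,W_1,W_2)$---and (ii) the amplification parameter $\textsf{A}$ is tuned to hit the target $\textsf{E}$ \emph{exactly}, and not merely as a one-sided bound. The tight two-sided control of the leakage, absent from \cite{MerhavShamai(StateMasking)07} and \cite{KimSutivongCover(StateAmplification)08}, is precisely what the core of the receiver's knowledge is designed to enable, and ensuring this exactness concurrently with the empirical coordination constraint on $(S^n,X^n,Y^n,V^n)$ is the principal new technical hurdle.
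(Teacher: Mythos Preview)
Your achievability sketch has a genuine gap: the single-block scheme cannot hit any leakage $\textsf{E}>I(S;W_1,W_2,Y)$. You draw $K$ uniformly and independently of $S^n$, so $W_1^n(M,K)$ is independent of $S^n$; the covering step selects $J^\star$, not $K$, and the resulting contribution $I(S^n;W_2^n\mid W_1^n)\approx nI(W_2;S\mid W_1)$ is already inside $I(S;W_1,W_2,Y)$. Hence $\tfrac{1}{n}I(S^n;Y^n)\to I(S;W_1,W_2,Y)$, with no extra $\textsf{A}$ term. More fundamentally, under causal encoding in one block there is no way to embed a description of $S^n$ in $X^n$: the index $J^\star$ is determined only after $S^n$ is fully observed, yet $X_i$ depends solely on $(S_i,W_{1,i}(M,K))$, so $J^\star$ is never carried by the channel input. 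The paper resolves this with a \emph{block-Markov} construction: in block $b$ the $W_1$-cloud carries, besides $m_b$, a bin index $l_{b-1}$ of rate $\textsf{R}_{\sf{L}}=\textsf{E}-I(S;W_1,W_2,Y)$ locating $S^n_{b-1}$, and a covering index $k_b$ so that $W_2^n$ over block $b{-}1$ is jointly typical with $S^n_{b-1}$. Both are causal because $S^n_{b-1}$ is fully known at the start of block $b$. The exact leakage then requires a two-sided analysis across blocks: an upper bound via $H(W_2^n,L_b\mid W_1^n,M_{b+1},C)\le \log|\mc{M}_{\sf{L}}|+\log|\mc{M}_{\sf{K}}|$, and a lower bound via a Markov-inequality count of the typical $s^n$ in bin $l$ consistent with $(W_1^n,W_2^n,Y^n)$. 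Your side remark about ``binning the $W_2$ layer'' is also problematic, since the decoder must recover $W_2^n$ itself (not a bin index) to draw $V^n$ from $\QQ_{V|YW_1W_2}$; the block-Markov structure is what makes this possible without violating causality.

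On the converse, the identification $W_{1,i}=(M,Y^{i-1})$ is not the one that works here. The distribution class $\Q_{\sf{c}}$ requires $X_i-\!\!\!\!\minuso\!\!\!\!-(S_i,W_{1,i})-\!\!\!\!\minuso\!\!\!\!-W_{2,i}$, but with causal encoding $X_i=f_i(M,S^i)$ still depends on $S^{i-1}$, which is not determined by $(M,Y^{i-1})$; so for a generic $W_{2,i}$ this Markov chain fails. The paper takes $W_{1,i}=(M,S^{i-1})$ and $W_{2,i}=Y_{i+1}^n$, which makes $X_i$ a (stochastic) function of $(S_i,W_{1,i})$ by construction, and obtains the lower bound $\textsf{E}\ge I(S;W_1,W_2,Y)$ not from ``$(W_{1,i},W_{2,i})$ are functions of the decoder's observation'' (they are not---$W_{1,i}$ contains $S^{i-1}$) but from Fano's inequality on the message, namely $I(S^n;M\mid Y^n)\le H(M\mid Y^n)=o(n)$, after which the chain rule over $i$ and the i.i.d.\ property of $S$ give the single-letter form.
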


The achievability and converse proofs are provided in Appendices~\ref{sec:AchievabilityLeakageCE} and~\ref{sec:ConverseLeakageCE}, respectively, with the cardinality bounds established \mrb{in the Supplementary Materials}. \mrb{The key idea behind the achievability proof is the following. The encoder operates in a Block-Markov fashion to ensure that the transmitted signals, the state, the received sequence, and the receiver actions are coordinated subject to the causal constraint at the encoder. This requires the use of two auxiliary codebooks, captured by the auxiliary random variables $W_1$ and $W_2$, where the first codebook is used for reliable communication while the second one is used to coordinate with the state. Simultaneously, the encoder quantizes the channel state and transmits carefully chosen bin indices on top of its messages to finely control how much the receiver can infer about the channel state.} \mrb{The region of achievable pairs $(R,E)$ is depicted in Fig.~\ref{fig:RegionRE} for a given distribution $\QQ_{SW_1W_2XYV}$, assuming $H(S)<I(S,W_1;Y)$.}


\begin{remark}\label{remark:MarkovCausal}
Equation \eqref{eq:theoremCEachie3} and the first inequality of \eqref{eq:theoremCEachie2} imply the information constraints of \cite[Theorem 3]{ChoudhuriKimMitra13} for causal state communication and of \cite[Theorem 2]{CuffSchieler11} for empirical coordination. 
\begin{align}
\textsf{R} &\leq I(W_1,W_2;Y) - I(W_2;S|W_1). \label{eq:theoremCEachie1} 
\end{align}
Indeed, both Markov chains $X -\!\!\!\!\minuso\!\!\!\!- (S ,W_1) -\!\!\!\!\minuso\!\!\!\!-  W_2$ and $Y -\!\!\!\!\minuso\!\!\!\!- (X , S ) -\!\!\!\!\minuso\!\!\!\!-  ( W_1,W_2 )$ imply $Y -\!\!\!\!\minuso\!\!\!\!- (W_1 , S ) -\!\!\!\!\minuso\!\!\!\!-  W_2$. \end{remark}

Theorem \ref{theo:LeakageCE} has several important consequences. First, the coordination of both encoder and decoder actions according to  $\PP_S      \QQ_{X|S} \mc{T}_{Y|XS}      \QQ_{V|SXY}$ is compatible with the reliable transmission of additional information at rate $\textsf{R}\geq0$. Second, the case of equality in the right-hand-side inequality of \eqref{eq:theoremCEachie2} corresponds to the full \mrb{disclosure} of the channel state $S$ to the decoder. Third, \mrb{for any $(\textsf{R},\mrb{\QQ_{SXYV}})$, the minimal state leakage $\textsf{E}^{\star}(\textsf{R}, \mrb{\QQ_{SXYV}})$ such that $(\textsf{R},\textsf{E}^{\star}(\textsf{R}, \mrb{\QQ_{SXYV}}),\mrb{\QQ_{SXYV}})\in\mc{A}_{\sf{c}}$, if it exists,} is given by
\begin{align}
\textsf{E}^{\star}(\textsf{R}, \mrb{\QQ_{SXYV}}) = \min_{\QQ_{SW_1W_2XYV}  \in \Q_{\sf{c}},\atop \text{s.t.}\; \textsf{R} \leq I(W_1,W_2;Y) - I(W_2;S|W_1) }  I(S;W_1,W_2,Y) .\label{eq:MinimumLeakage}
\end{align}

The reliable transmission of information requires the decoder to know the encoding function, from which it can \mrb{estimate} the channel state $S$. 
In Section \ref{sec:ChannelStateEstimation}, we investigate the relationship between the state leakage $\mc{L}_{\textsf{e}}(c)$ and the decoder's posterior belief  $\PP_{S^n|Y^n}$ induced by the \mlt{encoding} process.


\subsection{Special case without \mrb{receiver} actions} \label{sec:RemovingReceiverAction}

We now assume that the decoder does not return an action $V$ coordinated with the other symbols $(S,X,Y)$\mlt{, in order} to compare our setting with the problems of ``state masking'' \cite[Section V]{MerhavShamai(StateMasking)07} and ``state amplification'' \cite[Section IV]{KimSutivongCover(StateAmplification)08}. Note that these earlier works involve slightly different notions of achievable \mrb{state} leakage. In \cite{MerhavShamai(StateMasking)07}, the state leakage is upper bounded by $\mc{L}_{\textsf{e}}(c) = \frac{1}{n} I(S^n;Y^n)  \leq \textsf{E} + \varepsilon$. In \cite{KimSutivongCover(StateAmplification)08}, the decoder forms a list $L_n(Y^n)\subseteq \mc{S}^n$ with cardinality $\log_2 |L_n(Y^n)| = H(S) - \textsf{E} $ such that the list decoding error \mrb{probability} $\prob(S^n\notin L_n(Y^n))\leq \varepsilon$ is small, \mrb{hence reducing} the uncertainty about the state. Here, we require the leakage $\mc{L}_{\textsf{e}}(c) =\frac{1}{n} I(S^n;Y^n)$ induced by the code to be controlled \mrb{by} $ \big| \mc{L}_{\textsf{e}}(c)  - \textsf{E} \; \big|  \leq\varepsilon$. Nevertheless, we shall see that our definition allows us to obtain the \mrb{rate constraints} of~\cite{MerhavShamai(StateMasking)07,KimSutivongCover(StateAmplification)08} as extreme cases.

\begin{definition}\label{def:CodeLeakageCEd}
\mrb{A code without receiver actions consists of stochastic encoding functions $f_i : \mc{M} \times \mc{S}^{i}  \longrightarrow \Delta(\mc{X})$ $\forall i \in \{1, \ldots,n\}$ and a deterministic decoding function $g : \mc{Y}^n  \longrightarrow  \mc{M}$. The set of such codes with length $n$ and message set $\mc{M}$ is denoted $\mc{C}_{\sf{d}}(n,\mc{M})$. The corresponding set of achievable triples $(\textsf{R},\textsf{E},\QQ_{SXY} )$ is defined as in Definition~\ref{def:achievability} and is denoted $\mc{A}_{\sf{d}}$.}
\end{definition}

Note that the target distribution is here restricted to \mrb{$\QQ_{SXY} \in \Delta(  \mc{S}   \times \mc{X}\times \mc{Y})$} \mrb{since the receiver does not take an action.}

\begin{theorem}\label{theo:LeakageCEd}
Consider a target distribution $\QQ_{SXY}$ that decomposes as $\QQ_{SXY} = \PP_S      \QQ_{X|S} \mc{T}_{Y|XS}$. Then, \mrb{$(\textsf{R},\textsf{E}, \QQ_{SXY} )\in \mc{A}_{\sf{d}}$} if and only if there exists an auxiliary random variable $W_1$ with distribution $ \QQ_{SW_1XY} \in \Q_{\sf{d}}$ that satisfies
\begin{align}
I(S;W_1,Y) \leq \textsf{E}&\leq H(S), \label{eq:theoremCEachie2noCO} \\
\textsf{R} + \textsf{E} &\leq  I(W_1,S; Y),\label{eq:theoremCEachie3noCO}
\end{align}
where $\Q_{\sf{d}}$ is the set of distributions $\QQ_{SW_1XY}$ with marginal $\QQ_{SXY}$ that decompose as
\begin{align}
  \PP_S    \QQ_{W_1}   \QQ_{X|SW_1}      \mc{T}_{Y|XS}, \label{eq:distributiond}
\end{align}
and \mrb{such that} $|\mc{W}_1| \leq  | \mc{S}\times  \mc{Y}  | +1 $.
\end{theorem}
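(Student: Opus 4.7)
The plan is to derive Theorem~\ref{theo:LeakageCEd} as a direct specialization of Theorem~\ref{theo:LeakageCE}, obtained by collapsing both the receiver-action alphabet $\mc{V}$ and the auxiliary variable $W_2$ to singletons. A code in $\mc{C}_{\sf{d}}(n,\mc{M})$ is equivalent to a code in $\mc{C}_{\sf{c}}(n,\mc{M})$ whose receiver-action map $h$ is a constant, so the two regions $\mc{A}_{\sf{c}}$ and $\mc{A}_{\sf{d}}$ are related by marginalization over $V$ and, at the level of auxiliary variables, over $W_2$.

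For achievability, I would start with a distribution $\QQ_{SW_1XY}\in\Q_{\sf{d}}$ satisfying \eqref{eq:theoremCEachie2noCO}--\eqref{eq:theoremCEachie3noCO} and set $W_2\equiv w_2^{\star}$ and $V\equiv v^{\star}$ deterministic. The resulting joint $\QQ_{SW_1W_2XYV}$ factors according to \eqref{eq:distribution} and hence belongs to $\Q_{\sf{c}}$, with marginal on $(S,X,Y,V)$ equal to $\QQ_{SXY}\otimes\delta_{v^{\star}}$. Since $I(S;W_1,W_2,Y)=I(S;W_1,Y)$ and $I(W_1,S;Y)$ is unchanged, the hypotheses of Theorem~\ref{theo:LeakageCE} are fulfilled. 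The code $c\in\mc{C}_{\sf{c}}(n,\mc{M})$ that it supplies has a trivially deterministic $h$, so discarding $h$ yields a code in $\mc{C}_{\sf{d}}(n,\mc{M})$ certifying $(\textsf{R},\textsf{E},\QQ_{SXY})\in\mc{A}_{\sf{d}}$.

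For the converse, I would take a sequence of codes $c\in\mc{C}_{\sf{d}}(n,\mc{M})$ fulfilling Definition~\ref{def:achievability} and augment each with a trivial $h$ returning a fixed $v^{\star}$. The augmented codes lie in $\mc{C}_{\sf{c}}(n,\mc{M})$ and certify $(\textsf{R},\textsf{E},\QQ_{SXY}\otimes\delta_{v^{\star}})\in\mc{A}_{\sf{c}}$. Invoking the converse of Theorem~\ref{theo:LeakageCE} yields auxiliary variables $(W_1,W_2)$ with joint law in $\Q_{\sf{c}}$ satisfying \eqref{eq:theoremCEachie2}--\eqref{eq:theoremCEachie3}. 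Marginalizing out $W_2$ and $V$ leaves $(S,W_1,X,Y)$ distributed according to \eqref{eq:distributiond} (the summations over $W_2$ and $V$ collapse since $\QQ_{W_2|SW_1}$ and $\QQ_{V|YW_1W_2}$ are proper conditionals), hence in $\Q_{\sf{d}}$ with the required marginal $\QQ_{SXY}$. The rate constraints transfer via the data-processing inequality $I(S;W_1,Y)\leq I(S;W_1,W_2,Y)\leq\textsf{E}$ together with the unchanged bound $\textsf{R}+\textsf{E}\leq I(W_1,S;Y)$.

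The main obstacle is the tightening of the cardinality bound, which cannot be inferred directly from the reduction: Theorem~\ref{theo:LeakageCE} only supplies $|\mc{W}_1|\leq |\mc{S}\times\mc{X}\times\mc{Y}|+1$ once $\mc{V}$ has been collapsed, whereas the present claim is $|\mc{W}_1|\leq |\mc{S}\times\mc{Y}|+1$. I would obtain this sharper bound by an independent Fenchel--Eggleston--Carath\'eodory argument applied to the support of $W_1$, preserving the joint marginal required to reconstruct $\QQ_{SXY}$ together with the two mutual information functionals $I(S;W_1,Y)$ and $I(W_1;Y|S)=I(W_1,S;Y)-I(S;Y)$, in the same spirit as the cardinality argument deferred to the Supplementary Materials for Theorem~\ref{theo:LeakageCE}.
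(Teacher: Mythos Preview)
Your achievability is exactly the paper's: specialize Theorem~\ref{theo:LeakageCE} by making $W_2$ and $V$ degenerate (the paper additionally remarks that the block-Markov scheme then collapses to a single block, but this is cosmetic). For the converse, however, you take a different route. The paper does \emph{not} reduce to Theorem~\ref{theo:LeakageCE}; it reruns the single-letterization directly with only one auxiliary $W_{1,i}=(M,S^{i-1})$, obtaining $I(S;Y\,|\,W_1)\le\textsf{E}$ (equivalently $I(S;W_1,Y)\le\textsf{E}$ by the independence of $S$ and $W_1$) and $\textsf{R}+\textsf{E}\le I(W_1,S;Y)$ without ever introducing a second auxiliary. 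Your reduction is equally valid and more economical: inherit $(W_1,W_2)$ from Theorem~\ref{theo:LeakageCE}, discard $W_2$ via the monotonicity $I(S;W_1,Y)\le I(S;W_1,W_2,Y)$, and note that the sum-rate bound already involves only $W_1$. The paper's direct proof avoids any logical dependence on the harder theorem; yours makes that dependence explicit and spares repeating the Fano and chain-rule manipulations.

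One caveat on the cardinality bound: the functionals you propose to preserve---the full marginal $\QQ_{SXY}$ together with two conditional entropies---would yield a bound of order $|\mc{S}\times\mc{X}|+1$ rather than the stated $|\mc{S}\times\mc{Y}|+1$, since with $\PP_S$ and $\mc{T}_{Y|XS}$ fixed the free marginal data is $\QQ_{X|S}$. To hit $|\mc{S}\times\mc{Y}|+1$ one would preserve only $\QQ_{SY}$ plus $H(Y|S,W_1)$ and $H(S|Y,W_1)$, but then the $(S,X,Y)$ marginal is not in general recovered, conflicting with the requirement that $\QQ_{SW_1XY}\in\Q_{\sf d}$ have marginal $\QQ_{SXY}$. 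Your identification of the Carath\'eodory mechanism is correct; the precise constant deserves more care, and the discrepancy may reflect an inaccuracy in the bound as stated rather than a defect in your plan.
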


The achievability \mrb{proof is obtained} from Theorem \ref{theo:LeakageCE} by \mrb{setting} $W_2 = \emptyset$ and \mrb{by} considering a single block coding instead of block-Markov coding. 
The converse proof \mrb{is similar to the converse of Theorem~\ref{theo:LeakageCE} and is provided in the Supplementary Materials}.

\begin{remark}\label{remark:MarkovCEd}
  \mrb{When setting $W_2 = \emptyset$, \eqref{eq:theoremCEachie1} in Remark~\ref{remark:MarkovCausal} simplifies to
\begin{align}
\textsf{R}  &\leq  I(W_1; Y), \label{eq:theoremCEachie1noCO} 
\end{align}
which, together with the first inequality in~\eqref{eq:theoremCEachie2noCO}, coincides with the information constraints of~\cite[pp. 2260]{MerhavShamai(StateMasking)07}.} Furthermore, \eqref{eq:theoremCEachie1noCO}, \eqref{eq:theoremCEachie3noCO} and \mrb{the} second inequality of \eqref{eq:theoremCEachie2noCO} \mrb{correspond} to the region $\mc{R}_0$ stated in \cite[Lemma 3]{KimSutivongCover(StateAmplification)08}. Formally, the region characterized by Theorem~\ref{theo:LeakageCEd} is the intersection of the regions identified in~\cite[pp. 2260]{MerhavShamai(StateMasking)07} and~\cite[Lemma 3]{KimSutivongCover(StateAmplification)08}. 
\end{remark}





\section{Channel state estimation via distortion function}\label{sec:ChannelStateEstimation}

\subsection{Decoder posterior belief}\label{sec:Estimation}

\mrb{In this section, we provide an upper bound on the \ac{KL}-divergence between the decoder posterior belief $\PP_{S^n|Y^n}$ induced by \mlt{an} encoding, and the target conditional distribution $\QQ_{S|YW_1W_2}$.}


\begin{theorem}[Channel state estimation]\label{theo:Estimation}
Assume that the distribution $\mrb{\QQ=}\QQ_{SW_1W_2XY}$ has full support. For any conditional distribution $\PP_{W_1^nW_2^nX^n|S^n}$, we have
 \begin{align}
   \frac{1}{n} D\bigg(\PP_{S^n|Y^n} \bigg|\bigg|& \prod_{i=1}^n \QQ_{S_i|Y_iW_{1,i}W_{2,i}} \bigg) \\
   \leq& \mrb{ \mc{L}_{\textsf{e}}(c) - I(S;W_1,W_2,Y)}   + \alpha_1\delta\nonumber\\&
   +  \alpha_2\prob\Big((S^n,W_1^n,W_2^n,Y^n) \notin T_{\delta}(\QQ) \Big), \label{eq:LeakageKL0}
 \end{align}
where $\delta>0$ denotes the tolerance of the set of typical sequences $T_{\delta}(\QQ)$ and the constants $\alpha_1 = \sum_{s,w_1,\atop w_2,y} \log_2  \frac{1}{\QQ(s|w_1,w_2,y)}$ and $\alpha_2 =\log_2 \frac{1}{\min_{s,y,w_1,w_2} \QQ(s|y,w_{1}, w_{2} )}$ are strictly positive.
\end{theorem}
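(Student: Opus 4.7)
The plan is to expand the normalized KL-divergence as an expectation under the joint distribution $\PP$ induced by $\prod_{i=1}^n \PP_S$, the conditional $\PP_{W_1^nW_2^nX^n|S^n}$, and the memoryless channel $\mc{T}_{Y|XS}$, and then to treat the two resulting pieces separately. Writing
\begin{align*}
\frac{1}{n} D\!\left(\PP_{S^n|Y^n} \,\bigg\|\, \prod_{i=1}^n \QQ_{S_i|Y_iW_{1,i}W_{2,i}}\right) = -\frac{1}{n} H_\PP(S^n|Y^n) + \E\!\left[-\frac{1}{n}\sum_{i=1}^n \log \QQ(S_i|Y_i,W_{1,i},W_{2,i})\right],
\end{align*}
one obtains a (negative) conditional entropy term and a cross-entropy term.

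For the entropy piece, since $S^n$ is i.i.d.\ under $\PP$ with marginal $\PP_S$ (the channel state is drawn independently of the coding scheme), $H_\PP(S^n) = nH(S)$ and hence $H_\PP(S^n|Y^n) = n H(S) - I(S^n;Y^n) = n\bigl(H(S) - \mc{L}_{\textsf{e}}(c)\bigr)$ by definition of the leakage. This contributes exactly $\mc{L}_{\textsf{e}}(c) - H(S)$ to the bound.

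For the cross-entropy piece, I split the expectation according to whether the event $E_\delta := \{(S^n,W_1^n,W_2^n,Y^n) \in T_\delta(\QQ)\}$ holds. On $E_\delta$, I rewrite $-\frac{1}{n}\sum_i \log \QQ(S_i|Y_i,W_{1,i},W_{2,i})$ as $\sum_{s,w_1,w_2,y} Q^n(s,w_1,w_2,y)\log\frac{1}{\QQ(s|w_1,w_2,y)}$ and compare it to $H_\QQ(S|Y,W_1,W_2)$; using the entrywise bound $|Q^n - \QQ|\leq \|Q^n - \QQ\|_1 \leq \delta$, the mismatch is controlled by $\alpha_1 \delta$. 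On the complement, each individual logarithm $-\log \QQ(S_i|Y_i,W_{1,i},W_{2,i})$ is uniformly at most $\alpha_2$, finite thanks to the full-support hypothesis on $\QQ$. Aggregating yields an upper bound of $H_\QQ(S|Y,W_1,W_2) + \alpha_1 \delta + \alpha_2 \prob(E_\delta^c)$ for the cross-entropy contribution.

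Summing the two pieces and using that the marginal of $S$ under $\QQ$ is $\PP_S$ to rewrite $H(S) - H_\QQ(S|Y,W_1,W_2) = I(S;Y,W_1,W_2)$ gives the announced bound. The main obstacle is the uniform continuity estimate on the typical set: the full-support hypothesis on $\QQ$ is essential, both to keep $\alpha_1$ and $\alpha_2$ finite and to control $\log \frac{1}{\QQ(s|y,w_1,w_2)}$ pointwise on the complement of $E_\delta$.
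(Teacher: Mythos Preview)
Your proof is correct and follows essentially the same approach as the paper's: both expand the KL-divergence into a conditional entropy term $-\frac{1}{n}H(S^n|Y^n)$ and a cross-entropy term, split the latter according to the typicality event, bound the typical part by $H_{\QQ}(S|Y,W_1,W_2)+\alpha_1\delta$ via the empirical-distribution rewriting, bound the atypical part uniformly by $\alpha_2$ using full support, and then combine using $H(S)=\frac{1}{n}H(S^n)$ and the definition of $\mc{L}_{\textsf{e}}(c)$. The only cosmetic difference is that the paper invokes the typical-sequence probability bound from \cite[p.~26]{ElGamalKim(book)11} directly, whereas you spell out the underlying empirical-distribution comparison; these are the same computation.
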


The proof of Theorem \ref{theo:Estimation} is given in Appendix \ref{sec:ProofTheoremKL}. Consider a target leakage $\textsf{E} =I(S;W_1,W_2,Y)$ and a pair $( \textsf{R},\QQ_{SXYV})$, and assume there exists a distribution $\QQ_{SW_1W_2XYV}\in \Q_{\sf{c}}$ with full support, satisfying \eqref{eq:theoremCEachie2} and \eqref{eq:theoremCEachie3}. By Theorem \ref{theo:LeakageCE}, for all $\varepsilon>0$ and \mrb{for} all $\delta>0$, there exists $\mlt{\bar{n}\in\N^{\star}}$ such that for all $n \geq \bar{n}$ there exists a code $c\in\mc{C}(n,\mc{M})$ with two auxiliary sequences $(W_1^n,W_2^n)$, such that
\begin{align}
&\bigg| \mc{L}_{\textsf{e}}(c)  -  I(S;W_1,W_2,Y) \; \bigg|  \leq\varepsilon
\quad\text{and}\quad\nonumber\\&
\mrb{ \prob\bigg(\Big|\Big|Q_{SW_1W_2Y}^n - \QQ_{SW_1W_2Y} \Big|\Big|_{1}> \delta\bigg) \leq \varepsilon.}\label{eq:LeakageProperty}
\end{align}
Hence, by Theorem \ref{theo:Estimation} we have
 \begin{align}
 \frac{1}{n} D\bigg(\PP_{S^n|Y^n} \bigg|\bigg| \prod_{i=1}^n \QQ_{S_i|Y_iW_{1,i}W_{2,i}} \bigg) 
 \leq& \varepsilon +  \alpha_1\delta   +  \alpha_2\varepsilon,\label{eq:EstimationEpsilon}
\end{align}
where $\epsilon$ and $\delta$ may go to zero when $n$ goes to infinity. The control of the leakage $ \mc{L}_{\textsf{e}}(c) $ and the joint typicality of the sequences $(S^n,W_1^n,W_2^n,Y^n)\in T_{\delta}(\QQ)$ implies that the decoder posterior belief  $\PP_{S^n|Y^n}$ \mrb{approaches} the single-letter distribution $\QQ_{S|YW_1W_2}$. Based on the triple of symbols $(Y,W_1,W_2)$, the decoder generates action $V$ using  the conditional  distribution $\QQ_{V|YW_1W_2}$ and infers the channel state $S$ according to the conditional distribution $\QQ_{S|YW_1W_2}$. \mrb{In that regard,} the random variables $(Y , W_1,W_2)$ capture the "\emph{core of the receiver's knowledge}," regarding other random variables $S$ and $V$. \mrb{The bound on the \ac{KL}-divergence in~\eqref{eq:LeakageKL0} relates to the notion of strategic distance \cite[Section 5.2]{GossnerVieille02}, later used in several articles on repeated game \cite{GossnerTomala06}, \cite{GossnerTomala07},  \cite{GossnerLarakiTomala09}, on Bayesian persuasion  \cite{LeTreustTomala19} and on strategic communication \cite{LeTreustTomala_IT19}.}

%

\subsection{Channel state estimation zero-sum game} \label{sec:ZeroSumGame}

We now introduce a channel state estimation zero-sum game, in which the encoder and decoder are \mrb{opponents} choosing \mrb{an} encoding and \mrb{a} decoding strategy, \mrb{respectively}. Although \mrb{the encoder and the decoder} cooperate in transmitting reliably at rate $\textsf{R}$, \mrb{the} encoder seeks to prevent the decoder \mrb{from returning} a good estimate $v\in\mc{V}$ of the channel state $s\in\mc{S}$ \mrb{by maximizing the expected long-run distortion, while the decoder attempts to minimize it.}

\begin{definition}\label{def:AchievabilityEstimation}
A target rate $\textsf{R}\geq 0$ and a target distortion  $\textsf{D}\geq 0$ are achievable if for all $\varepsilon>0$, there exists $\bar{n}\in \N^{\star}$ such that for all $n \geq \bar{n}$, there exists a code in $\mc{C}_{\sf{d}}(n,\mc{M})$ such that
\begin{align}
\frac{\log_2 |\mc{M}|}{n}  \geq& \textsf{R} - \varepsilon,\label{eq:DefAchievabilityEstimate1}\\
\prob\bigg( M \neq \hat{M} \bigg) \leq& \varepsilon,\label{eq:DefAchievabilityEstimate2}\\
\bigg|\min_{h_{V^n|Y^n}}  \frac{1}{n} \sum_{i=1}^n\E\Big[d(S_i,V_i)\Big] - \textsf{D}\bigg| \leq &\varepsilon.\label{eq:DefAchievabilityEstimate3}
 \end{align}
We denote by $\mc{A}_{\sf{g}}$ the set of achievable pairs $(\textsf{R},\textsf{D}) \in \mc{A}_{\sf{g}}$.
\end{definition}

\begin{theorem}[Zero-sum game]\label{theo:DecoderEstimation}
A pair of rate and distortion $(\textsf{R},\textsf{D}) \in \mc{A}_{\sf{g}}$ is achievable if and only if there exists an auxiliary random variable $W_1$ with distribution $ \QQ_{SW_1XY} \in \Q_{\sf{d}}$ that satisfies
\begin{align}
  \textsf{R} &\leq I(W_1;Y), \label{eq:DecoderEstimation1} \\
  \textsf{D} & = \min_{\PP_{V|W_1Y}} \E \Big[ d(S,V)\Big]  ,\label{eq:DecoderEstimation2} 
\end{align}
\mrb{where} the set $\Q_{\sf{d}}$ \mrb{is} defined in Theorem \ref{theo:LeakageCEd}.
\end{theorem}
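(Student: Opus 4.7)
My plan is to prove the two directions of the iff separately, relying on the empirical-coordination characterization of Theorem~\ref{theo:LeakageCEd} and the posterior-convergence estimate of Theorem~\ref{theo:Estimation}. The operational picture is the following: the encoder selects $\QQ_{SW_1XY}$, which by Theorem~\ref{theo:Estimation} pins down the decoder's asymptotic posterior, and the decoder's best response is the single-letter estimator $V=v^\star(W_1,Y)$ minimizing $\E_{\QQ}[d(S,v)|W_1,Y]$.

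For achievability, I fix $\QQ_{SW_1XY}\in\Q_{\sf{d}}$ with $\textsf{R}\leq I(W_1;Y)$ and set $\textsf{D}=\min_{\PP_{V|W_1Y}}\E[d(S,V)]$. Because $S\perp W_1$ in $\Q_{\sf{d}}$, one has $\textsf{R}+I(S;W_1,Y)\leq I(W_1;Y)+I(S;Y|W_1)=I(W_1,S;Y)$, so Theorem~\ref{theo:LeakageCEd} applies with target distribution $\QQ_{SXY}$ and minimum leakage $\textsf{E}=I(S;W_1,Y)$. This produces a code $c\in\mc{C}_{\sf{d}}(n,\mc{M})$ of rate at least $\textsf{R}-\varepsilon$, with sequences $(S^n,W_1^n,X^n,Y^n)$ jointly $\delta$-typical with high probability, leakage $\varepsilon$-close to $I(S;W_1,Y)$, and $W_1^n$ decodable from $Y^n$. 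Theorem~\ref{theo:Estimation} specialized to $W_2=\emptyset$ then yields $\tfrac{1}{n}D(\PP_{S^n|Y^n}\|\prod_i \QQ_{S_i|Y_iW_{1,i}})\to 0$, and Pinsker's inequality converts this into vanishing total variation. The decoder that recovers $W_1^n$ and plays $V_i=v^\star(W_{1,i},Y_i)$ with $v^\star(w,y)\in\argmin_v \sum_s d(s,v)\QQ(s|w,y)$ achieves average distortion within $\varepsilon$ of $\textsf{D}$; the factorized asymptotic posterior prevents any $h_{V^n|Y^n}$ from beating this value, so the minimum in~\eqref{eq:DefAchievabilityEstimate3} equals $\textsf{D}+o(1)$.

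For the converse, given codes achieving $(\textsf{R},\textsf{D})$, I set $W_{1,i}=(M,Y^{i-1})$: the independence $W_{1,i}\perp S_i$ follows from $M\perp S^n$ and the fact that $Y^{i-1}$ depends only on $(M,S^{i-1},\text{channel noise})$, while the causal encoder $X_i=f_i(M,S^i)$ is a stochastic function of $(S_i,W_{1,i})$ once one marginalizes over the posterior $\PP(S^{i-1}|W_{1,i})$. With a time-sharing variable $T$ uniform on $\{1,\ldots,n\}$ and $W_1=(W_{1,T},T)$, $S=S_T$, $Y=Y_T$, the joint $\QQ_{SW_1XY}$ lies in $\Q_{\sf{d}}$. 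Fano's inequality combined with $nR\leq I(M;Y^n)+n\varepsilon_n\leq \sum_i I(M,Y^{i-1};Y_i)+n\varepsilon_n\leq nI(W_1;Y)+n\varepsilon_n$ yields~\eqref{eq:DecoderEstimation1}. For~\eqref{eq:DecoderEstimation2}, since $M$ is decodable from $Y^n$ with vanishing error, $W_{1,i}$ is essentially a function of $Y^n$ and $(W_{1,i},Y_i)\subseteq Y^n$ asymptotically. Jensen's inequality thus gives $\min_h \tfrac{1}{n}\sum_i \E[d(S_i,V_i)] = \tfrac{1}{n}\sum_i \E[\min_v \E[d(S_i,v)|Y^n]] \leq \tfrac{1}{n}\sum_i \E[\min_v \E[d(S_i,v)|W_{1,i},Y_i]] = \min_{\PP_{V|W_1Y}} \E[d(S,V)]+o(1)$, and the matching reverse inequality follows by re-running the achievability direction on this identified $\QQ_{SW_1XY}$. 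A Fenchel--Eggleston--Carath\'eodory step then caps $|\mc{W}_1|\leq|\mc{S}\times\mc{Y}|+1$.

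The most delicate step is the converse on distortion: one must verify not merely that $\textsf{D}$ is compatible with the single-letter minimum but that some valid $W_1$ achieves this minimum exactly. The key mechanism is the \emph{core of the receiver's knowledge} provided by Theorem~\ref{theo:Estimation}: once $W_1^n$ is decodable from $Y^n$, the decoder's posterior on $S^n$ aligns with $\prod_i \QQ_{S_i|Y_iW_{1,i}}$, so the extra information in $Y^n$ beyond $(W_{1,i},Y_i)$ gives no additional leverage on $S_i$. This matching, together with the rate chain and the cardinality reduction, delivers the equality $\textsf{D}=\min_{\PP_{V|W_1Y}}\E[d(S,V)]$.
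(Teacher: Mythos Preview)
Your achievability sketch follows the paper's route (Theorem~\ref{theo:LeakageCEd} at leakage $\textsf{E}=I(S;W_1,Y)$, then Theorem~\ref{theo:Estimation} with $W_2=\emptyset$ to control the decoder's posterior) and is correct.

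The converse has a genuine gap in the distortion \emph{equality}. With $W_{1,i}=(M,Y^{i-1})$, the pair $(W_{1,i},Y_i)=(M,Y^i)$ is a strict coarsening of the decoder's full observation $Y^n$: the future $Y_{i+1}^n$ carries additional information about $S_i$ through the causal encoder, since $X_{j}=f_j(M,S^j)$ depends on $S_i$ for every $j>i$. Your Jensen step therefore gives only $\textsf{D}\le\min_{\PP_{V|W_1Y}}\E[d(S,V)]$, and this inequality can be strict. ``Re-running achievability on the identified $\QQ_{SW_1XY}$'' does not close it: achievability constructs a \emph{different} code whose min-distortion equals the single-letter value; it says nothing about the $\textsf{D}$ of the code you were handed. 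Your final paragraph conflates the two directions: the posterior factorization $\PP_{S^n|Y^n}\approx\prod_i\QQ_{S_i|Y_iW_{1,i}}$ furnished by Theorem~\ref{theo:Estimation} is a property of the achievability construction at minimal leakage, not of an arbitrary code in the converse.

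The paper obtains the equality directly by enlarging the auxiliary to $W_{1,i}=(M,Y^{i-1},Y_{i+1}^n)$, so that $(W_{1,i},Y_i)=(M,Y^n)$. Conditioned on correct decoding of $M$ this is exactly the decoder's information, and
\[
\min_{h_{V^n|Y^n}}\tfrac1n\sum_i\E[d(S_i,V_i)]
=\tfrac1n\sum_i\E\big[\min_v\E[d(S_i,v)\mid M,Y^n]\big]
=\tfrac1n\sum_i\E\big[\min_v\E[d(S_i,v)\mid W_{1,i},Y_i]\big]
\]
holds with equality; the same auxiliary still yields $\textsf{R}\le\tfrac1n\sum_iI(M,Y^{i-1},Y_{i+1}^n;Y_i)\le I(W_1;Y)$. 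Including $Y_{i+1}^n$ in the auxiliary is the missing idea.
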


The achievability proof of Theorem \ref{theo:DecoderEstimation} is \mrb{provided} in Appendix \ref{sec:AchievabilityTheoDecoderEstimation} and is a consequence of Theorem \ref{theo:LeakageCEd} and Theorem \ref{theo:Estimation}, and of \cite[Lemma A.8, Lemma A.21]{LeTreustTomala19}.
The converse proof of Theorem is \mrb{provided} in Appendix \ref{sec:ConverseTheoDecoderEstimation}. 

\begin{remark}[Maximin-minimax result]
The optimal distortion-rate function $\textsf{D}^{\star}( \textsf{R})$ reformulates as a maximin problem
\begin{align}
\textsf{D}^{\star}( \textsf{R}) =& \max_{\QQ_{W_1}, \QQ_{X|SW_1} \atop \textsf{R} \leq I(W_1;Y)} \min_{\PP_{V|W_1Y}} \E \Big[ d(S,V)\Big]\nonumber\\
=& \min_{\PP_{V|W_1Y}}  \max_{\QQ_{W_1}, \QQ_{X|SW_1} \atop \textsf{R} \leq I(W_1;Y)} \E \Big[ d(S,V)\Big].\label{eq:SolutionZeroSum2}
 \end{align}
The maximum and the minimum are taken over compact and convex sets and the distortion function is linear. Hence by Sion's Theorem \cite{sion1958general} the maximin is equal to the minimax and the value of this channel state estimation zero-sum game is $\textsf{D}^{\star}( \textsf{R})$.
 \end{remark}

\begin{remark}[One auxiliary random variable]
The formulation of Theorem \ref{theo:DecoderEstimation} is based on the set of distributions $\Q_{\sf{d}}$ with only one auxiliary random variable $W_1$, instead of the two random variables $(W_1,W_2)$ of the set $\Q_{\sf{c}}$. When the encoder tries to mask the channel state, it does not require the auxiliary random variable $W_2$ anymore, \mrb{since}
\begin{align}
\textsf{D}^{\circ} =& \max_{\QQ_{W_1}, \QQ_{X|SW_1}, \QQ_{W_2|SW_1} \atop \textsf{R} \leq I(W_1,W_2;Y) - I(W_2;S|W_1)} \min_{\PP_{V|W_1W_2Y}} \E \Big[ d(S,V)\Big]\\
\leq & \max_{\QQ_{W_1}, \QQ_{X|SW_1}, \QQ_{W_2|SW_1} \atop \textsf{R} \leq I(W_1,W_2;Y) - I(W_2;S|W_1)} \min_{\PP_{V|W_1Y}} \E \Big[ d(S,V)\Big]\label{eq:zerosumCirc1}\\
\leq & \max_{\QQ_{W_1}, \QQ_{X|SW_1} \atop \textsf{R} \leq I(W_1;Y) } \min_{\PP_{V|W_1Y}} \E \Big[ d(S,V)\Big] = \textsf{D}^{\star},\label{eq:zerosumCirc2}
 \end{align}
where \eqref{eq:zerosumCirc1} comes from taking the minimum over $\PP_{V|W_1Y}$ instead of $\PP_{V|W_1W_2Y}$;  \eqref{eq:zerosumCirc2} comes from the Markov chain $Y -\!\!\!\!\minuso\!\!\!\!- (S ,W_1) -\!\!\!\!\minuso\!\!\!\!-  W_2$ stated in \eqref{eq:distribution}, that ensures $ I(W_1,W_2;Y) - I(W_2;S|W_1)\leq I(W_1;Y)$. Hence, the information constraint $\textsf{R} \leq I(W_1,W_2;Y) - I(W_2;S|W_1)$ is more restrictive than $\textsf{R} \leq I(W_1;Y)$.
\end{remark}
 
\begin{remark}[Zero rate case]

\mrb{In the special case $\textsf{R}=0$, which corresponds to a channel estimation game without communication,} the encoding functions \mrb{reduce to} $f_{X_i|S^i}$ instead of $f_{X_i|S^iM}$. The channel state estimation zero-sum game \mrb{becomes the} maximin \mrb{problem}
\begin{align}
\max_{\{f_{X_i|S^i}\}_{i\in\{1,\ldots,n\}} }\min_{h_{V^n|Y^n}} \E\bigg[ \frac{1}{n} \sum_{i=1}^n d(S_i,V_i)\bigg], \label{eq:NstagesMaxMin}
 \end{align}
in which the encoder chooses $\{f_{X_i|S^i}\}_{i\in\{1,\ldots,n\}}$ and the decoder \mrb{chooses} $h_{V^n|Y^n}$. \mrb{Theorem~\ref{theo:DecoderEstimation} shows} that the single-letter solution is $\max_{\QQ_{W_1}, \QQ_{X|SW_1}} \min_{\PP_{V|W_1Y}} \E \Big[ d(S,V)\Big]$.

\mrb{If the objectives of \mlt{both} encoder and decoder were aligned, i.e., they would both try to minimize the long term average distortion
  \begin{align}
\min_{\{f_{X_i|S^i}\}_{i\in\{1,\ldots,n\}}, \atop h_{V^n|Y^n}}  \E\bigg[ \frac{1}{n} \sum_{i=1}^n d(S_i,V_i)\bigg], \label{eq:NstagesMinMitra}
  \end{align}
  the problem would become the causal channel state communication studied in}~\cite{ChoudhuriKimMitra13}.

\end{remark}

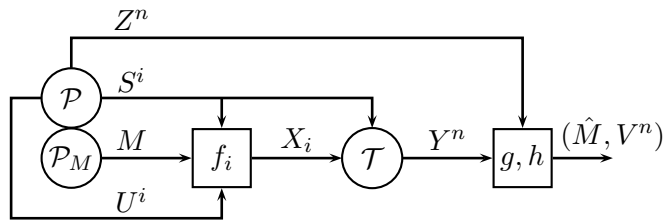
\begin{figure}[!ht]
\begin{center}
 \psset{xunit=0.8cm,yunit=0.8cm}
 \begin{pspicture}(-0.5,-0.5)(9,2.5)
 \pscircle(0,1.5){0.395}
 \pscircle(0,0.5){0.395}
 \psframe(2,0)(3,1)
 \pscircle(5,0.5){0.40}
 \psframe(7,0)(8,1)
 \psline[linewidth=1pt]{->}(0.5,0.5)(2,0.5)
 \psline[linewidth=1pt]{->}(3,0.5)(4.5,0.5)
 \psline[linewidth=1pt]{->}(5.5,0.5)(7,0.5)
 \psline[linewidth=1pt]{->}(8,0.5)(9,0.5)
 \psline[linewidth=1pt]{->}(0.5,1.5)(5,1.5)(5,1)
 \psline[linewidth=1pt]{->}(2.5,1.5)(2.5,1)
 \psline[linewidth=1pt]{->}(0,2)(0,2.5)(7.5,2.5)(7.5,1)
 \rput[u](1,2.8){$Z^{n}$}
 \psline[linewidth=1pt]{->}(-0.5,1.5)(-1,1.5)(-1,-0.5)(2.5,-0.5)(2.5,0)
 \rput[u](1,-0.2){$U^i$}
 \rput[u](1,0.8){$M$}
 \rput[u](1,1.8){$S^{i}$}
 \rput[u](3.75,0.8){$X_i $}
 \rput[u](6.25,0.8){$Y^n$}
 \rput[u](9,0.9){$(\hat{M}, V^n)$}
 \rput(0,0.5){$\PP_{M}$}
 \rput(0,1.5){$\PP$}
 \rput(2.5,0.5){$f_i$}
 \rput(5,0.5){$\mc{T}$}
 \rput(7.5,0.5){$g,h$}
 \end{pspicture}
\caption{\mrb{The causal encoding function is} $f_i: \mc{M} \times \mc{U}^i \times \mc{S}^{i} \rightarrow \mc{X}$, for all $i\in\{1,\ldots,n\}$ and the non-causal decoding functions \mrb{are} $g: \mc{Y}^n \times \mc{Z}^n \rightarrow \mc{M} $ and $h: \mc{Y}^n \times \mc{Z}^n \rightarrow \Delta(\mc{V}^n)$.}
\label{fig:CE2SI}
\end{center}
\end{figure}

\section{Extensions to more general scenarios}\label{sec:Extensions}

\subsection{Two-sided state information}\label{sec:CE2SI}

The case of two-sided state information is \mrb{illustrated in} Fig. \ref{fig:CE2SI}. \mrb{The channel state $\mrb{S^n}$, information source $\mrb{U^n}$ and \mrb{decoder} state information $\mrb{Z^n}$ are jointly distributed according to the i.i.d. distribution $\PP_{USZ} \in \Delta(\mc{U} \times\mc{S} \times \mc{Z}  )$.}

\begin{definition}\label{def:CodeLeakageCE2SI}
\mrb{A code with two-sided state information consists of stochastic functions $f_i : \mc{M} \times \mc{U}^i \times \mc{S}^{i}  \longrightarrow \Delta(\mc{X})$  $\forall i \in \{1, \ldots,n\}$, a deterministic decoding function $g : \mc{Y}^n  \times \mc{Z}^n \longrightarrow  \mc{M}$, and a stochastic receiver action function $h : \mc{Y}^n  \times \mc{Z}^n \longrightarrow \Delta(\mc{V}^n)$. The set of codes with causal encoding with length $n$ and message set $\mc{M}$ is denoted $\mc{C}_{\sf{s}}(n,\mc{M})$.} 
\end{definition}

A code $c\in\mc{C}_{\sf{s}}(n,\mc{M})$, the uniform distribution of the messages $\PP_{M}$, the source $\PP_{USZ}$ and the channel $\mc{T}_{Y|XS}$ induce a distribution on $(M,U^n,S^n,Z^n,X^n,Y^n,V^n,\hat{M})$ \mrb{given by}
\begin{align}
\PP_{M}  \prod_{i=1}^n\bigg[ \PP_{U_iS_iZ_i} & f_{X_i|U^iS^iM}  \mc{T}_{Y_i|X_iS_i}\bigg] \nonumber\\
&h_{V^n|Y^nZ^n}   \UN \big( \hat{M} = g(Y^n,Z^n)\big) .
\end{align}
\mrb{We denote by \mrb{$\mc{A}_{\sf{s}}$} the set of achievable triples $(\textsf{R},\textsf{E},\QQ_{USZXYV} )$, defined similarly as in Definition \ref{def:achievability}.}


%
\begin{theorem}[Two-sided state information]\label{theo:LeakageCE2SI} 
Consider a target distribution $\QQ_{USZXYV}$ that decomposes as $\QQ_{USZXYV} = \PP_{USZ}     \QQ_{X|US} \mc{T}_{Y|XS}      \QQ_{V|USZXY}$. \mrb{Then, $(\textsf{R},\textsf{E},\mrb{\QQ_{USZXYV}} )\in \mc{A}_{\sf{s}}$} if and only if there exist two auxiliary random variables $(W_1,W_2)$ with distribution $ \QQ_{USZW_1W_2XYV} \in \Q_{\sf{s}}$ \mrb{satisfying}
\begin{align}
I(U,S;W_1,W_2,Y,Z) \leq \textsf{E}&\leq H(U,S), \label{eq:theoremCE2SIachie2} \\
\textsf{R} + \textsf{E} &\leq  I(W_1,U,S; Y,Z),\label{eq:theoremCE2SIachie3}
\end{align}
where $\Q_{\sf{s}}$ is the set of distributions $\QQ_{USZW_1W_2XYV}$ that decompose as
\begin{align}
  &   \PP_{USZ}    \QQ_{W_1}    \QQ_{W_2|USW_1}
 \QQ_{X|USW_1}     \mc{T}_{Y|XS}    \QQ_{V|YZW_1W_2},
 \end{align}
and \mrb{such that}  $\max(|\mc{W}_1|,|\mc{W}_2|) \leq   d+1 $ with $d = |\mc{U}\times\mc{S}\times\mc{Z}\times  \mc{X}\times \mc{Y} \times \mc{V}  |$.
\end{theorem}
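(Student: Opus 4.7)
The plan is to adapt the proof of Theorem~\ref{theo:LeakageCE} to the two-sided state information setting by treating $(U,S)$ as a ``generalized state'' observed causally at the encoder and $(Y,Z)$ as the effective ``output'' available to the decoder. Every single-letter quantity in Theorem~\ref{theo:LeakageCE2SI} is obtained from the corresponding one in Theorem~\ref{theo:LeakageCE} via the substitutions $S\leftrightarrow(U,S)$ and $Y\leftrightarrow(Y,Z)$; the only structural novelty is that $\PP_{USZ}$ need not factorize, so $Z$ may already carry information about $(U,S)$ before any transmission has taken place, and this information must be accounted for in both the leakage term and the sum-rate term.

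For achievability, I would reuse the block-Markov scheme of Appendix~\ref{sec:AchievabilityLeakageCE}. In each block, the encoder draws a $W_1$-codeword indexed by the current message and a quantization bin index (used to fine-tune the leakage level), superimposes a $W_2$-codeword generated according to $\QQ_{W_2|USW_1}$ that is jointly typical with the realized $(U^n,S^n,W_1^n)$, and produces the channel input causally via $\QQ_{X|USW_1}$ applied symbol-wise to $(U_i,S_i,W_{1,i})$. Having access to both $Y^n$ and $Z^n$, the decoder recovers the message and the coordination codeword $W_2^n$ by joint typicality with $(Y^n,Z^n)$, and then generates $V^n$ symbol-wise through $\QQ_{V|YZW_1W_2}$. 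Reliable decoding at sum-rate $R+E\leq I(W_1,U,S;Y,Z)$ and reliable recovery of $W_2$ at an encoder rate $I(W_2;U,S\mid W_1)$ together with covering arguments for the bin index yield the constraints \eqref{eq:theoremCE2SIachie2}–\eqref{eq:theoremCE2SIachie3}. Empirical coordination with $\QQ_{USZXYV}$ then follows from the Markov lemma, using that $Z^n$ is i.i.d.\ jointly with $(U^n,S^n)$ and independent of the codebook given $(U^n,S^n)$.

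For the converse, I would follow the strategy of Appendix~\ref{sec:ConverseLeakageCE}, defining at each time $i$ the auxiliaries $W_{1,i}=(M,Y^{i-1},Z^{i-1},Z_{i+1}^{n})$ and $W_{2,i}=(U^{i-1},S^{i-1})$, and a uniform time-sharing variable $T\in\{1,\ldots,n\}$ so that $(W_1,W_2,U,S,Z,X,Y,V)=(W_{1,T},W_{2,T},U_T,S_T,Z_T,X_T,Y_T,V_T)$. Fano's inequality combined with the empirical coordination condition gives $R+E\leq I(W_1,U,S;Y,Z)$, while the state-leakage hypothesis $\mc{L}_\mathsf{e}(c)\approx E$ forces $E\geq I(U,S;W_1,W_2,Y,Z)$, after a Csisz\'ar sum-identity manipulation that transfers the $Z^{i-1}$ and $Z_{i+1}^{n}$ terms into $W_{1,i}$. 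The cardinality bounds are obtained by the usual Fenchel--Eggleston--Carath\'eodory argument adapted to the expanded alphabet.

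The main obstacle will be verifying that the identified auxiliaries actually satisfy the factorization required by $\Q_{\sf{s}}$, in particular the independence $\QQ_{W_1}\perp\PP_{USZ}$ and the Markov chains $W_2-\!\!\!\!\minuso\!\!\!\!-(U,S,W_1)-\!\!\!\!\minuso\!\!\!\!- X$ and $(Y,Z)-\!\!\!\!\minuso\!\!\!\!-(X,S)-\!\!\!\!\minuso\!\!\!\!-(W_1,W_2,U,Z)$. The delicate point is isolating the genuine channel noise from the source-side correlation between $Z$ and $(U,S)$: because $Z$ is generated independently of the codebook, it must be grouped with the output in the leakage and sum-rate expressions, yet it cannot be allowed to leak through $W_1$ in a way that breaks $W_1\perp(U,S,Z)$. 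Once this bookkeeping is settled, the remaining steps are direct adaptations of those in Appendices~\ref{sec:AchievabilityLeakageCE} and~\ref{sec:ConverseLeakageCE}.
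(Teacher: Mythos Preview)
Your achievability sketch is exactly the paper's argument: replace $S$ by $(U,S)$ and $Y$ by $(Y,Z)$ in the block-Markov scheme of Appendix~\ref{sec:AchievabilityLeakageCE}, and everything goes through unchanged.

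The converse, however, has a genuine gap in the choice of auxiliaries. With $W_{1,i}=(M,Y^{i-1},Z^{i-1},Z_{i+1}^{n})$ and $W_{2,i}=(U^{i-1},S^{i-1})$, the Markov chain $X_i -\!\!\!\!\minuso\!\!\!\!- (U_i,S_i,W_{1,i}) -\!\!\!\!\minuso\!\!\!\!- W_{2,i}$ required by $\Q_{\sf{s}}$ fails. The causal encoder produces $X_i$ as a (possibly randomized) function of $(M,U^i,S^i)$, so $X_i$ depends directly on $(U^{i-1},S^{i-1})=W_{2,i}$; conditioning on $(U_i,S_i,M,Y^{i-1},Z^{i-1},Z_{i+1}^n)$ does not sever this dependence, since $(Y^{i-1},Z^{i-1})$ is in general a noisy observation of $(U^{i-1},S^{i-1})$ rather than a sufficient statistic for it. You correctly flagged this verification as ``the main obstacle,'' but it is not a bookkeeping issue --- it is a structural failure of this identification.

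The fix is to carry the substitution $S\to(U,S)$, $Y\to(Y,Z)$ into the converse auxiliaries of Appendix~\ref{sec:ConverseLeakageCE} as well: set $W_{1,i}=(M,U^{i-1},S^{i-1})$ and $W_{2,i}=(Y_{i+1}^n,Z_{i+1}^n)$. Then $(U_i,S_i,Z_i)$ is independent of $W_{1,i}$ by the i.i.d.\ source assumption; $X_i -\!\!\!\!\minuso\!\!\!\!- (U_i,S_i,W_{1,i}) -\!\!\!\!\minuso\!\!\!\!- W_{2,i}$ holds because $X_i$ is determined by $(M,U^i,S^i)$ and the future $(Y_{i+1}^n,Z_{i+1}^n)$ is conditionally independent of $X_i$ given $(M,U^i,S^i)$ (same argument as Lemma~\ref{lemma1:ConverseCEMarkov1}); and $V_i -\!\!\!\!\minuso\!\!\!\!- (Y_i,Z_i,W_{1,i},W_{2,i}) -\!\!\!\!\minuso\!\!\!\!- (U_i,S_i,X_i)$ follows as in Lemma~\ref{lemma2:ConverseCEMarkov1}. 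With this choice, no Csisz\'ar sum identity is needed for the leakage lower bound: the chain $I(U^n,S^n;Y^n,Z^n)\ge \sum_i I(U_i,S_i;Y_{i+1}^n,Z_{i+1}^n,M,U^{i-1},S^{i-1},Y_i,Z_i)$ already produces $I(U,S;W_1,W_2,Y,Z)$ directly, and the sum-rate bound uses the Markov chain $Y_i -\!\!\!\!\minuso\!\!\!\!- (U_i,S_i,M,U^{i-1},S^{i-1}) -\!\!\!\!\minuso\!\!\!\!- (U_{i+1}^n,S_{i+1}^n,Y_{i+1}^n,Z_{i+1}^n)$ together with $Z_i -\!\!\!\!\minuso\!\!\!\!- (U_i,S_i) -\!\!\!\!\minuso\!\!\!\!- (\text{everything else})$.
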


The achievability proof follows directly from the proof of Theorem   \ref{theo:LeakageCE}, by replacing the random variable of the channel state $S$ by the pair $(U,S)$ and the random variable of the channel output $Y$ by the pair $(Y,Z)$. The converse proof is \mrb{provided in the Supplementary Materials.}



\begin{remark}\label{remark:MarkovCausalSI}
The Markov chains $X -\!\!\!\!\minuso\!\!\!\!- (U,S ,W_1) -\!\!\!\!\minuso\!\!\!\!-  W_2$, $Y -\!\!\!\!\minuso\!\!\!\!- (X , S ) -\!\!\!\!\minuso\!\!\!\!-  ( U,Z,W_1,W_2 )$ and $Z -\!\!\!\!\minuso\!\!\!\!- (U , S ) -\!\!\!\!\minuso\!\!\!\!-  ( X, Y,W_1,W_2 )$ imply another Markov chain property $(Y,Z) -\!\!\!\!\minuso\!\!\!\!- (W_1 , U,S ) -\!\!\!\!\minuso\!\!\!\!-  W_2$. Indeed, for all $(u,s,z, w_1,w_2, x,y)$ we have
\begin{align*}
&\PP(y,z | w_1,w_2, u,s ) \nonumber\\
=& \sum_{x \in \mc{X} } \QQ(x | u,s,w_1)  \mc{T}(y|x,s)   \PP(z|u,s) = \PP(y,z | w_1,u,s ).
\end{align*}
By combining~\eqref{eq:theoremCE2SIachie2} and~\eqref{eq:theoremCE2SIachie3} with the Markov chain $(Y,Z) -\!\!\!\!\minuso\!\!\!\!- (W_1 , U,S ) -\!\!\!\!\minuso\!\!\!\!-  W_2$, we recover the information constraint of \cite[Theorem V.1]{LeTreust(ISIT-TwoSided)15}:
\begin{align}
\textsf{R} &\leq I(W_1,W_2;Y,Z) - I(W_2;U,S|W_1). \label{eq:theoremCE2SIachie1}
\end{align}
\end{remark}


%


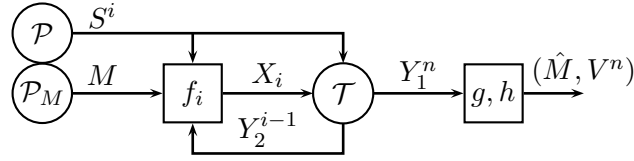
\begin{figure}
\begin{center}
 \psset{xunit=0.8cm,yunit=0.8cm}
 \begin{pspicture}(-0.5,-0.5)(9,1.7)
 \pscircle(0,1.5){0.395}
 \pscircle(0,0.5){0.395}
 \psframe(2,0)(3,1)
 \pscircle(5,0.5){0.40}
 \psframe(7,0)(8,1)
 \psline[linewidth=1pt]{->}(0.5,0.5)(2,0.5)
 \psline[linewidth=1pt]{->}(3,0.5)(4.5,0.5)
 \psline[linewidth=1pt]{->}(5.5,0.5)(7,0.5)
 \psline[linewidth=1pt]{->}(8,0.5)(9,0.5)
 \psline[linewidth=1pt]{->}(0.5,1.5)(5,1.5)(5,1)
 \psline[linewidth=1pt]{->}(2.5,1.5)(2.5,1)
 \psline[linewidth=1pt]{->}(5,0)(5,-0.5)(2.5,-0.5)(2.5,0)
 \rput[u](1,0.8){$M$}
 \rput[u](1,1.8){$S^{i}$}
 \rput[u](3.75,0.8){$X_i $}
 \rput[u](6.25,0.8){$Y_1^n$}
 \rput[u](3.75,-0.1){$Y_2^{i-1}$}
 \rput[u](9,0.9){$(\hat{M}, V^n)$}
 \rput(0,0.5){$\PP_{M}$}
 \rput(0,1.5){$\PP$}
 \rput(2.5,0.5){$f_i$}
 \rput(5,0.5){$\mc{T}$}
 \rput(7.5,0.5){$g,h$}
 \end{pspicture}
\caption{The noisy feedback \mrb{sequence $Y_2^{i-1}$ is drawn i.i.d. according to} $\mc{T}_{Y_1Y_2|XS} $. The encoding \mrb{is} $f_i: \mc{M} \times \mc{S}^{i}  \times \mc{Y}_2^{i-1}\rightarrow \mc{X}$, $\forall i\in\{1,\ldots,n\}$.}
\label{fig:CENoisyFeedback}
\end{center}
\end{figure}

\subsection{Noisy channel feedback observed by the encoder}\label{sec:NoisyFeedback}

\mrb{In this section, we consider that} the encoder has noisy feedback $Y_2$ from the state-dependent  channel  $\mc{T}_{Y_1Y_2|XS} $, as depicted in Fig. \ref{fig:CENoisyFeedback}. The encoding function \mrb{becomes} $f_i: \mc{M} \times \mc{S}^i \times \mc{Y}_2^{i-1} \rightarrow \mc{X}$, $\forall i\in\{1,\ldots,n\}$ \mrb{while} the  decoding functions and the \mrb{definition of the} state leakage remain unchanged. \mrb{The corresponding set of achievable triples $(\textsf{R},\textsf{E},\mrb{\QQ_{SXY_1Y_2V}} )$ is denoted $\mc{A}_{\sf{f}}$.}

\begin{theorem}[Noisy channel feedback]\label{theo:LeakageCENoisyFeedback}
We consider a target distribution $\QQ_{SXY_1Y_2V}$ that decomposes as $\QQ_{SXY_1Y_2V} =\PP_S     \QQ_{X|S} \mc{T}_{Y_1Y_2|XS}     \QQ_{V|SXY_1Y_2}$. \mrb{Then, $(\textsf{R},\textsf{E},\mrb{\QQ_{SXY_1Y_2V}} )\in\mc{A}_{\sf{f}}$} if and only if there exist two auxiliary random variables $(W_1,W_2)$ with distribution $ \QQ_{SW_1W_2XY_1Y_2V} \in \Q_{\sf{f}}$ that satisfy
\begin{align}
\textsf{R} \leq I(W_1&,W_2;Y_1) - I(W_2;S, Y_2 |W_1), \label{eq:theoremCENoisyFeedback1} \\
I(S;W_1,W_2,Y_1) \leq \textsf{E}&\leq H(S), \label{eq:theoremCENoisyFeedback2} \\
\textsf{R} + \textsf{E} &\leq  I(W_1,S; Y_1),\label{eq:theoremCENoisyFeedback3}
\end{align}
where $\Q_{\sf{f}}$ is the set of distributions with marginal $\QQ_{SW_1W_2XY_1Y_2V}$ that decompose as 
\begin{align*}
\PP_S      \QQ_{W_1}  \QQ_{X|SW_1}     \mc{T}_{Y_1Y_2|XS}
    \QQ_{W_2|SW_1Y_2}     \QQ_{V|Y_1W_1W_2},
 \end{align*}
and \mrb{such that} $\max(|\mc{W}_1|,|\mc{W}_2|) \leq   d+1 $ with $d = |  \mc{S}\times  \mc{X}\times \mc{Y}_1 \times \mc{Y}_2 \times \mc{V} | $.
\end{theorem}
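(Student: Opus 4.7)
The proof will follow the same two-step approach as Theorem~\ref{theo:LeakageCE}, adapted to accommodate the noisy feedback $Y_2^{i-1}$ at the encoder. For achievability, I would use a Block-Markov scheme over $B$ blocks. The new ingredient relative to the no-feedback case is that the coordination auxiliary $W_2$ can be generated \emph{after} the encoder observes $Y_2^n$, which is precisely what the factorization $\QQ_{W_2|SW_1Y_2}$ in $\Q_{\sf{f}}$ reflects. In each block $b$: (i) the encoder draws a $W_1^n$ codeword indexed by the current message $m_b$ together with a state-quantization bin index and a $W_2$ bin index computed from block $b-1$, and then generates $X^n$ symbol-wise via $\QQ_{X|SW_1}$; (ii) at the end of block $b$, having observed $Y_2^n$, the encoder selects from a Wyner--Ziv-style bin a codeword $W_2^n$ jointly typical with $(S^n,W_1^n,Y_2^n)$, which, by the covering lemma, requires bin size at least $I(W_2;S,Y_2|W_1)$; (iii) in block $b{+}1$ the resulting $W_2$ bin index together with a state-quantization index of rate $H(S)-\textsf{E}$ is transmitted inside the next $W_1$ codeword. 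The decoder first decodes $W_1^n$ from $Y_1^n$, then recovers $W_2^n$ from $Y_1^n$ using the bin index and $W_1^n$, and finally produces $V^n$ symbol-wise via $\QQ_{V|Y_1W_1W_2}$. Packing and covering arguments, together with the Markov chain $Y_1 -\!\!\!\!\minuso\!\!\!\!- (X,S) -\!\!\!\!\minuso\!\!\!\!- (W_1,W_2,Y_2)$, yield \eqref{eq:theoremCENoisyFeedback1} and \eqref{eq:theoremCENoisyFeedback3}; the leakage analysis mirrors that of Theorem~\ref{theo:LeakageCE} and yields \eqref{eq:theoremCENoisyFeedback2} provided the state-quantization rate is chosen as $H(S)-I(S;W_1,W_2,Y_1)$.

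For the converse, I would start from the induced joint distribution, apply Fano's inequality on $\hat{M}$, and then single-letterize in the spirit of the converse of Theorem~\ref{theo:LeakageCE}. With a uniform time-sharing variable $T$, my first attempt at the identification would be $W_{1,T}=(M,Y_1^{T-1})$ and $W_{2,T}=(S^{T-1},Y_2^{T-1})$, which naturally reflects what information is already at the decoder via past channel outputs versus what was available at the encoder via past feedback and past states. I would then verify the factorization of $\Q_{\sf{f}}$ one piece at a time: independence of $W_{1,T}$ from $S_T$ (using that $S^T$ is i.i.d. and that $M$ is independent of the state), the Markov chain $Y_2 -\!\!\!\!\minuso\!\!\!\!- (X,S) -\!\!\!\!\minuso\!\!\!\!- (W_1,W_2)$ implied by the memoryless channel, and finally the two chain-rule identities that give the Wyner--Ziv term $-I(W_2;S,Y_2|W_1)$ on the right-hand side of \eqref{eq:theoremCENoisyFeedback1} together with $I(W_1,S;Y_1)$ for \eqref{eq:theoremCENoisyFeedback3}. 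Inequality \eqref{eq:theoremCENoisyFeedback2} comes from bounding $\tfrac{1}{n}I(S^n;Y_1^n)$ below by the single-letter $I(S;W_1,W_2,Y_1)$ via standard telescoping, and above by $H(S)$.

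The main obstacle I expect is in the converse, specifically verifying that the factorization $\QQ_{W_2|SW_1Y_2}$ holds without extra dependence on the forward output $Y_1$. Because past $Y_1^{i-1}$ symbols are already absorbed into $W_{1,T}$ while past $Y_2^{i-1}$ symbols live in $W_{2,T}$, and because the joint $(Y_1^{i-1},Y_2^{i-1})$ is coupled through $(X^{i-1},S^{i-1})$, one must check carefully that conditioning on $W_{1,T}$ suffices to block any extraneous $Y_1^{i-1}$-dependence in $W_{2,T}$ beyond what $(S_T,W_{1,T},Y_{2,T})$ already captures. If this fails as stated, the standard fix is to augment $W_{1,T}$ with an additional past observable (time-sharing or deterministic function of already-included variables) so that the required Markov chain is restored without changing the single-letter rate expressions; by the cardinality-bounding arguments invoked for Theorem~\ref{theo:LeakageCE}, this does not affect the final region.
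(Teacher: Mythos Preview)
Your achievability sketch is essentially the paper's: the only change to the Block--Markov scheme of Theorem~\ref{theo:LeakageCE} is that the encoder, now having $Y_2^n$ at the end of each block, selects $W_2^n$ jointly typical with $(S^n,W_1^n,Y_2^n)$ rather than $(S^n,W_1^n)$, so the covering cost becomes $I(W_2;S,Y_2|W_1)$; decoding and the leakage analysis are unchanged.

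Your converse identification, however, is inverted, and the gap is not repairable by the ``augmentation'' you propose. The factorization in $\Q_{\sf f}$ requires $X$ to be generated from $(S,W_1)$ alone, but with your choice $W_{1,i}=(M,Y_1^{i-1})$ the encoder input $X_i=f_i(M,S^i,Y_2^{i-1})$ depends on $(S^{i-1},Y_2^{i-1})$, which you have placed in $W_{2,i}$ rather than $W_{1,i}$; the chain $W_2-\!\!\!\!\minuso\!\!\!\!-(S,W_1,Y_2)-\!\!\!\!\minuso\!\!\!\!-(X,Y_1)$ fails for the same reason. The paper takes $W_{1,i}=(M,S^{i-1},Y_2^{i-1})$, i.e.\ exactly the encoder's causal information so that $X_i$ is a function of $(S_i,W_{1,i})$, and $W_{2,i}=Y_{1,i+1}^n$, the decoder's future observations. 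With this pairing all the Markov conditions of $\Q_{\sf f}$ hold directly, and the bound~\eqref{eq:theoremCENoisyFeedback1} falls out of the Csisz\'ar sum identity applied to $\sum_i I(M;Y_{1,i}\mid Y_{1,i+1}^n)$ after inserting and removing $(S^{i-1},Y_2^{i-1})$. Note also that, unlike in Theorem~\ref{theo:LeakageCE}, constraint~\eqref{eq:theoremCENoisyFeedback1} is a genuine extra condition here and does not follow from \eqref{eq:theoremCENoisyFeedback2}--\eqref{eq:theoremCENoisyFeedback3}, because once $W_2$ depends on $Y_2$ the chain $Y_1-\!\!\!\!\minuso\!\!\!\!-(W_1,S)-\!\!\!\!\minuso\!\!\!\!- W_2$ need not hold. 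If you pad $W_{1,T}$ until the required chains are restored you are forced to include $(S^{i-1},Y_2^{i-1})$ in $W_1$, at which point your $W_{2,T}$ becomes a function of $W_{1,T}$ and carries no information; you then need a new $W_2$, and the natural choice is precisely the paper's $Y_{1,i+1}^n$.
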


The achievability proof of Theorem  \ref{theo:LeakageCENoisyFeedback} follows directly from the proof of Theorem \ref{theo:LeakageCE}, by replacing the pair $(S^n,W_1^n)$ by the triple $(S^n,W_1^n,Y_2^n)$  \mrb{in order to select} $W_2^n$. The decoding functions and the leakage analysis remain unchanged. The converse proof is stated in \mrb{the Supplementary Materials.}


\begin{remark}[Noisy feedback improves coordination]\label{remark:FeedbackIncreases}
The channel feedback increases the set of achievable triples, \mrb{i.e. $\mc{A}_{\sf{c}}\subset \mc{A}_{\sf{f}}$}, since the conditional distribution $ \QQ_{W_2|SW_1Y_2} $ depends on channel outputs $Y_2$. The information constraints of Theorem \ref{theo:LeakageCENoisyFeedback} are reduced to that of Theorem \ref{theo:LeakageCE} since $ \QQ_{W_2|SW_1Y_2} =\QQ_{W_2|SW_1} \Longleftrightarrow W_2  -\!\!\!\!\minuso\!\!\!\!-  ( S,W_1 ) -\!\!\!\!\minuso\!\!\!\!- Y_2 \Longleftrightarrow I(W_2; Y_2 |S, W_1)=0$. \mrb{This was already pointed out for the coordination problem} in  \cite{LeTreust(ISITfeedbacks)15}, \mrb{and for the rate-and-state capacity problem in~\cite{BrossLapidoth18}.}
\end{remark}

%
%

\begin{figure}[!ht]
\begin{center}
 \psset{xunit=0.9cm,yunit=0.9cm}
 \begin{pspicture}(0,0)(9,1.7)
 \pscircle(0,1.5){0.445}
 \pscircle(0,0.5){0.445}
 \psframe(2,0)(3,1)
 \pscircle(5,0.5){0.45}
 \psframe(7,0)(8,1)
 \psline[linewidth=1pt]{->}(0.5,0.5)(2,0.5)
 \psline[linewidth=1pt]{->}(3,0.5)(4.5,0.5)
 \psline[linewidth=1pt]{->}(5.5,0.5)(7,0.5)
 \psline[linewidth=1pt]{->}(8,0.5)(9,0.5)
 \psline[linewidth=1pt]{->}(0.5,1.5)(5,1.5)(5,1)
 \psline[linewidth=1pt]{->}(2.5,1.5)(2.5,1)
 \rput[u](1,0.8){$M$}
 \rput[u](1,1.8){$S^{i-1}$}
 \rput[u](3.75,0.8){$X_i $}
 \rput[u](6.25,0.8){$Y^n$}
 \rput[u](8.8,0.8){$(\hat{M}, V^n)$}
 \rput(0,0.5){$\PP_{M}$}
 \rput(0,1.5){$\PP$}
 \rput(2.5,0.5){$f_i$}
 \rput(5,0.5){$\mc{T}$}
 \rput(7.5,0.5){$g,h$}
 \end{pspicture}
\caption{The strictly causal encoding function \mrb{is} $f_i: \mc{M} \times \mc{S}^{i-1} \rightarrow \Delta(\mc{X})$, for all $i\in\{1,\ldots,n\}$ and the non-causal decoding functions \mrb{are} $g: \mc{Y}^n  \rightarrow \mc{M}$ and  $h: \mc{Y}^n  \rightarrow \Delta(\mc{V}^n)$.}
\label{fig:SCE}
\end{center}
\end{figure}
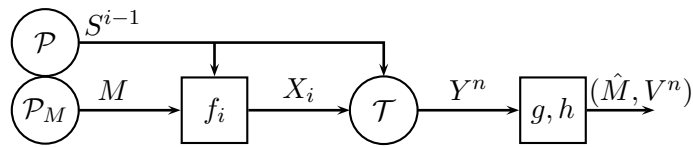

\section{Strictly causal encoding}\label{sec:StrictlyCausal}

\begin{definition}\label{def:CodeLeakageSCE}
\mrb{A code with strictly causal encoding consists of stochastic encoding functions $f_i : \mc{M} \times \mc{S}^{i-1}  \longrightarrow \Delta(\mc{X})$ $\forall i \in \{1, \ldots,n\}$, a deterministic decoding function $g : \mc{Y}^n  \longrightarrow  \mc{M} $, and a stochastic receiver action function $h : \mc{Y}^n  \longrightarrow   \Delta(\mc{V}^n)$.  The set of codes with strictly causal encoding with length $n$ and message set $\mc{M}$ is denoted $\mc{C}_{\sf{sc}}(n,\mc{M})$. The corresponding set of achievable triples $(\textsf{R},\textsf{E},\QQ_{SXYV} )$ is defined similarly as in Definition~\ref{def:achievability} and is denoted $\mc{A}_{\sf{sc}}$.}
\end{definition}

\begin{theorem}[Strictly causal encoding]\label{theo:LeakageSCE}
Consider a target distribution $\QQ_{SXYV}$ that decomposes as $\QQ_{SXYV} = \PP_S      \QQ_X \mc{T}_{Y|XS}      \QQ_{V|SXY}$. \mrb{Then, $(\textsf{R},\textsf{E},\mrb{\QQ_{SXYV}} )\in \mc{A}_{\sf{sc}}$} if and only if there exists an auxiliary random variable $W_2$ with distribution $\QQ_{SW_2XYV}  \in \Q_{\sf{sc}}$ that satisfies
\begin{align}
I(S;X,W_2,Y) \leq \textsf{E}&\leq H(S), \label{eq:theoremSCEachie2} \displaybreak[0]\\
\textsf{R} + \textsf{E} &\leq  I(X,S; Y),\label{eq:theoremSCEachie3}
\end{align}
where $\Q_{\sf{sc}}$ is the set of distributions $\QQ_{SW_2XYV}$ with marginal $\QQ_{SW_2XYV}$ that decompose as
\begin{align}
 \QQ_{SW_2XYV} = \PP_S    \QQ_X    \QQ_{W_2|SX}
    \mc{T}_{Y|XS}    \QQ_{V|XYW_2}
     \end{align}
and \mrb{such that} $|\mc{W}_2| \leq   | \mc{S}\times  \mc{X}\times \mc{Y}| + 1  $.
\end{theorem}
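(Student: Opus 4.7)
My plan is to obtain Theorem \ref{theo:LeakageSCE} as a direct specialization of Theorem \ref{theo:LeakageCE} in which the channel input $X$ plays the role of the auxiliary $W_1$. Indeed, a strictly causal encoder enforces $X_i = f_i(M, S^{i-1})$, which forces $X_i$ to be independent of $S_i$; this is precisely the structural property of $W_1$ in $\Q_{\sf{c}}$ (where $W_1$ is independent of $S$). Under this identification, $\QQ_{X|SW_1}$ collapses to the identity $\indic(X = W_1)$ and the decomposition \eqref{eq:distribution} reduces to that of $\Q_{\sf{sc}}$.

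For the achievability, I would re-run the block-Markov scheme underlying Theorem \ref{theo:LeakageCE} with $W_1 := X$. In each block, the encoder selects an $X^n$ codeword based on the current message and on the bin index of the $W_2^n$ sequence jointly typical with $(S^n, X^n)$ from the previous block; because this selection uses only past state symbols, it respects strict causality. The constraints \eqref{eq:theoremCEachie2}--\eqref{eq:theoremCEachie3} evaluated at $W_1 = X$ collapse exactly to \eqref{eq:theoremSCEachie2}--\eqref{eq:theoremSCEachie3}.

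For the converse, I would follow the template of the converse of Theorem \ref{theo:LeakageCE} (Appendix \ref{sec:ConverseLeakageCE}), substituting $W_{1,i} = X_i$ throughout. Fano's inequality and the leakage identity yield $n(\textsf{R}+\textsf{E}) \leq I(M;Y^n) + I(S^n;Y^n) + n\varepsilon_n$; since $M \perp S^n$ and $X^n$ is a deterministic function of $(M,S^n)$ under strict causality,
\begin{align*}
I(M;Y^n) + I(S^n;Y^n) \leq I(M;Y^n \mid S^n) + I(S^n;Y^n) \leq I(X^n;Y^n \mid S^n) + I(S^n;Y^n) = I(X^n,S^n;Y^n),
\end{align*}
and the memoryless channel single-letterizes the right-hand side into $\sum_i I(X_i, S_i; Y_i)$, which, after introducing a uniform time-sharing $T$ and invoking empirical coordination to identify $(S,X,Y)$ with $(S_T, X_T, Y_T)$, yields \eqref{eq:theoremSCEachie3}. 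For \eqref{eq:theoremSCEachie2}, I would reuse the identification of $W_{2,i}$ from the converse of Theorem \ref{theo:LeakageCE} and set $W_2 := (W_{2,T}, T)$. The structural constraints of $\Q_{\sf{sc}}$ are then automatic: (i) $S \perp X$ follows from strict causality and the i.i.d. property of the state; (ii) $Y_i - (X_i, S_i) - W_{2,i}$ is a consequence of the memoryless channel and noise independence; and (iii) since $V_i$ is a function of $(Y_i, W_{2,i})$, one has $V - (X, Y, W_2) - S$. The cardinality bound $|\mc{W}_2| \leq |\mc{S}\times\mc{X}\times\mc{Y}| + 1$ then follows from a standard Fenchel--Eggleston--Carath\'eodory support lemma applied to the simplified constraint structure.

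The main obstacle I anticipate lies in the converse leakage analysis: the achievability definition requires $|\mc{L}_{\textsf{e}}(c) - \textsf{E}| \leq \varepsilon$, so the extracted $W_2$ must yield $I(S;X,W_2,Y) \approx \textsf{E}$, a two-sided matching rather than the one-sided bound found in~\cite{MerhavShamai(StateMasking)07,KimSutivongCover(StateAmplification)08}. Calibrating $W_{2,i}$ so that it captures exactly the information the decoder can infer about the state from $Y^n$, in the spirit of the ``core of the receiver's knowledge'' developed around Theorem \ref{theo:Estimation}, is what makes the identification match the empirical leakage up to $o(1)$ and is the technical heart of the argument.
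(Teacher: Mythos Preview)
Your achievability plan and your converse bound on $\textsf{R}+\textsf{E}$ are correct and coincide with the paper's argument: setting $W_1=X$ in the block-Markov scheme of Theorem~\ref{theo:LeakageCE} is exactly how the paper obtains achievability for the strictly causal case, and your chain $I(M;Y^n)+I(S^n;Y^n)\leq I(X^n,S^n;Y^n)\leq \sum_i I(X_i,S_i;Y_i)$ is the same as the paper's.

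There is, however, a genuine gap in your converse identification of $W_2$. You propose to reuse $W_{2,i}=Y_{i+1}^n$ from Appendix~\ref{sec:ConverseLeakageCE} and then claim ``since $V_i$ is a function of $(Y_i,W_{2,i})$, one has $V-(X,Y,W_2)-S$.'' That premise is false: $V_i$ is a (stochastic) function of the \emph{full} $Y^n$, hence of $Y^{i-1}$ as well, and $Y^{i-1}$ cannot be reconstructed from $(X_i,Y_i,Y_{i+1}^n)$. In fact the Markov chain $V_i-\!\!\!\!\minuso\!\!\!\!-(X_i,Y_i,Y_{i+1}^n)-\!\!\!\!\minuso\!\!\!\!- S_i$ can fail: a future channel output $Y_{i+1}$ may depend on $S_i$ through $X_{i+1}=f_{i+1}(M,S^i)$, and conditioning on such a $Y_{i+1}$ couples $S_i$ with $M$, which in turn is correlated with $Y^{i-1}$ through $X^{i-1}$. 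So the distribution you extract need not lie in $\Q_{\sf{sc}}$.

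The paper resolves this by absorbing what used to be $W_{1,i}$ into the auxiliary: it sets $W_{2,i}=(M,S^{i-1},Y_{i+1}^n)$. With this choice, $(M,S^{i-1})$ is part of the conditioning, so $X^{i-1}=f^{i-1}(M,S^{i-2})$ and the conditional law of $Y^{i-1}$ given $(X^{i-1},S^{i-1})$ are determined without reference to $S_i$; the Markov chain $V_i-\!\!\!\!\minuso\!\!\!\!-(X_i,Y_i,W_{2,i})-\!\!\!\!\minuso\!\!\!\!- S_i$ then follows exactly as in Lemma~\ref{lemma2:ConverseCEMarkov1} (removing only $S_i$). The leakage lower bound \eqref{eq:theoremSCEachie2} also goes through with this larger $W_{2,i}$, using that $X_i$ is a function of $(M,S^{i-1})$ to insert $X_i$ into the conditioning for free.

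Finally, the ``main obstacle'' you anticipate---two-sided matching of $\textsf{E}$ in the converse---is a non-issue here: the converse only needs to exhibit \emph{some} $W_2\in\Q_{\sf{sc}}$ with $I(S;X,W_2,Y)\leq\textsf{E}$; the full interval $[I(S;X,W_2,Y),H(S)]$ is swept out by the achievability scheme, not by the converse.
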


The achievability \mrb{proof is obtained from Theorem \ref{theo:LeakageCE}} by replacing the auxiliary random variable $W_1$ by the channel input $X$. The converse proof is \mrb{provided in the Supplementary Materials.}

\begin{remark}\label{remark:MarkovCausalSCE}
Equation~\eqref{eq:theoremSCEachie3}, the first inequality of \eqref{eq:theoremSCEachie2}, the Markov chain $Y -\!\!\!\!\minuso\!\!\!\!- (X , S ) -\!\!\!\!\minuso\!\!\!\!- W_2$, and the independence between $S$ and $X$ imply
\begin{align}
\textsf{R} &\leq I(X,W_2;Y) - I(W_2;S|X). \label{eq:theoremSCEachie1} 
\end{align}
\end{remark}



%

\begin{corollary}[Without receiver's outputs]\label{cor:LeakageSCEb}
A pair of rate and state leakage $(\textsf{R},\textsf{E})$ is achievable if and only if  there exists a distribution $  \QQ_X $ that satisfies
\begin{align}
 I(S; Y |X) \leq \textsf{E}&\leq H(S),   \label{eq:theoremSCEachiev2b}   \\
\textsf{R} + \textsf{E} &\leq I(X,S; Y).   \label{eq:theoremSCEachiev3b} 
\end{align}
\end{corollary}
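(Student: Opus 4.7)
The plan is to obtain Corollary \ref{cor:LeakageSCEb} as a direct specialization of Theorem \ref{theo:LeakageSCE}, mirroring the way Theorem \ref{theo:LeakageCEd} is derived from Theorem \ref{theo:LeakageCE}. Since the corollary concerns codes without receiver actions, the target distribution is $\QQ_{SXY} = \PP_S \QQ_X \mc{T}_{Y|XS}$, and there is no need to coordinate with any receiver output $V$. Consequently, the auxiliary random variable $W_2$ of Theorem \ref{theo:LeakageSCE} (whose role is to enable coordination with $V$) can be taken empty, and the two bounds of the theorem should then collapse onto the two bounds \eqref{eq:theoremSCEachiev2b}--\eqref{eq:theoremSCEachiev3b} of the corollary.

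For achievability, I would apply Theorem \ref{theo:LeakageSCE} with $W_2 = \emptyset$ and $\QQ_{V|XYW_2}$ chosen trivially. Using the independence of $S$ and $X$ enforced by the factorization $\QQ_{SX} = \PP_S \QQ_X$, I would write
\[
I(S; X, W_2, Y) \;=\; I(S; X, Y) \;=\; I(S;X) + I(S; Y \mid X) \;=\; I(S; Y \mid X),
\]
so that \eqref{eq:theoremSCEachie2}--\eqref{eq:theoremSCEachie3} immediately yield \eqref{eq:theoremSCEachiev2b}--\eqref{eq:theoremSCEachiev3b}. The cardinality bound $|\mc{W}_2| \leq |\mc{S} \times \mc{X} \times \mc{Y}|+1$ of Theorem \ref{theo:LeakageSCE} clearly accommodates the choice $W_2 = \emptyset$. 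For the converse, I would start from an achievable pair $(\textsf{R}, \textsf{E})$ in the receiver-action-free setting and invoke the converse of Theorem \ref{theo:LeakageSCE} (with $V$ trivial) to obtain an auxiliary variable $W_2$ and a joint distribution in $\Q_{\sf{sc}}$ satisfying $I(S; X, W_2, Y) \leq \textsf{E}$ and $\textsf{R}+\textsf{E} \leq I(X,S;Y)$. I would then drop the $W_2$ dependence via monotonicity of mutual information under conditioning on additional variables:
\[
I(S; Y \mid X) \;=\; I(S; X, Y) \;\leq\; I(S; X, W_2, Y) \;\leq\; \textsf{E},
\]
where the first equality again exploits $S \perp X$. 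The bound $\textsf{E} \leq H(S)$ follows directly from $\mc{L}_{\textsf{e}}(c) \leq \frac{1}{n}H(S^n) = H(S)$.

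The main obstacle is not in the corollary itself but in the converse of Theorem \ref{theo:LeakageSCE} that is invoked as a black box (and is relegated to the Supplementary Materials of the paper). That converse requires Fano's inequality together with a careful single-letterization exploiting the strictly causal encoding constraint, the memorylessness of the channel, and the independence $S_i \perp (M, S^{i-1})$, alongside the identification of a suitable $W_{2,i}$ from past variables. Granted that converse, the present corollary reduces to the two elementary information-theoretic manipulations highlighted above, and no additional coding argument is needed.
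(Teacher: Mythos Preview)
Your approach is correct and matches the paper's for achievability: the paper says the achievability "comes from the achievability proof of Theorem~\ref{theo:LeakageSCE}", which is precisely your specialization $W_2=\emptyset$ together with the identity $I(S;X,Y)=I(S;Y|X)$ from $S\perp X$.

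For the converse there is a small but real wrinkle in your black-box invocation of Theorem~\ref{theo:LeakageSCE}. That theorem's converse is stated for a \emph{fixed} target $\QQ_{SXYV}$ and, in the Supplementary Materials, the final steps use the empirical-coordination error event to pass from the time-averaged distribution $\PP_{S_TX_TY_T}$ to the target $\QQ$. Corollary~\ref{cor:LeakageSCEb} has no coordination constraint and only asserts the \emph{existence} of some $\QQ_X$, so a literal black-box call is not licensed. This is why the paper writes that the converse "is based on standard arguments": one runs the very single-letterization steps you list (Fano, chain rule, $S_i\perp(M,S^{i-1})$, memoryless channel) but stops \emph{before} invoking the coordination event, obtaining directly
\[
\textsf{E}\geq I(S_T;Y_T\mid X_T)-o(1),\qquad \textsf{R}+\textsf{E}\leq I(X_T,S_T;Y_T)+o(1),
\]
and then takes $\QQ_X$ to be the law of $X_T$. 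Your monotonicity step $I(S;Y|X)=I(S;X,Y)\leq I(S;X,W_2,Y)$ is perfectly valid once these bounds are in hand; the point is only that the $W_2$ detour is unnecessary here, and the paper's direct route avoids the need to justify applying Theorem~\ref{theo:LeakageSCE}'s converse in a setting without a prescribed target distribution.
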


The achievability proof of Corollary~\ref{cor:LeakageSCEb} comes from the achievability proof of Theorem \ref{theo:LeakageSCE}. The converse proof is based on standard arguments. Equations  \eqref{eq:theoremSCEachiev2b} and \eqref{eq:theoremSCEachiev3b} imply $\textsf{R} \leq I(X;Y)  $.

\appendices\label{sec:App}

%
\section{Achievability proof of Theorem \ref{theo:LeakageCE}}\label{sec:AchievabilityLeakageCE}

\subsection{Random coding}\label{sec:RandomCoding}

\mrb{The case $H(S)=0$ can be handled using a standard channel coding scheme and is detailed in the Supplementary Materials. In the remaining of the proof, we assume that $H(S)>0$ and we fix a rate, a state leakage, and a distribution} $(\textsf{R},\textsf{E},\mrb{\QQ_{SXYV}})$ for which there exists a distribution $ \QQ_{SW_1W_2XYV} \in \Q_{\sf{c}}$  \mrb{with marginal $\QQ_{SXYV}$, that satisfies}
\begin{align}
 I(S;W_1,W_2,Y) < \textsf{E}&\leq H(S),\label{eq:RateConstraint2ce}\\
 \textsf{R} + \textsf{E} &\leq I(W_1,S; Y), \label{eq:RateConstraint3ce}
\end{align}
\mrb{where the  inequality in \eqref{eq:RateConstraint2ce} is strict. The case of equality has to be treated with care because the channel capacity might be zero; a detailed analysis is available in the Supplementary Materials.} We show that $(\textsf{R},\textsf{E},\mrb{\QQ_{SXYV}})$ is achievable by introducing the rate parameters $\textsf{R}_{\sf{L}}$, $\textsf{R}_{\sf{J}}$, $\textsf{R}_{\sf{K}}$ and by considering a block-Markov random code $c\in \mc{C}(n B , \mc{M} )$ defined over $B\in \N^{\star}$ blocks of length $n\in \N^{\star}$. The codebook is defined over one block of length $n\in \N^{\star}$  and the total length of the code is denoted by $N = nB\in \N^{\star}$. \mrb{In the following, the notation $T_{\delta}(\QQ)$ stands for the set of typical sequences with respect to the distribution $\QQ=\QQ_{SW_1W_2XYV}$.}

\textbf{Random Codebook.}\\
1)  We draw $2^{n (H(S)+ \varepsilon)}$ sequences $S^n(l,j)$ according to \mrb{the} i.i.d. distribution $\PP_S$, with \mrb{indices} $(l,j)\in \mc{M}_{\textsf{L}} \times \mc{M}_{\textsf{J}} $ \mrb{such that}  $|\mc{M}_{\textsf{L}}|= 2^{n  \textsf{R}_{\sf{L}}} $ and $|\mc{M}_{\textsf{J}}|= 2^{n  \textsf{R}_{\sf{J}}} $. \\
2) We draw $2^{n(\textsf{R} + \textsf{R}_{\sf{L}} + \textsf{R}_{\sf{K}} )}$ sequences $W_1^n(m,l,k)$ according to the i.i.d. distribution $\QQ_{W_1}$ with \mrb{indices} $(m,l,k)\in\mc{M} \times \mc{M}_{\textsf{L}} \times \mc{M}_{\textsf{K}}$.\\
3)  For each triple of \mrb{indices} $(m,l,k)\in\mc{M} \times \mc{M}_{\textsf{L}}\times \mc{M}_{\textsf{K}}$, we draw the same number $2^{n(\textsf{R} + \textsf{R}_{\sf{L}} + \textsf{R}_{\sf{K}} )}$ of sequences $W_2^n(m,l,k,\hat{m},\hat{l},\hat{k})$ with \mrb{indices} $(\hat{m},\hat{l},\hat{k})\in\mc{M} \times \mc{M}_{\textsf{L}}\times \mc{M}_{\textsf{K}}$ according to  the i.i.d. conditional distribution $\mrb{\QQ_{W_2|W_1}}$ depending on $W_1^n(m,l,k)$.

  \textbf{Encoding function at the beginning of block $b\in \{2,\ldots B-1\}$.}\\
 1) The encoder observes the sequence of channel states $S^n_{b-1}$ corresponding to the block $b-1$ and finds the \mrb{indices} $(l_{b-1},j_{b-1})\in\mc{M}_{\textsf{L}} \times \mc{M}_{\textsf{J}}$ such that $\big( S^n(l_{b-1},j_{b-1}) , S^n_{b-1}\big) \in T_{\delta}(\mrb{\PP})$ for the  distribution \mrb{ $\PP(s,\hat{s}) = \PP(s) \UN(\hat{s} = s)$ $\forall(s,\hat{s})\in S\times S$}.\\
2) The encoder observes the message $m_b$ and the index $l_{b-1}$ and recalls $W_1^n(m_{b-1}, l_{b-2}, k_{b-1}) $ corresponding to the block $b-1$. It finds the index $k_b \in \mc{M}_{\textsf{K}}$ such that 
$ \big(S^n_{b-1}, W_1^n(m_{b-1}, l_{b-2}, k_{b-1})  ,\\W_2^n(m_{b-1}, l_{b-2}, k_{b-1}, m_{b}, l_{b-1}, k_{b}) \big)\in T_{\delta}(\QQ)$.\\
3) The encoder sends $X^n_b$ drawn from the i.i.d. conditional distribution $\QQ_{X|SW_1}$  depending on  $W_1^n(m_{b}, l_{b-1}, k_{b}) $ and $S^n_b$ observed causally on the current block $b\in \{2,\ldots B-1\}$. 

  \textbf{Decoding function at the end of block $b\in \{2,\ldots B-1\}$.}\\ 
1) The receiver recalls $Y^n_{b-1} $ and the \mrb{indices} $(m_{b-1}, l_{b-2}, k_{b-1}) $ corresponding to $W_1^n(m_{b-1}, l_{b-2}, k_{b-1}) $ decoded at the end of the block $b-1$. \\
2) The receiver observes $Y^n_b$ and finds the triple of \mrb{indices} $(m_b,l_{b-1}, k_{b})$ such that  $\big(Y^n_b ,  W_1^n(m_b,l_{b-1}, k_{b}) \big) \in T_{\delta}(\QQ)$  and  $\big(Y^n_{b-1} ,  W_1^n(m_{b-1}, l_{b-2}, k_{b-1})  ,\\W_2^n(m_{b-1}, l_{b-2}, k_{b-1}, m_{b}, l_{b-1}, k_{b}) \big)  \in T_{\delta}(\QQ)$.\\
3) The receiver returns the message $m_b$ corresponding to block $b$. \\
4) The receiver returns $V^n_{b-1}$ drawn from the conditional distribution $\QQ_{V|YW_1W_2}$  depending on  $\big(Y^n_{b-1} ,  W^n_{1}(m_{b-1}, l_{b-2}, k_{b-1}) ,\\W_2^n(m_{b-1}, l_{b-2}, k_{b-1}, m_{b}, l_{b-1}, k_{b}) \big)$.\\
5) The receiver knows that over block $b-1$, the sequences 
$\big(S^n_{b-1},  W_1^n(m_{b-1}, l_{b-2}, k_{b-1})  ,W_2^n(m_{b-1}, l_{b-2}, k_{b-1}, \\m_{b}, l_{b-1}, k_{b}), X^n_{b-1} , Y^n_{b-1} , V^n_{b-1}  \big) \in T_{\delta}(\QQ)$ and the sequence of states $S^n_{b-1}$  belongs to the bin with index $l_{b-1} \in \mc{M}_{\textsf{L}} $.

 \textbf{Initialization of the encoder.} Arbitrary \mrb{indices} $(m_1,l_0,k_1) $ are given to both encoder and decoder. The encoder sends the sequence $X^n_{b_1}$ drawn according to the conditional distribution $\QQ_{X|SW_1}$  depending on  $\big(S^n_{b_1}, W_1^n(m_{1},l_0,k_1)  \big)$. At the beginning of the second block  $b_2$, encoder recalls $ W_1^n(m_{1},l_0,k_1) $, observes \mrb{the} message $m_2$, finds the index $l_1$ such that   $\big(S^n_{b_1}, S^n(l_1,j_1) \big)\in T_{\delta}(\mrb{\PP})$ and finds the index $k_2$ such that   $\big(S^n_{b_1}, W_1^n(m_{1},l_0,k_1) ,W_2^n(m_{1},l_0,k_1,m_2,l_1,k_2) \big)\in T_{\delta}(\QQ)$. The encoder sends $X^n_{b_2}$ drawn from the conditional distribution $\QQ_{X|SW_1}$  depending on  $\big(S^n_{b_2}, W_1^n(m_{2},l_1,k_2)  \big)$. The index $m_1$ does not correspond to \mrb{an informational} message.

 \textbf{Initialization of the decoder.}  At the end of second block $b_2$, the decoder finds the triple of \mrb{indices} $(m_2,l_1,k_2)$ such that $\big(Y^n_{b_2} ,W_1^n(m_2,l_1,k_2) \big) \in T_{\delta}(\QQ)$  and  $\big(Y^n_{b_1} , W_1^n(m_{1},l_0,k_1), W_2^n(m_{1},l_0,k_1 , m_2,l_1,k_2) \big)\\ \in T_{\delta}(\QQ)$. The decoder returns the message $m_2$ corresponding to the block $b_2$ and the sequence $V^n_{b_1} \in \mc{V}^n$ drawn from the conditional probability $\QQ_{V|YW_1W_2}$  depending on  $\big(Y^n_{b_1} , W^n_{1}(m_{1},l_0,k_1), W^n_{2}(m_{1},l_0,k_1 ,  m_2, l_1,k_2)\big)$.  The decoder knows that over the first block $b_1$, the sequence $S^n_{b_1}$ belongs to the bin $l_1$. The sequences  $\big(S^n_{b_1} ,  W_1^n(m_{1},l_0,k_1), W_2^n(m_{1},l_0,k_1 , m_2,l_1,k_2) ,  X^n_{b_1} , Y^n_{b_1},\\ V^n_{b_1} \big) \in T_{\delta}(\QQ)$ and $\big(S^n_{b_2},  W_1^n(m_{2},l_1,k_2),\\ W_2^n(m_{2},l_1,k_2 , m_3,l_2,k_3)  ,  X^n_{b_2} , Y^n_{b_2} , V^n_{b_2} \big) \in T_{\delta}(\QQ)$.

 \textbf{Last block  at the encoder.} At the beginning of the last block $B$, the encoder recalls
$W_1^n(m_{B-1}$, observes message $m_B$, finds the index $l_{B-1}$ such that   $\big(S^n_{B-1}, S^n( l_{B-1} , j_{B-1} ) \big)\in T_{\delta}(\mrb{\PP})$. It finds the index $k_B$ such that  $ \big(S^n_{B-1}, W_1^n(m_{B-1}, l_{B-2}, k_{B-1})  ,W_2^n(m_{B-1}, l_{B-2},\\ k_{B-1}, m_{B}, l_{B-1}, k_{B}) \big)\in T_{\delta}(\QQ)$. The encoder sends the sequence $X^n_B$ drawn from the conditional distribution $\QQ_{X|SW_1}$  depending on $\big(S^n_{B}, W_1^n(m_{B},l_{B-1},k_B)  \big)$.

 \textbf{Last block  at the decoder.} At the end of the last block $B$, the decoder finds the triple of \mrb{indices} $(m_B,l_{B-1},k_B)$ such that $\big(Y^n_{B} ,W_1^n(m_B,l_{B-1},k_B) \big) \in T_{\delta}(\QQ)$  and  $\big(Y^n_{B-1} , W_1^n(m_{B-1},l_{B-2},k_{B-1}), \\W_2^n(m_{B-1},l_{B-2},k_{B-1} , m_B,l_{B-1},k_B) \big) \in T_{\delta}(\QQ)$. The decoder returns the message $m_B$ corresponding to the last block $B$ and the sequence $V^n_{B-1} \in \mc{V}^n$ drawn from the conditional probability $\QQ_{V|YW_1W_2}$  depending on  $\big(Y^n_{B-1} , W_1^n(m_{B-1},l_{B-2},k_{B-1}) , \\W_2^n(m_{B-1},l_{B-2},k_{B-1} , m_B,l_{B-1},k_B) \big)$. The decoder knows that over the block $B-1$, the sequence $S^n_{B-1}$ belongs to the bin $l_{B-1}$. Then  $\big(S^n_{B-1} ,  W_1^n(m_{B-1},l_{B-2},k_{B-1}),\\ W_2^n(m_{B-1},l_{B-2},k_{B-1} , m_B,l_{B-1},k_B),  X^n_{B-1},  Y^n_{B-1} ,  V^n_{B-1}  \big)\\ \in T_{\delta}(\QQ)$ but $(S^n_{B} ,W^n_{1,B},W^n_{2,B},X^n_{B} ,Y^n_{B} ,V^n_{B} ) \notin T_{\delta}(\QQ)$ on the last block $B$. The decoder does not know the index $l_B$ of the bin corresponding to sequence $S^n_{B}$.

In the following, we introduce the notation $W^n_{1,b} =W_1^n(m_{b}, l_{b-1}, k_{b}) $ and $W^n_{2,b} =W_2^n(m_{b}, l_{b-1}, k_{b}, m_{b+1}, l_{b}, k_{b+1}) $, with $b \in \{1,\ldots,B-1\}$. If there is no error in the coding scheme, the messages $(m_2,\ldots, m_{B})$ are correctly decoded and the decoder knows the bin \mrb{indices} $(l_1,\ldots,l_{b-1})$ of the sequences $(S^n_1,\ldots,S^n_{b-1})$, \mrb{and} $(S_b^n,W_{1,b}^n,W_{2,b}^n,X_b^n,Y_b^n,V_b^n) \in T_{\delta}(\QQ)$, for each blocks $b \in \{1,\ldots,B-1\}$.

\subsection{Rate parameters $\textsf{R}$,  $\textsf{R}_{\sf{L}}$ and $\textsf{R}_{\sf{K}}$ }\label{sec:RateParametersCE}
1)  At the end of block $b\in \{2,\ldots B\}$, the decoder observes $(Y^n_{b-1} ,Y^n_{b})$ and decodes $(W^n_{1,b-1}, W^n_{2,b-1})$ corresponding to the block $b-1$. Intuitively, the observation of $(W^n_{1,b-1},W^n_{2,b-1},Y^n_{b-1})$ leaks $n I(S;W_1,W_2,Y) = n I(S; W_2,Y | W_1)$ bits of information regarding $S^n_{b-1}$. By fixing the rate parameter $ \textsf{R}_{\sf{L}} = \textsf{E}-I(S;W_1,W_2,Y)$, the encoder will transmit $ n\textsf{R}_{\sf{L}} =  n \big(\textsf{E}-I(S;W_1,W_2,Y)\big)$ additional bits of information corresponding to the state sequence $S^n_{b-1}$. As it will be proven in the Section \ref{sec:LeakageCE}, the leakage rate $I(S^n_{b-1};Y^n_{b-1}) $ over block $b-1$ is close to $ n  \big( I(S;W_1,W_2,Y) +  \textsf{E}-I(S;W_1,W_2,Y) \big)= n    \textsf{E}$. We fix the rate parameter $\textsf{R}_{\sf{L}}$ equal to:
\begin{align}
\textsf{R}_{\sf{L}}=&\textsf{E} - I(S;W_1,W_2,Y) - 2 \varepsilon \geq 0.\label{eq:rateRLce}
\end{align}
The first inequality $I(S;W_1,W_2,Y) \leq \textsf{E}$ in \eqref{eq:RateConstraint2ce} implies there exists a positive rate parameter $\textsf{R}_{\sf{L}} $. In case of equality $\textsf{E} = I(S;W_1,W_2,Y)$, then the rate $\textsf{R}_{\sf{L}}=0$ and no index $l\in \mc{M}_{\sf{L}}$ is transmitted to the decoder.

2) The rates parameters $\textsf{R}_{\sf{L}}$, $\textsf{R}_{\sf{J}}$ corresponding to the \mrb{indices} $(l_{b-1},j_{b-1})$, guarantee that almost every sequences $S^n_{b-1}$ appear in the codebook.
\begin{align}
\textsf{R}_{\sf{L}} + \textsf{R}_{\sf{J}} =& H(S) + \varepsilon, \label{eq:AchExactLeak1ce} \\
\Longrightarrow \textsf{R}_{\sf{J}}=&H(S) -  \textsf{E} + I(S;W_1,W_2,Y) +3\varepsilon.
\end{align}
The second inequality $ \textsf{E} \leq H(S)$ in  \eqref{eq:RateConstraint2ce}   implies there exists a positive rate parameter $\textsf{R}_{\sf{J}} $.

3) The rates parameter $ \textsf{R}_{\sf{K}}$ corresponding to the index $k_b$, is used by the encoder in order to correlate $\big(W_1^n(m_{b-1}, l_{b-2}, k_{b-1})  ,W_2^n(m_{b-1}, l_{b-2}, k_{b-1}, m_{b}, l_{b-1}, k_{b})\big)$ with $S^n_{b-1}$.
\begin{align}
\textsf{R}_{\sf{K}}=& I(W_2;S|W_1)  + \varepsilon,\label{eq:AchExactLeak6ce}
\end{align}
Since the random variables $W_1$ and $S$ are independent, we have  $ I(W_1,W_2;S) =  I(W_2;S|W_1)$.

4) The rate parameters $\textsf{R} $, $\textsf{R}_{\sf{L}} $, $\textsf{R}_{\sf{K}} $ are correctly decoded if
\begin{align}
&\textsf{R} + \textsf{R}_{\sf{L}} +  \textsf{R}_{\sf{K}} \leq I(W_1;Y) + I(W_2;Y|W_1)   - \varepsilon \label{eq:AchExactLeak7ce}\\
\Longleftrightarrow &\textsf{R} +\textsf{E} - I(S;W_1,W_2,Y)  - 2 \varepsilon+ I(W_2;S|W_1)  + \varepsilon\nonumber\\& \leq I(W_1,W_2;Y)   - \varepsilon\\
\Longleftrightarrow &\textsf{R} +\textsf{E}  \leq I( W_1,W_2,S;Y ) \label{eq:rateSCE5ce}\\
\Longleftrightarrow &\textsf{R} +\textsf{E}  \leq I( W_1,S;Y ), \label{eq:rateSCE4ce}
\end{align}
where \eqref{eq:rateSCE5ce} comes from the independence between $W_1$ and $S$; 
 \eqref{eq:rateSCE4ce} comes from the Markov chain  $Y -\!\!\!\!\minuso\!\!\!\!- (W_1 , S ) -\!\!\!\!\minuso\!\!\!\!-  W_2$ stated in Remark \ref{remark:MarkovCausal}.

Equation \eqref{eq:RateConstraint3ce} implies that for each block $b\in \{2,\ldots B\}$, the \mrb{indices} with rates $\textsf{R} $, $\textsf{R}_{\sf{L}} $, $\textsf{R}_{\sf{K}} $ are recovered by the decoder, with large probability. Hence the rate of the total code of length $N = n B$  is given by
\begin{align}
&\frac{1}{n B} \sum_{b=2}^{B} \log_2 |\mc{M}| 
= \frac{B-1}{ B}  \textsf{R} = \textsf{R} -  \frac{1}{ B}  \textsf{R} \nonumber\\&
\geq \textsf{R} -  \frac{1}{ B}   \log_2 |\mc{Y}| 
\geq \textsf{R} -  \varepsilon.\label{eq:totalRatece}
\end{align}
The last equation is satisfied when the number of blocks is sufficiently large \mrb{such that} $ \frac{1}{ B}   \log_2 |\mc{Y}| \leq \varepsilon$.


%

\subsection{Expected error probability by block} \label{sec:ErrorProbaCE}
For each block $b\in \{2,\ldots ,B\}$, we consider the expected probability of the following error events. The properties of the typical sequences, see \cite[pp. 27]{ElGamalKim(book)11}, implies that \mrb{for all $\varepsilon>0$} there exists $n_1\in \N^{\star}$ such that for all $n\geq n_1$, the expected probability of the error event is bounded by 
\begin{align}
\E_c\bigg[ \prob\bigg( S^n_{b-1}   \notin T_{\delta}(\QQ) \bigg)\bigg]  \leq \varepsilon.
\end{align}
From the covering Lemma, \mrb{ see \cite[pp. 62]{ElGamalKim(book)11}}, 
 \begin{align*}
 \textsf{R}_{\sf{L}}  +  \textsf{R}_{\sf{J}}    \geq& H(S) + \varepsilon   
\end{align*}
implies that \mrb{$\forall \varepsilon>0$,} $ \exists n_2\in \N^{\star}$ such that $ \forall n\geq n_2$, 
\begin{align}
&\E_c\bigg[ \prob\bigg( \forall  (L_{b-1},J_{b-1})\in \mc{M}_{\sf{L}} \times \mc{M}_{\sf{J}} ,\nonumber\\
&\qquad ( S^n(L_{b-1},J_{b-1}), S^n_{b-1})  \notin T_{\delta}(\QQ)  \bigg)\bigg]  \leq \varepsilon.
\end{align}

 From the covering Lemma,  \mrb{ see \cite[pp. 62]{ElGamalKim(book)11}},  \begin{align*}
 \textsf{R}_{\sf{K}}\geq& I(W_2;S|W_1)  + \varepsilon  
\end{align*}
implies that \mrb{$\forall \varepsilon>0$,} $ \exists n_3\in \N^{\star}$ such that $ \forall n\geq n_3$, 
\begin{align}
&\E_c\bigg[ \prob\bigg( \forall  K_b \in \mc{M}_{\sf{K}},\; (S^n_{b-1}, W_1^n(M_{b-1}, L_{b-2}, K_{b-1})  ,\nonumber\\
&W_2^n(M_{b-1}, L_{b-2}, K_{b-1}, M_{b}, L_{b-1}, K_{b}) ) \notin T_{\delta}(\QQ )  \bigg)\bigg]  \leq \varepsilon.
\end{align}

 From the packing Lemma,  \mrb{ see \cite[pp. 46]{ElGamalKim(book)11}}, 
 \begin{align*}
\textsf{R} + \textsf{R}_{\sf{L}} +  \textsf{R}_{\sf{K}} \leq& I(W_1;Y) + I(W_2;Y|W_1)   - \varepsilon  
\end{align*}
implies that \mrb{$\forall \varepsilon>0$,} $ \exists n_4\in \N^{\star}$ such that $ \forall n\geq n_4$, 
\begin{align}
&\E_c\bigg[ \prob\bigg( \exists (M_b,L_{b-1}, K_{b}) \neq (M_b',L_{b-1}', K_{b}'),\text{ s.t. } \nonumber\\
&\Big\{(Y^n_b ,  W_1^n(M_b',L_{b-1}', K_{b}') ) \in T_{\delta}(\QQ) \Big\} \cap \nonumber\\
& \Big\{ (Y^n_{b-1} ,  W_1^n(M_{b-1}, L_{b-2}, K_{b-1})  ,\nonumber\\
&W_2^n(M_{b-1}, L_{b-2}, K_{b-1}, M_{b}', L_{b-1}', K_{b}') ) \in T_{\delta}(\QQ)\Big\}  \bigg)\bigg]  \leq \varepsilon.
\end{align}

\mrb{Thus} for each block $b\in \{2,\ldots,B \}$ and for all $n\geq \bar{n} \geq \max(n_1,n_2,n_3,n_4)$, the expected probability of non-decoding the \mrb{indices} $(M_b , L_{b-1},K_b)$ is bounded by:
 \begin{align}
\E_{c} \bigg[\prob\Big((M_b , L_{b-1},K_b) \neq ( \hat{M}_b , \hat{L}_{b-1} ,  \hat{M}_b)    \Big) \bigg]   \leq& 4 \varepsilon.
\end{align}


 \subsection{Expected error probability of the block-Markov code}
 
We evaluate the expected probability of error for the random \mrb{indices} $(M_b,L_{b-1},K_b)$, for $b \in \{2,\ldots,B\}$ of the block-Markov random code:
\begin{align}
 & \E_{c} \bigg[\prob\bigg(  \Big(M_2,L_1,K_2\ldots, M_{B},L_{B-1},K_B\Big) \neq \nonumber\\
 &\Big(\hat{M}_2,\hat{L}_1,\hat{K}_2, \ldots, \hat{M}_{B}, \hat{L}_{B-1} , \hat{K}_{B} \Big) \bigg)  \bigg]\nonumber \\
 =& 1 - \E_{c} \bigg[\prob\bigg(   (M_2,L_1,K_2) = (\hat{M}_2 , \hat{L}_1,\hat{K}_2)  \bigg) \bigg]    \nonumber \\
&\times \E_{c} \bigg[\prob\bigg(    ( M_B , L_{B-1},K_B) = ( \hat{M}_B , \hat{L}_{B-1}, \hat{K}_{B}) \Big| \nonumber\\
& \Big\{  (M_2,L_1,K_2) = (\hat{M}_2 , \hat{L}_1,\hat{K}_2) \Big\} \cap  \ldots \cap  \nonumber\\
& \Big\{   ( M_{B-1} , L_{B-2} , K_{B-1} ) = ( \hat{M}_{B-1} , \hat{L}_{B-2}, \hat{K}_{B-1} )\Big\}\bigg) \bigg]   \nonumber\\
\leq& 1 - \bigg( 1 - 4 \varepsilon \bigg)^{B-1}.\label{eq:ErrorProbaTotalce}
 \end{align}

We denote by $\widetilde{Q}^N \in \Delta( \mc{S} \times \mc{X}\times \mc{Y}\times \mc{V} )$, the empirical distribution of symbols over every blocks $b \in \{1,\ldots ,B-1\}$ removing the last block. We show $\widetilde{Q}^N$ is close to the empirical distribution $Q^N$ over all the $B$ blocks, for a number of blocks $B\in \N^{\star}$ sufficiently large, \textit{i.e.} for which $ \frac{2}{B}  |  \mc{S} \times \mc{X} \times\mc{Y}  \times \mc{V}  | \leq \varepsilon$. We denote by $Q_B$, the empirical distribution of symbols over the last block.
\begin{align}
\Big|\Big|{Q}^N  -& \widetilde{Q}^N\Big|\Big|_{1} 
= \Big|\Big|\frac{1}{B} \Big( (B-1) \widetilde{Q}^N +  Q_B\Big)  - \widetilde{Q}^N\Big|\Big|_{1}\\
=& \frac{1}{B}  \Big|\Big|   Q_B - \widetilde{Q}^N \Big|\Big|_{1} \leq  \frac{2}{B}  \Big|   \mc{S} \times \mc{X} \times\mc{Y}  \times \mc{V}   \Big| \leq \varepsilon.
\end{align}
Then, the expected probability that $(S^N,W_1^N,W_2^N, X^N,Y^N,V^N) \notin T_{\delta}(\QQ)$, is upper bounded by:
\begin{align}
&  \E_c\bigg[ \prob\bigg(\Big|\Big|{Q}^N - \QQ  \Big|\Big|_{1}> 2\varepsilon\bigg) \bigg] \nonumber\displaybreak[0]\\
=&  \E_c\bigg[ \prob\bigg(\Big|\Big|{Q}^N  - \widetilde{Q}^N + \widetilde{Q}^N - \QQ \Big|\Big|_{1}> 2\varepsilon\bigg) \bigg] \displaybreak[0]\nonumber\\
\leq&  \E_c\bigg[ \prob\bigg(\Big|\Big|{Q}^N  - \widetilde{Q}^N\Big|\Big| + \Big|\Big|\widetilde{Q}^N - \QQ \Big|\Big|_{1}> 2\varepsilon\bigg) \bigg] \nonumber\displaybreak[0]\\
\leq& \E_c\bigg[ \prob\bigg(  \Big|\Big|\widetilde{Q}^N - \QQ \Big|\Big|_{1}> 2\varepsilon -  \frac{2}{B}  \Big| \mc{S} \times \mc{X} \times\mc{Y}  \times \mc{V}  \Big|  \bigg) \bigg] \displaybreak[0]\nonumber\\
\leq&  \E_c\bigg[ \prob\bigg(  \Big|\Big|\widetilde{Q}^N - \QQ \Big|\Big|_{1}> \varepsilon \bigg) \bigg] \leq  1 - \bigg(1 - 4{\varepsilon}   \bigg)^{B-1}.
\end{align}
Hence, we obtain the following bound on the expected error probability:
\begin{align}
\E_{c} \bigg[\PP_{\textsf{e}}(c)  \bigg] 
=& \E_{c} \bigg[ \prob\bigg( M \neq \hat{M} \bigg) + \prob\bigg(\Big|\Big|Q^n - \QQ \Big|\Big|_{1}> \varepsilon\bigg) \bigg] \nonumber\displaybreak[0]\\
\leq&  2 - 2\bigg(1 - 4{\varepsilon}   \bigg)^{B-1}.\label{eq:ErrorProbaTotalceBis}
\end{align}
This implies \mrb{that for all $\varepsilon>0$, there exists} a code $c^{\star} \in \mc{C}(N)$ with \mrb{$N\geq B \bar{n} $ such that $\PP_{\textsf{e}}(c^{\star})\leq2 - 2\Big(1 - 4{\varepsilon} \Big)^{B-1}$.}


%

\subsection{Expected state leakage rate} \label{sec:LeakageCE}
In this section, we provide \mrb{upper and lower bounds} on the expected state leakage rate, depending on the parameters $\varepsilon_1$ and $\varepsilon_2$ given by equations \eqref{eq:UpperBoundLeakce} and  \eqref{eq:LowerBoundLeakce}.
\begin{small}
 \begin{align}
\textsf{E} -\varepsilon_1 - \varepsilon_2 \leq& \E_{c} \bigg[ \mc{L}_{\textsf{e}}(c)  \bigg]   = \E_{c} \bigg[ \frac{1}{nB} I(S^{nB};Y^{nB}| C = c)    \bigg]  \nonumber\displaybreak[0]\\
=&   \frac{1}{nB} I(S^{nB};Y^{nB}| C)   \leq \textsf{E} + \varepsilon_1 + \varepsilon_2.
\end{align}
\end{small}
\mrb{The notation} $S^{nB}$ denotes the sequence of random variables of channel states of length $N = n B$, whereas $S^{n}_b$ denotes the sub-sequence  of length $n\in\N^{\star}$ over the block $b\in\{1,\ldots,B\}$.

 \textbf{Upper bound.} We provide an upper bound on the expected state leakage rate by \mrb{using} the chain rule from one block to another. 
\begin{align}
&I(S^{nB};Y^{nB} | C)   \nonumber\displaybreak[0]\\
 \leq&   \sum_{b = b_1}^{B-1}  I(S^{n}_b;Y^{nB} | S^{n}_{b+1} , \ldots, S^{n}_B,C) + n \log_2 |\mc{S}|\label{eq:upperB00ce}  \displaybreak[0]\\
\leq&   \sum_{b = b_1}^{B-1}  I(S^{n}_b;Y^{nB},W^n_{1,b},W^n_{2,b},L_b, M_{b+1} , S^{n}_{b+1} , \ldots, S^{n}_B|C)\nonumber\displaybreak[0]\\&+ n \log_2 |\mc{S}|\label{eq:upperB0ce} \displaybreak[0]\\
 = &  \sum_{b = b_1}^{B-1}  I(S^{n}_b;W^n_{1,b},W^n_{2,b},Y^{n}_b,L_b,M_{b+1} | C)+ n \log_2 |\mc{S}| \nonumber \displaybreak[0]\\
 + &  \sum_{b = b_1}^{B-1}  I(S^{n}_b;S^{n}_{b+1} , \ldots, S^{n}_B| W^n_{1,b},W^n_{2,b},Y^{n}_b, L_b,M_{b+1},C) \label{eq:upperB1ce}\displaybreak[0]\\
 + &\!\!  \sum_{b = b_1}^{B-1}  I(S^{n}_b;Y^{nB} | W^n_{1,b},W^n_{2,b} , Y^{n}_b,L_b,M_{b+1},S^{n}_{b+1} , \ldots, S^{n}_B,C) \label{eq:upperB2ce} \displaybreak[0]\\
 = &  \sum_{b = b_1}^{B-1}  I(S^{n}_b;W^n_{1,b},W^n_{2,b},L_b,M_{b+1}| C) + n \log_2 |\mc{S}| \nonumber\displaybreak[0]\\
 &+\sum_{b = b_1}^{B-1} I(S^{n}_b;Y^{n}_b|W^n_{1,b},W^n_{2,b},L_b,M_{b+1}, C). \label{eq:upperB3ce}
\end{align}
The term at line \eqref{eq:upperB1ce} is equal to zero since the causal encoding and the i.i.d. property of the channel state, imply that the random variables $(S^{n}_{b+1} , \ldots, S^{n}_B)$ are independent of the message $M_{b+1}$ and of the random variables $(S^n_b,W^n_{1,b},W^n_{2,b},Y^{n}_b,L_b)$ of the block $b$, and the term at line \eqref{eq:upperB2ce} is equal to zero since in the encoding process, the sequence $S^n_b$ only affects the choice of the bin index $L_b$ and of the sequences $(W^n_{1,b},W^n_{2,b}, Y^{n}_b)$ of the current block $b$. This induces the following Markov chain: $S^n_b -\!\!\!\!\minuso\!\!\!\!- (W^n_{1,b},W^n_{2,b}, Y^{n}_b,L_b,C)   -\!\!\!\!\minuso\!\!\!\!-   (Y^{nB} ,M_{b+1},S^{n}_{b+1} , \ldots, S^{n}_B) $, that is valid for each block $b\in \{1,\ldots,B-1\}$. The sequence $S^n_b$ is correlated with the random variables of the other blocks $b'\neq b$ only through $(W^n_{1,b},W^n_{2,b}, Y^{n}_b,L_b,C)$. 

For each block $b \in \{1, \ldots,B-1\}$, the first term in equation \eqref{eq:upperB3ce} satisfies: 
\begin{align} 
&I(S^{n}_b;W^n_{1,b},W^n_{2,b},L_b ,M_{b+1} | C) \nonumber\displaybreak[0]\\
=& I(S^{n}_b;W^n_{2,b},L_b  | W^n_{1,b} ,M_{b+1} ,C) \label{eq:AchSCEleak1ce}\displaybreak[0]\\
\leq& H(W^n_{2,b},L_b  | W^n_{1,b} ,M_{b+1} ,C) \label{eq:AchSCEleak2ce}\displaybreak[0]\\
\leq& \log_2|\mc{M}_{\textsf{L}}| +  H(W^n_{2,b}| W^n_{1,b},L_b ,M_{b+1} ,C)    \label{eq:AchSCEleak3ce}\displaybreak[0]\\
\leq& \log_2|\mc{M}_{\textsf{L}}| +  \log_2|\mc{M}_{\textsf{K}}|   \label{eq:AchSCEleak4ce}\displaybreak[0]\\
=&  n \bigg(\textsf{E} - I(S;W_1,W_2,Y)  - 2 \varepsilon + I(S;W_2|W_1) + \varepsilon \bigg)\label{eq:AchSCEleak5ce}\displaybreak[0]\\ 
=& n \bigg(\textsf{E} - I(S;Y|W_1,W_2)  - \varepsilon \bigg), \label{eq:AchSCEleak6ce}
\end{align}
where \eqref{eq:AchSCEleak1ce} comes from the causal encoding that induces the independence between the sequence auxiliary random variables  $W^n_{1,b}$  and the sequence of channel states $S^{n}_b$, in block $b \in \{1,\ldots, B-1\}$. Hence $S^n_b$ is independent of $(W^n_{1,b} ,M_{b+1} ,C)$; \eqref{eq:AchSCEleak3ce} comes from the cardinality of the set of \mrb{indices} $\mc{M}_{\textsf{L}}$; \eqref{eq:AchSCEleak5ce} comes from the coding scheme described in Appendix \ref{sec:AchievabilityLeakageCE}. By considering a fixed sequence $W^n_{1,b}$ and fixed \mrb{indices} $(L_b ,M_{b+1})$, the encoder chooses an index $K_{b+1} \in \mc{M}_{\sf{K}} $ corresponding to the sequence $W^n_{2,b}$. Hence, the sequence $W^n_{2,b}$ belongs to the bin of cardinality $|\mc{M}_{\sf{K}}|$. The  rate parameters  are given by $\textsf{R}_{\sf{L}} =\textsf{E} - I(S;W_1,W_2,Y) - 2 \varepsilon $ and $\textsf{R}_{\sf{K}} = I(W_2;S|W_1) +  \varepsilon $; \eqref{eq:AchSCEleak6ce} comes from the independence between $W_1$ and $S$ that induces $I(S;W_2|W_1) = I(S;W_1,W_2)$.


For each block $b \in \{1, \ldots,B-1\}$, we introduce the random event of error $E_b \in \{0,1\}$ defined with respect to the achievable distribution $\QQ_{SXW_1W_2YV}$, as follows:
\begin{align}
E_b = \left\{
\begin{array}{lll}
0 \text{ if}&  (S_b^n,X_b^n,W_{1,b}^n,W_{2,b}^n,Y_b^n,V_b^n) \in T_{\delta}(\QQ) \text{ and }\\& (\hat{M}_{b+1},\hat{L}_{b}, \hat{K}_{b+1}) = (M_{b+1},L_{b}, K_{b+1}),\\
1 \text{ if}&  (S_b^n,X_b^n,W_{1,b}^n,W_{2,b}^n,Y_b^n,V_b^n) \notin T_{\delta}(\QQ) \text{ or }\\& (\hat{M}_{b+1},\hat{L}_{b}, \hat{K}_{b+1}) \neq (M_{b+1},L_{b}, K_{b+1}).
\end{array}
\right.\label{eq:ErrorEventBlockB}
\end{align}
The second term in equation \eqref{eq:upperB3ce} satisfies: 
\begin{align}
&I(S^{n}_b;Y^{n}_b|W^n_{1,b},W^n_{2,b},L_b,M_{b+1}  , C) \nonumber \displaybreak[0]\\
=&  H(Y^{n}_b|W^n_{1,b},W^n_{2,b},L_b ,M_{b+1}  , C) \nonumber\displaybreak[0]\\
&-  H(Y^{n}_b|S^{n}_b, W^n_{1,b},W^n_{2,b},L_b ,M_{b+1}  , C) \label{eq:BchSCEleak1ce}  \displaybreak[0]\\
=&  H(Y^{n}_b|W^n_{1,b},W^n_{2,b},L_b  ,M_{b+1}  , C) -  n H(Y|W_1,W_2,S) \label{eq:BchSCEleak2ce}  \displaybreak[0]\\
\leq&  H(Y^{n}_b|W^n_{1,b},W^n_{2,b},L_b ,M_{b+1}, C,E_b=0) +  1 \nonumber\displaybreak[0]\\
&+  \prob(E_b=1) n \log_2|\mc{Y}| -  n H(Y|W_1,W_2,S) \label{eq:BchSCEleak3ce}  \displaybreak[0]\\
\leq&   n \bigg( H(Y|W_1,W_2) + \varepsilon \bigg)  + 1 +  \prob(E_b=1) n \log_2|\mc{Y}| \nonumber\displaybreak[0]\\
&-  n H(Y|W_1,W_2,S) \label{eq:BchSCEleak4ce}  \displaybreak[0]\\
=& n \bigg( I(S;Y|W_1,W_2) + \varepsilon + \frac{1}{n}  +  \prob(E_b=1) \log_2|\mc{Y}|  \bigg) ,\label{eq:BchSCEleak5ce}
\end{align}
where \eqref{eq:BchSCEleak1ce} comes from the properties of the mutual information; \eqref{eq:BchSCEleak2ce} comes from the cascade of memoryless channels $\QQ_{X|W_1S}   \mc{T}_{Y|XS}$ of the coding scheme $C$ described in Appendix \ref{sec:AchievabilityLeakageCE}, that implies $H(Y^{n}_b|S^{n}_b, W^n_{1,b},W^n_{2,b}  ,L_b ,M_{b+1} , C) = nH(Y|W_1,S)= nH(Y|W_1,W_2,S)$; \eqref{eq:BchSCEleak3ce} is inspired by the proof of Fano's inequality, see \cite[pp. 19]{ElGamalKim(book)11}; \eqref{eq:BchSCEleak4ce} comes from the \mrb{cardinality bound for} the set of sequences $y^n$ such that $(w_1^n,w_2^n,y^n) \in T_{\delta}(\QQ)$, see \cite[pp. 26]{ElGamalKim(book)11}; \eqref{eq:BchSCEleak5ce} comes from the properties of the mutual information.

Hence we have the following upper bound on the leakage rate:
\begin{align*}
&n B \E_{c} \bigg[\mc{L}_{\textsf{e}}(c)  \bigg] 
 \leq   \sum_{b = b_1}^{B-1}  I(S^{n}_b;W^n_{1,b},W^n_{2,b},L_b,M_{b+1}| C) \nonumber\displaybreak[0]\\
 &+ \sum_{b = b_1}^{B-1} I(S^{n}_b;Y^{n}_b|W^n_{1,b},W^n_{2,b},L_b,M_{b+1}, C) + n \log_2 |\mc{S}| \displaybreak[0]\\
 \leq & n B \bigg(\textsf{E} - I(S;Y|W_1,W_2)  - \varepsilon +  I(S;Y|W_1,W_2) + \varepsilon + \frac{1}{n} \nonumber\displaybreak[0]\\
 & +  \prob(E_b=1) \log_2|\mc{Y}|  + \frac{1}{B} \log_2|\mc{S}| \bigg)  \displaybreak[0]\\
 \leq & n B \bigg(\textsf{E} + \frac{1}{n}  +  \prob(E_b=1) \log_2|\mc{Y}| + \frac{1}{B} \log_2|\mc{S}| \bigg).
\end{align*}
 \textbf{Lower bound.} We provide a lower bound on the expected state leakage rate. 
\begin{align}
&n B  \E_{c} \bigg[\mc{L}_{\textsf{e}}(c)  \bigg]  =  I(S^{nB};Y^{nB}|C)  \label{eq:CchSCEleak1ce}\\
 =&  n B H(S) - H(S^{nB}|Y^{nB},C)  \label{eq:CchSCEleak2ce}\displaybreak[0]\\
 \geq & nB H(S)   - \prob(E_b=1) n  B\log_2 |\mc{S}|- 1  \nonumber\displaybreak[0]\\
 & - H(S^{nB}|Y^{nB},C,E_b=0)  \label{eq:CchSCEleak5ce}\displaybreak[0]\\
 \geq & nB H(S)   - \prob(E_b=1) n  B\log_2 |\mc{S}|- 1- n   \log_2 |\mc{S}|\nonumber\displaybreak[0]\\
 &-  \sum_{b=b_1}^{B-1} H(S^{n}_b|Y^{nB},S^{n}_{b+1},\ldots, S^{n}_B,C,E_b=0)  \label{eq:CchSCEleak6ce}\displaybreak[0]\\
 = & nB H(S)   - \Big(\prob(E_b=1) B + 1 \Big)  n \log_2 |\mc{S}|- 1\nonumber \displaybreak[0]\\
-&  \sum_{b=b_1}^{B-1} H(S^{n}_b|W^n_{1,b},W^n_{2,b},L_b,Y^{nB},S^{n}_{b+1},\ldots, S^{n}_B,C,E_b=0) \nonumber\displaybreak[0] \\
-&  \sum_{b=b_1}^{B-1} I(S^{n}_b ;W^n_{1,b},W^n_{2,b},L_b |Y^{nB},S^{n}_{b+1},\ldots, S^{n}_B,C,E_b=0) \label{eq:CchSCEleak7ce}\displaybreak[0]\\
 = & nB H(S)   - \Big(\prob(E_b=1) B + 1 \Big)  n \log_2 |\mc{S}|- 1\nonumber\displaybreak[0]\\
 -&  \sum_{b=b_1}^{B-1} H(S^{n}_b|W^n_{1,b},W^n_{2,b},L_b,Y^{nB},S^{n}_{b+1},\ldots, S^{n}_B,C,E_b=0) \label{eq:CchSCEleak8ce}\displaybreak[0]\\
 \geq & nB H(S)   - \Big(\prob(E_b=1) B + 1 \Big)  n \log_2 |\mc{S}|- 1 \nonumber\displaybreak[0]\\
 -&  \sum_{b=b_1}^{B-1} H(S^{n}_b|W^n_{1,b},W^n_{2,b},L_b,Y^{n}_b,E_b=0) , \label{eq:CchSCEleak9ce} 
\end{align}
where  \eqref{eq:CchSCEleak2ce} comes from the i.i.d. property of the channel states $S$;  \eqref{eq:CchSCEleak5ce} is inspired by the proof of Fano's inequality, see \cite[pp. 19]{ElGamalKim(book)11}; \eqref{eq:CchSCEleak8ce} comes from the non-error event $E_b=0$, that implies for all block $b\in \{1,\ldots,B-1\}$, the sequences $(W^n_{1,b},W^n_{2,b},L_b)$ are correctly decoded based on the observation of $Y^{nB}$, \mrb{hence} $  I(S^{n}_b ;W^n_{1,b},W^n_{2,b},L_b |Y^{nB},S^{n}_{b+1},\ldots, S^{n}_B,C,E_b=0) =0$; \eqref{eq:CchSCEleak9ce} comes from removing the conditioning over the sequences $(Y^{n}_{b_1},\ldots,Y^{n}_{b-1},Y^{n}_{b+1},\ldots,Y^{n}_{B},S^{n}_{b+1},\ldots, S^{n}_B,C)$ and the random code in the conditional entropy $H(S^{n}_b|W^n_{1,b},W^n_{2,b},L_b,Y^{n}_b,E_b=0) $.

In order to provide an upper bound on $H(S^{n}_b|W^n_{1,b},W^n_{2,b},L_b,Y^{n}_b,E_b=0) $, we fix an index $l \in \mc{M}_{\sf{L}}$, some typical sequences $(w_1^n,w_2^n,y^n) \in T_{\delta}(\QQ)$. We \mrb{define} the set 
\begin{align}
&\mc{S}^{\star}(w_1^n,w_2^n,y^n,l) = \bigg\{s^n \in  \mc{S}^n, \text{ s.t. }\nonumber\displaybreak[0]\\
& \Big\{( s^n, w_1^n,w_2^n,y^n) \in T_{\delta}(\QQ) \Big\}\cap \Big\{s^n \in \mc{B}(l) \Big\}\bigg\}
\end{align}
Since the code $C$ is random, the  above set $\mc{S}^{\star}(w_1^n,w_2^n,y^n,l) $ is \mrb{also} random  and \mrb{we have}
\begin{align}
&\E_c\bigg[ \Big|\mc{S}^{\star}(W_1^n,W_2^n,Y^n,L)\Big| \bigg] \nonumber\\
=& \E_c\Bigg[ \bigg| \bigg\{s^n \in  \mc{S}^n, \text{ s.t. }  \Big\{  ( S^n, W_1^n,W_2^n,Y^n)  \in T_{\delta}(\QQ)  \Big\}\nonumber\displaybreak[0]\\
&\cap \Big\{S^n \in \mc{B}(L) \Big\} \bigg\}\bigg| \Bigg]\displaybreak[0]\\
=&    \sum_{ S^n  \in  T_{\delta}(\QQ)  (W_1^n,W_2^n, Y^n)   } \E_c\Bigg[ \UN\Big( S^n \in \mc{B}(L)\Big) \Bigg]\displaybreak[0]\\
\leq&  \sum_{ S^n  \in \atop T_{\delta}(\QQ)  (W_1^n,W_2^n, Y^n)  } 2^{-n \cdot\textsf{R}_{\sf{L}} } 
\leq  2^{n \cdot(H(S|W_1,W_2,Y)  - \textsf{R}_{\sf{L}} +\varepsilon) },\label{eq:AchMarkovIneq2ce}
\end{align}
\mrb{where} \eqref{eq:AchMarkovIneq2ce} comes from the definition of the random code that induces a uniform distribution over the bins $ \mc{B}(l)$ and the properties of the typical sequences, see \cite[pp. 27]{ElGamalKim(book)11}.

By Markov's inequality, for all $l\in \mc{M}_L$ and for all $( w_1^n,w_2^n,y^n)  \in T_{\delta}(\QQ)  $ \mrb{we have}
\begin{align}
&\prob\bigg[ \Big|\mc{S}^{\star}(W_1^n,W_2^n,Y^n,L)\Big| \geq 2^{n \cdot(H(S|W_1,W_2,Y)  - \textsf{R}_{\sf{L}} +2\varepsilon) } \bigg] \nonumber\displaybreak[0]\\
\leq& \frac{\E_c\bigg[ \Big|\mc{S}^{\star}(W_1^n,W_2^n,Y^n,L)\Big| \bigg] }{2^{n \cdot(H(S|W_1,W_2,Y)  - \textsf{R}_{\sf{L}} +2\varepsilon) }}\displaybreak[0]\\
\leq& \frac{2^{n \cdot(H(S|W_1,W_2,Y)  - \textsf{R}_{\sf{L}} +\varepsilon) } }{2^{n \cdot(H(S|W_1,W_2,Y)  - \textsf{R}_{\sf{L}} +2\varepsilon) }}
\leq  2^{ - n \varepsilon } \leq \varepsilon.\label{eq:LeakMarkovce}
\end{align}
The last equation is valid for $n\geq n_5\mrb{= \frac{1}{\varepsilon} \log_2 \frac{1}{\varepsilon}}$. For each block $b \in \{1, \ldots,B-1\}$ we \mrb{define the event}
\begin{align}
F_b =& 
\begin{cases}
1 \text{ if } &\big|\mc{S}^{\star}(W_{1,b}^n,W_{2,b}^n,Y_b^n,L_b)\big| \\
&\geq 2^{n \cdot(  H(S|W_1,W_2,Y)  - \textsf{R}_{\sf{L}} +2\varepsilon) },\\
0 \text{ if }& \big|\mc{S}^{\star}(W_{1,b}^n,W_{2,b}^n,Y_b^n,L_b)\big| \\
&<      2^{n \cdot(  H(S|W_1,W_2,Y)  - \textsf{R}_{\sf{L}} +2\varepsilon) }.\\
\end{cases}
\end{align}
\mrb{Then for} each block $b \in \{b_1,\ldots,B-1\}$ we have
\begin{align}
&H(S^n_b|W^n_{1,b},W^n_{2,b},Y^{n}_b,L_b ,E_b=0) \nonumber \\
\leq& 1 + \prob(F_b=1)n \log_2|\mc{S}| \nonumber\displaybreak[0]\\
&+  H(S^n_b|W^n_{1,b},W^n_{2,b},Y^{n}_b,L_b ,E_b=0,F_b=0) \label{eq:MarkovLemma0ce}\displaybreak[0]\\
\leq& 1 + \varepsilon \log_2|\mc{S}| \nonumber\displaybreak[0]\\
&+  H(S^n_b|W^n_{1,b},W^n_{2,b},Y^{n}_b,L_b ,E_b=0,F_b=0) \label{eq:MarkovLemma1ce}\displaybreak[0]\\
\leq& 1 + \varepsilon \log_2|\mc{S}| +  \log_2 2^{n \cdot(H(S|W_1,W_2,Y)  - \textsf{R}_{\sf{L}} +2\varepsilon) } \label{eq:MarkovLemma2ce}\displaybreak[0]\\
=& n \cdot\bigg( H(S|W_1,W_2,Y)  - \textsf{R}_{\sf{L}}  + \frac{1}{n} + \varepsilon \log_2|\mc{S}|    +2\varepsilon \bigg) \label{eq:MarkovLemma3ce}\displaybreak[0]\\
=& n \cdot\bigg( H(S|W_1,W_2,Y)  - \textsf{E} + I(S;W_1,W_2,Y) + 2 \varepsilon  + \frac{1}{n}\nonumber\displaybreak[0]\\
& + \varepsilon \log_2|\mc{S}|    +2\varepsilon \bigg)\label{eq:MarkovLemma4ce}\displaybreak[0]\\
=& n \cdot\bigg( H(S)  - \textsf{E}  + \frac{1}{n} + \varepsilon  \log_2|\mc{S}|    +4\varepsilon \bigg),\label{eq:MarkovLemma5ce}
\end{align}
where \eqref{eq:MarkovLemma0ce} is inspired by the proof of Fano's inequality, see \cite[pp. 19]{ElGamalKim(book)11};  \eqref{eq:MarkovLemma1ce} comes from equation \eqref{eq:LeakMarkovce}, that corresponds to $\prob(F_b=1) \leq \varepsilon$; \eqref{eq:MarkovLemma2ce} comes from the definition of event $F_b=0$, that implies the set $\mc{S}^{\star}(w_1^n,w_2^n,y^n,l)$ has  cardinality bounded by $2^{n \cdot(  H(S|W_1,W_2,Y)  - \textsf{R}_{\sf{L}} +2\varepsilon) }$; \eqref{eq:MarkovLemma4ce} comes from the definition of the rate $\textsf{R}_{\sf{L}} = \textsf{E} - I(S;W_1,W_2,Y) - 2 \varepsilon$ stated in equation \eqref{eq:rateRLce}.

This provides the following lower bound:
\begin{align}
&n B \E_{c} \bigg[\mc{L}_{\textsf{e}}(c)  \bigg] \nonumber\displaybreak[0]\\
\geq& nB H(S)   - \Big(\prob(E_b=1) B + 1 \Big)  n \log_2 |\mc{S}|- 1\nonumber\displaybreak[0]\\
&-   \sum_{b=b_1}^{B-1} H(S^{n}_b|W^n_{1,b},W^n_{2,b},Y^{n}_b,E_b=0) \label{eq:LowerBoundFinal1ce}\displaybreak[0] \\
\geq& nB H(S)   - \Big(\prob(E_b=1) B + 1 \Big)  n \log_2 |\mc{S}|- 1\nonumber\displaybreak[0]\\
&-  (B-1) n \bigg( H(S)  - \textsf{E}  + \frac{1}{n} + \varepsilon  \log_2|\mc{S}|    +4\varepsilon \bigg) \label{eq:LowerBoundFinal2ce} \displaybreak[0]\\
\geq& nB H(S)   - \Big(\prob(E_b=1) B + 1 \Big)  n \log_2 |\mc{S}|- 1 \nonumber\displaybreak[0]\\
&-  B n \bigg( H(S)  - \textsf{E}  + \frac{1}{n} + \varepsilon  \log_2|\mc{S}|    +4\varepsilon \bigg) - n \log_2 |\mc{S}|\label{eq:LowerBoundFinal3ce}\displaybreak[0] \\
\geq& nB \bigg(\textsf{E} - \bigg(\prob(E_b=1) + \varepsilon + \frac{2 }{B}\bigg) \log_2 |\mc{S}| - \frac{2}{nB} - 4\varepsilon \bigg),\label{eq:LowerBoundFinal4ce}
\end{align}
where \eqref{eq:LowerBoundFinal1ce} comes from equation \eqref{eq:CchSCEleak9ce}; \eqref{eq:LowerBoundFinal2ce} comes from equation \eqref{eq:MarkovLemma5ce}; \eqref{eq:LowerBoundFinal3ce} comes the lower bound: 
\begin{align}
n\Big(H(S)  - \textsf{E}  + \frac{1}{n} + \varepsilon  \log_2|\mc{S}|    +4\varepsilon\Big) \geq - n  \textsf{E}\geq - n   \log_2 |\mc{S}|.
\end{align}
Equation \eqref{eq:LowerBoundFinal4ce} provides the lower bound on the expected state leakage rate.

 \textbf{Conclusion for Exact state leakage Rate:} 
 \begin{align}
\E_{c} \bigg[\mc{L}_{\textsf{e}}(c)  \bigg]   \leq & \textsf{E} + \frac{1}{n}  +  \max_b\prob(E_b=1) \log_2|\mc{Y}| + \frac{1}{B} \log_2|\mc{S}| \nonumber\displaybreak[0]\\
=&  \textsf{E} + \varepsilon_2, \label{eq:UpperBoundLeakce} \\
\E_{c} \bigg[\mc{L}_{\textsf{e}}(c)  \bigg] \geq&   \textsf{E} - \bigg(\max_b\prob(E_b=1) + \varepsilon + \frac{2 }{B}\bigg) \log_2 |\mc{S}| \nonumber\displaybreak[0]\\ 
&- \frac{2}{nB} - 4\varepsilon  = \textsf{E} - \varepsilon_1. \label{eq:LowerBoundLeakce} 
 \end{align}
\mrb{By defining $\varepsilon_1$ and $\varepsilon_2$ as in \eqref{eq:LowerBoundLeakce} and \eqref{eq:UpperBoundLeakce}, we have}
  \begin{align}
\Bigg| \E_{c} \Big[\mc{L}_{\textsf{e}}(c)  \Big]  -   \textsf{E} \Bigg| \leq&    \varepsilon_1 +  \varepsilon_2.\label{eq:BoundLeakce} 
  \end{align}

  %
 
\subsection{Conclusion of the achievability proof of Theorem \ref{theo:LeakageCE} }\label{sec:ConclusionTotalLengthCE}

 By choosing \mrb{appropriately the number of blocks $B$, the} error probability $\max_b\prob(E_b=1)$, \mrb{the block-length} $n$ and \mrb{the tolerance for the typical sequences $\delta$, we show the existence of} a code $c^{\star} \in   \C(n B,\mc{M})$  such that
\begin{align}
\frac{1}{n B} \sum_{b=2}^{B} \log_2 |\mc{M}| \geq& \textsf{R} -  \varepsilon_3,\\
 \PP_{\textsf{e}}(c^{\star})     \leq& \varepsilon_3,\\
\Bigg|  \mc{L}_{\textsf{e}}(c^{\star})     -   \textsf{E} \Bigg| \leq&   \varepsilon_3,
 \end{align}
 with $\varepsilon_3 =  \frac{1}{ B}  \log_2 |\mc{Y}| + 2 - 2 \Big( 1 - 4 \varepsilon \Big)^{B-1} +  \varepsilon_1 +  \varepsilon_2$. This concludes the achievability proof of  Theorem \ref{theo:LeakageCE}.
 

%

\section{Converse proof of Theorem \ref{theo:LeakageCE} }\label{sec:ConverseLeakageCE}

\subsection{Information Constraints }\label{sec:InformationConstraints}

Consider that the triple of rate, state leakage and distribution $(\textsf{R},\textsf{E},\mrb{\QQ_{SXYV}})$ is achievable \mrb{by using a sequence of codes with causal encoding}. \mrb{We simplify the notation by using $\QQ$ in place of $\QQ_{SXYV}$ and} we introduce the random  event of error $E \in \{0,1\}$ defined with respect to $\QQ$ as follows:
\begin{align}
E = \Bigg\{
\begin{array}{lll}
0 \text{ if }& 
(S^n,X^n,Y^n,V^n) \in T_{\delta}(\QQ) ,\\
1 \text{ if }& 
(S^n,X^n,Y^n,V^n) \notin T_{\delta}(\QQ).
\end{array}
\Bigg.\label{eq:ErrorEvent}
\end{align}
The event $E=1$ occurs if the sequences $(S^n,X^n,Y^n,V^n)\notin T_{\delta}(\QQ)$ for the target distribution $\QQ$. By definition \ref{def:CodeLeakageCE}, for all $\varepsilon>0$, there exists $\bar{n}\in\N^{\star}$ such that for all $n\geq \bar{n}$, there exists a code $c^{\star} \in \C(n,\mc{M})$ that satisfies the three following equations:
\begin{align}
\frac{\log_2 |\mc{M}|}{n}  \geq& \textsf{R} - \varepsilon, \label{eq:converse1ce}  \\
 \bigg| \mc{L}_{\textsf{e}}(c^{\star})  - \textsf{E} \bigg|  =& \bigg|  \frac{1}{n} I(S^n;Y^n)  - \textsf{E} \bigg| \leq  \varepsilon,   \label{eq:converse3ce} \\
\PP_{\textsf{e}}(c^{\star})  =& \prob\bigg( M \neq \hat{M} \bigg) + \prob\bigg(\Big|\Big|Q^n - \QQ \Big|\Big|_{1}> \varepsilon\bigg) \leq \varepsilon. \label{eq:converse2ce}  
\end{align}
We introduce the auxiliary random variables $ W_{1,i} =  ( M , S^{i-1} )$ and $ W_{2,i} =  Y_{i+1}^n $ that satisfy the Markov chains of the set of distribution $\Q_{\sf{c}}$ for all $i \in \{1,\ldots,n\}$:
\begin{align}
&S_i \text{ independent of } W_{1,i}, \label{eq:ConverseCEMarkov1} \\
&X_i -\!\!\!\!\minuso\!\!\!\!- (S_i ,W_{1,i})-\!\!\!\!\minuso\!\!\!\!-  W_{2,i} ,  \label{eq:ConverseCEMarkov2}  \\ 
&Y_i -\!\!\!\!\minuso\!\!\!\!- (X_i , S_i ) -\!\!\!\!\minuso\!\!\!\!-  ( W_{1,i},W_{2,i} ) ,  \label{eq:ConverseCEMarkov3}  \\
&V_i -\!\!\!\!\minuso\!\!\!\!- (Y_i,  W_{1,i},W_{2,i} ) -\!\!\!\!\minuso\!\!\!\!-  (S_i ,X_i) ,  \label{eq:ConverseCEMarkov4} 
\end{align}
where \eqref{eq:ConverseCEMarkov1} comes from the i.i.d. property of the source that induces the independence between $S_i$ and  $  ( M ,  S^{i-1} ) = W_{1,i} $; \eqref{eq:ConverseCEMarkov2} comes from Lemma  \ref{lemma1:ConverseCEMarkov1} \mrb{since the encoding is causal and the channel is memoryless}; \eqref{eq:ConverseCEMarkov3} \mrb{also} comes from the memoryless property of the channel $ \mc{T}_{Y|XS}$; \eqref{eq:ConverseCEMarkov4} comes from Lemma \ref{lemma2:ConverseCEMarkov1} \mrb{since the encoding is causal, the decoding is non-causal and the channel is memoryless.}

We introduce the random variable $T$ that is uniformly distributed over the indices $\{1 , \ldots, n\}$ and \mrb{we denote by} $W_{1,T}$, $W_{2,T}$, $S_T$, $X_T$, $Y_T$, $V_T$ \mrb{the corresponding averaged random variables}. The  auxiliary random variables $W_1=(W_{1,T} , T)$ and $W_2 = W_{2,T}$ belong to the set of distributions $\Q_{\sf{c}}$  and satisfy the three information constraints of Theorem \ref{theo:LeakageCE}:
\begin{align}
 I(S;W_1,W_2,Y) \leq \textsf{E}&\leq H(S),\label{eq:theoremCEconverse2b}\\
 \textsf{R} + \textsf{E} &\leq I(W_1,S; Y). \label{eq:theoremCEconverse3b}
\end{align}

 \textbf{First Constraint:}
\begin{align}
n \textsf{E} \geq & I(S^n ; Y^n) -  n\varepsilon \label{eq:SecondConverse00ce}\displaybreak[0]\\
=&I(S^n ; Y^n,M) - I(S^n ; M|Y^n) -  n\varepsilon \displaybreak[0]\\
\geq& \sum_{i=1}^n  I(S_i;Y^n , M | S^{i-1}) -  H(M | Y^n)-  n\varepsilon \label{eq:SecondConverse3ce}\displaybreak[0]\\
\geq& \sum_{i=1}^n  I(S_i;Y^n , M | S^{i-1}) -  n2\varepsilon \label{eq:SecondConverse4ce}\displaybreak[0]\\
=& \sum_{i=1}^n  I(S_i;Y^n , M , S^{i-1}) -    n2\varepsilon 
\displaybreak[0]\\
\geq& \sum_{i=1}^n  I(S_i;Y_{i+1}^n , M , S^{i-1},Y_i) -    n2\varepsilon \label{eq:SecondConverse6ce}\displaybreak[0]\\
=& \sum_{i=1}^n  I(S_i; W_{1,i} , W_{2,i},Y_i ) -    n2\varepsilon \label{eq:SecondConverse7ce}\displaybreak[0]\\
=& n   I(S_T; W_{1,T} , W_{2,T} ,Y_T| T ) -    n2\varepsilon \label{eq:SecondConverse8ce}\displaybreak[0]\\
=& n   I(S_T; W_{1,T} , W_{2,T},Y_T, T ) -    n2\varepsilon \label{eq:SecondConverse9ce}\displaybreak[0]\\
= & n  \bigg(  I(S_T ; W_1,W_2,Y_T)  -  2\varepsilon \bigg)  \label{eq:SecondConverse10ce}\displaybreak[0]\\
\geq& n  \bigg(  I(S_T ; W_1,W_2,Y_T | E=0)  -  3\varepsilon \bigg)  \label{eq:SecondConverse11ce}\displaybreak[0]\\
\geq & n  \bigg(  I(S ; W_1,W_2,Y)  -  4\varepsilon \bigg)  \label{eq:SecondConverse12ce},
\end{align}
where  \eqref{eq:SecondConverse00ce} comes from the definition of achievable state leakage rate $\textsf{E}$ in \eqref{eq:converse3ce}; \eqref{eq:SecondConverse4ce} comes from equation \eqref{eq:converse2ce} and  Fano's inequality, see \cite[pp. 19]{ElGamalKim(book)11}; \eqref{eq:SecondConverse6ce}  comes from the i.i.d. property of the channel states that implies $S_i$ is independent of $S^{i-1}$; \eqref{eq:SecondConverse7ce}  comes from the introduction of the auxiliary random variables $ W_{1,i} =  ( M,S^{i-1}  )$ and $ W_{2,i} =   Y_{i+1}^n $, for all $i \in \{1,\ldots,n\}$; \eqref{eq:SecondConverse8ce}   comes from the introduction of the uniform random variable $T$ over $\{1 , \ldots, n\}$ and the corresponding mean random variables  $S_T$,  $W_{1,T}$, $W_{2,T}$,  $Y_T$; \eqref{eq:SecondConverse9ce}   comes from the independence between $T$ and $S_T$; \eqref{eq:SecondConverse10ce}   comes from \mrb{the identification of} the auxiliary random variables $W_1=(W_{1,T} , T)$ and $W_2 = W_{2,T}$; \eqref{eq:SecondConverse11ce}  comes from the empirical coordination requirement as stated in Lemma \ref{lemma:ErrorEventCoordination01}. The sequences of symbols $(S^n,X^n,Y^n,V^n)$ are not jointly typical with small error probability $\prob(E=1)$; \eqref{eq:SecondConverse12ce} comes from Lemma \ref{lemma:3} since the distribution $ \PP_{S_TX_TY_TV_T|E=0}$ is close to the target distribution $\QQ_{SXYV}$. The \mrb{result of} \cite[Lemma 2.7, pp. 19]{CsiszarKorner(Book)11} concludes.

 \textbf{Second Constraint:}
\begin{align}
n \textsf{E} \leq & I(S^n ; Y^n) + n\varepsilon \leq  H(S^n)  + n\varepsilon \label{eq:ThirdConverse2ce}\\
= & n  \bigg( H(S) + \varepsilon \bigg), \label{eq:ThirdConverse3ce}
\end{align}
where \eqref{eq:ThirdConverse2ce} comes from the definition of the achievable state leakage rate $\textsf{E}$, stated in equation \eqref{eq:converse3ce}; \eqref{eq:ThirdConverse3ce} comes from the i.i.d. property of the channel states $S$.

 \textbf{Third Constraint:}
\begin{align}
&n \Big( \textsf{E} +  \textsf{R} \Big) \nonumber\displaybreak[0]\\
\leq & I(S^n ; Y^n)  + H(M) +  n2\varepsilon \displaybreak[0]\label{eq:FourthConverse1ce}\\
= &  I(S^n ; Y^n)  + I(M ; Y^n ) +  H(M |Y^n) +  n2\varepsilon \label{eq:FourthConverse2ce}\\
\leq &  I(S^n ; Y^n)  + I(M ; Y^n ) +  n3\varepsilon \displaybreak[0]\label{eq:FourthConverse3ce}\\
\leq &  I(S^n ; Y^n)  + I(M ; Y^n | S^n) +  n3\varepsilon \displaybreak[0]\label{eq:FourthConverse4ce}\\
=&  I(S^n ,M ; Y^n)  +  n3\varepsilon 
\displaybreak[0]\label{eq:FourthConverse5ce}\\
\leq & \sum_{i=1}^n I(S_i ,M , S^{i-1}  ; Y_i  ) \nonumber\\
&+  \sum_{i=1}^n I( S^n_{i+1}   , Y^n_{i+1} ; Y_i  | S_i ,M , S^{i-1}  )  +  n3\varepsilon \label{eq:FourthConverse8ce}\\
= & \sum_{i=1}^n I(S_i ,M , S^{i-1}  ; Y_i  )   +  n3\varepsilon \displaybreak[0]\label{eq:FourthConverse9ce}\\
= &  \sum_{i=1}^n I(S_i ,W_{1,i} ; Y_i)  +  n\cdot3\varepsilon  \displaybreak[0]\label{eq:FourthConverse11ce}\\
= &  n   I(S_T ,W_{1,T} ; Y_T | T)   +  n\cdot3\varepsilon \displaybreak[0]\label{eq:FourthConverse12ce}\\
\leq &  n   I(S_T ,W_{1,T} ,T; Y_T)   +  n\cdot3\varepsilon \displaybreak[0]\label{eq:FourthConverse13ce}\\
\leq &  n   \bigg( I(S_T ,W_{1} ; Y_T)   +  3\varepsilon \bigg)\displaybreak[0]\label{eq:FourthConverse14ce}\\
\leq &  n   \bigg( I(S_T ,W_{1} ; Y_T |E = 0)   +  4\varepsilon \bigg)\displaybreak[0]\label{eq:FourthConverse15ce}\\
\leq &  n   \bigg( I(S ,W_{1} ; Y )   +  5\varepsilon \bigg)\label{eq:FourthConverse16ce},
\end{align}
where \eqref{eq:FourthConverse1ce} comes from the definition of achievable rate and information  leakage $( \textsf{R} ,  \textsf{E})$, stated in equations \eqref{eq:converse1ce} and  \eqref{eq:converse3ce}; \eqref{eq:FourthConverse3ce} comes from equation \eqref{eq:converse2ce} and  Fano's inequality, stated in \cite[pp. 19]{ElGamalKim(book)11}; \eqref{eq:FourthConverse5ce}  comes from the independence between the message $M$ and the channel states $S^n$, hence $I(M ; Y^n ) \leq I(M ; Y^n ,S^n ) = I(M ; Y^n |S^n ) $; \eqref{eq:FourthConverse8ce}  comes from the properties of the mutual information; \eqref{eq:FourthConverse9ce}  comes from the Markov chain $Y_i   -\!\!\!\!\minuso\!\!\!\!- ( S_i ,M , S^{i-1} )   -\!\!\!\!\minuso\!\!\!\!- (S^n_{i+1}   , Y^n_{i+1} )$, stated in Lemma \ref{lemma3:ConverseCEMarkov1}; \eqref{eq:FourthConverse11ce}  comes from the \mrb{identification} of the auxiliary random variable $ W_{1,i} =  ( M,S^{i-1}  )$, $\forall i \in \{1,\ldots,n\}$; \eqref{eq:FourthConverse12ce}  comes from the introduction of $T$, $S_T$,  $W_{1,T}$, $Y_T$;  \eqref{eq:FourthConverse14ce}   comes from \mrb{the identification} $W_1=(W_{1,T} , T)$ and $W_2 = W_{2,T}$; \eqref{eq:FourthConverse15ce}  comes from the empirical coordination requirement as stated in Lemma \ref{lemma:ErrorEventCoordination0}; \eqref{eq:FourthConverse16ce} comes from Lemma \ref{lemma:3} \mrb{since} $(S^n,X^n,Y^n,V^n)$ are jointly typical, hence $ \PP_{S_TX_TY_TV_T|E=0}$ is close to $\QQ_{SXYV}$. The \mrb{result of} \cite[Lemma 2.7, pp. 19]{CsiszarKorner(Book)11} concludes.

 \textbf{Conclusion:} 
If the triple of rate, state leakage and distribution$(\textsf{R},\textsf{E},\mrb{\QQ_{SXYV}})$ is achievable with causal encoding, then  the following equations are satisfied  for all $\varepsilon>0$:
\begin{align}
 I(S;W_1,W_2,Y) - 4 \varepsilon \leq \textsf{E}&\leq H(S) + \varepsilon,\\
\textsf{R} + \textsf{E} &\leq I(S ,W_{1} ; Y)+ 5 \varepsilon.
\end{align}
\mrb{We recover \eqref{eq:theoremCEachie2} and  \eqref{eq:theoremCEachie3}  as $\varepsilon\to0$  and this} concludes the converse proof of Theorem \ref{theo:LeakageCE}.

\begin{remark}
For the converse proof of Theorem \ref{theo:LeakageCE}, the \mrb{code with} causal encoding is not necessarily deterministic. The same optimal performances can be obtained by considering a stochastic \mrb{code with} causal encoding.
\end{remark}



%

\subsection{Lemmas for the converse proof with empirical coordination}\label{lemma:CE}

\begin{lemma}\label{lemma1:ConverseCEMarkov1}
The causal encoding function and the memoryless property of the channel induce the Markov chain property
$ X_i -\!\!\!\!\minuso\!\!\!\!- (S_i ,  W_{1,i}) -\!\!\!\!\minuso\!\!\!\!-    W_{2,i}$.
This Markov chain is satisfied with $ W_{1,i} =( M , S^{i-1} )$ and $ W_{2,i} =  Y_{i+1}^n $, for all $i \in \{1,\ldots,n\}$.
\end{lemma}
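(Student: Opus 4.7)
The plan is to verify the Markov chain $X_i -\!\!\!\!\minuso\!\!\!\!- (S_i, W_{1,i}) -\!\!\!\!\minuso\!\!\!\!- W_{2,i}$ directly from the factorization \eqref{eq:GeneralDistribution} of the joint distribution, by showing that conditioning on $(M, S^i)$ decouples $X_i$ from the future channel outputs $Y_{i+1}^n$.

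First, I would fix $i \in \{1,\ldots,n\}$ and start from \eqref{eq:GeneralDistribution}, marginalizing over $(X^{i-1}, X_{i+1}^n, V^n, \hat{M})$ to obtain the joint distribution of $(M, S^n, X_i, Y_{i+1}^n)$. Using the causality of the encoder, the factor $f_{X_i|S^iM}$ depends only on $(M, S^i)$, while the factors $f_{X_j|S^jM}$ and $\mc{T}_{Y_j|X_jS_j}$ for $j > i$ depend only on variables indexed by times strictly greater than $i$. I would then write
\begin{align*}
\PP(x_i, y_{i+1}^n \mid m, s^n)
= f_{X_i|S^iM}(x_i|s^i,m) \cdot \PP(y_{i+1}^n \mid m, s^n),
\end{align*}
where the second factor is obtained by marginalizing $(X_{i+1}^n)$ out of $\prod_{j>i} f_{X_j|S^jM}\,\mc{T}_{Y_j|X_jS_j}$.

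Next, I would invoke the i.i.d. property of the state to write $\PP(s_{i+1}^n \mid m, s^i) = \prod_{j>i}\PP_{S_j}(s_j)$, which together with the previous display yields
\begin{align*}
\PP(x_i, y_{i+1}^n \mid m, s^i)
= f_{X_i|S^iM}(x_i|s^i,m) \cdot \PP(y_{i+1}^n \mid m, s^i).
\end{align*}
Since the first factor is precisely $\PP(x_i \mid m, s^i) = \PP(x_i \mid s_i, w_{1,i})$, this establishes the conditional independence $X_i \perp Y_{i+1}^n \mid (S_i, W_{1,i})$, i.e., the desired Markov chain.

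The argument is essentially bookkeeping on the factorization, with the two key ingredients being (i) the causality of the encoder, which forbids $f_i$ from depending on future states, and (ii) the i.i.d. state assumption, which ensures that future states carry no information about $(M, S^i)$. No step looks genuinely difficult; the only subtlety to watch is to carry the marginalization over $X_{i+1}^n$ cleanly so that the factor $\PP(y_{i+1}^n \mid m, s^i)$ does not acquire a spurious dependence on $x_i$ through the shared conditioning.
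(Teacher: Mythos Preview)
Your proof is correct and follows essentially the same route as the paper's: both exploit the factorization \eqref{eq:GeneralDistribution} to show that $X_i$ and $Y_{i+1}^n$ are conditionally independent given $(M,S^i)$. The paper computes $\PP(y_{i+1}^n\mid s_i,m,s^{i-1},x_i)$ directly by introducing and summing over $(s_{i+1}^n,x_{i+1}^n)$, whereas you first establish the factorization conditional on the full state sequence $(M,S^n)$ and then marginalize $s_{i+1}^n$; these are two organizations of the same calculation.

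One minor remark: your ingredient (ii), the i.i.d.\ state assumption, is not actually needed for this particular Markov chain. Your marginalization step
\[
\sum_{s_{i+1}^n}\PP(s_{i+1}^n\mid m,s^i)\,\PP(y_{i+1}^n\mid m,s^n)=\PP(y_{i+1}^n\mid m,s^i)
\]
is just the law of total probability and holds regardless of the state distribution. This is consistent with the lemma statement (and the paper's proof), which invokes only the causal encoder and the memoryless channel. The i.i.d.\ property is used elsewhere in the converse, namely for the independence of $S_i$ and $W_{1,i}$ in \eqref{eq:ConverseCEMarkov1}, but not here.
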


\begin{proof}[Lemma \ref{lemma1:ConverseCEMarkov1}] 
 \mrb{We identify $w_{1,i} =  ( m , s^{i-1} )$, $ w_{2,i} = y_{i+1}^n $ and} for all $(s^n,x^n,w_1^n,w_2^n,y^n,m)$
\begin{align}
&\PP( w_{2,i}   | s_i , w_{1,i} ,x_i) =\PP( y_{i+1}^n   | s_i , m , s^{i-1} ,x_i) \\
=&\sum_{s_{i+1}^n,x_{i+1}^n}  \PP( s_{i+1}^n,x_{i+1}^n  | s_i , m , s^{i-1} ,x_i) \nonumber\displaybreak[0]\\
&\times\PP( y_{i+1}^n  |s_{i+1}^n,x_{i+1}^n ,s_i , m , s^{i-1} ,x_i)   \label{eq:lemma1ConverseCE2}  \\
=&\sum_{s_{i+1}^n,x_{i+1}^n}  \PP( s_{i+1}^n,x_{i+1}^n  | s_i , m , s^{i-1} )  \PP( y_{i+1}^n   | s_{i+1}^n,x_{i+1}^n)   \label{eq:lemma1ConverseCE4}  \\
=&\sum_{s_{i+1}^n,x_{i+1}^n} \PP( s_{i+1}^n,x_{i+1}^n ,y_{i+1}^n   | s_i , m , s^{i-1} )\displaybreak[0]\\
&   = \PP( y_{i+1}^n   | s_i , m , s^{i-1} ) = \PP( w_{2,i}   | s_i , w_{1,i}) ,  \label{eq:lemma1ConverseCE5}
\end{align}
where \eqref{eq:lemma1ConverseCE4} comes from the causal encoding function that induces the Markov chain $X_i -\!\!\!\!\minuso\!\!\!\!- (S_i , M , S^{i-1}) -\!\!\!\!\minuso\!\!\!\!-  (S_{i+1}^n,X_{i+1}^n)$
\mrb{and} the memoryless channel $ Y_{i+1}^n -\!\!\!\!\minuso\!\!\!\!- (S_{i+1}^n,X_{i+1}^n ) -\!\!\!\!\minuso\!\!\!\!-  ( S_i , M , S^{i-1} ,X_i)$. 
\end{proof}

%

\begin{lemma}\label{lemma2:ConverseCEMarkov1}
The causal encoding function, the non-causal decoding function and the memoryless channel induce 
$ V_i  -\!\!\!\!\minuso\!\!\!\!- ( Y_i, W_{1,i} , W_{2,i}   )   -\!\!\!\!\minuso\!\!\!\!-  ( S_i, X_i)$.
This Markov chain is satisfied with $ W_{1,i} =( M , S^{i-1} )$ and $ W_{2,i} =  Y_{i+1}^n $, for all $i \in \{1,\ldots,n\}$.
\end{lemma}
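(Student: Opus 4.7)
The plan is to mirror the style of Lemma~\ref{lemma1:ConverseCEMarkov1} by a direct computation of conditional probabilities. With the identification $w_{1,i}=(m,s^{i-1})$ and $w_{2,i}=y_{i+1}^n$, I would establish
\begin{align*}
\PP(v_i\mid y_i,w_{1,i},w_{2,i},s_i,x_i) = \PP(v_i\mid y_i,w_{1,i},w_{2,i}).
\end{align*}
The key observation is that the stochastic decoder $h$ produces $V^n$ from the \emph{entire} output sequence $Y^n$, so $V_i$ depends in particular on $Y^{i-1}$, which is missing from the conditioning set. I would therefore marginalize over $Y^{i-1}$ (and over $V^{i-1},V_{i+1}^n$) to obtain
\begin{align*}
\PP(v_i\mid y_i,w_{1,i},w_{2,i},s_i,x_i) = \sum_{y^{i-1}} \PP(y^{i-1}\mid y_i,w_{1,i},w_{2,i},s_i,x_i)\, \tilde h_i(v_i\mid y^n),
\end{align*}
where $\tilde h_i(v_i\mid y^n)$ denotes the marginal of $h(v^n\mid y^n)$ over the other coordinates of $V^n$ and does not depend on $(s_i,x_i)$ directly.

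The heart of the argument then reduces to proving $\PP(y^{i-1}\mid y_i,w_{1,i},w_{2,i},s_i,x_i) = \PP(y^{i-1}\mid y_i,w_{1,i},w_{2,i})$, i.e.\ that $(S_i,X_i)\perp Y^{i-1}\mid (M,S^{i-1},Y_i,Y_{i+1}^n)$. I would establish this by exploiting the factorization of \eqref{eq:GeneralDistribution} into a ``past'' block and a ``present--future'' block, namely
\begin{align*}
\PP(m,s^n,x^n,y^n)  = \PP(m)\Bigg(\prod_{j=1}^{i-1}\PP(s_j)f_j(x_j|m,s^j)\mc{T}(y_j|x_j,s_j)\Bigg) \Bigg(\prod_{j=i}^{n}\PP(s_j)f_j(x_j|m,s^j)\mc{T}(y_j|x_j,s_j)\Bigg).
\end{align*}
Summing out the unobserved $X^{i-1}, X_{i+1}^n, S_{i+1}^n$, the first factor yields a function $\tilde A(y^{i-1}\mid m,s^{i-1})$ depending only on $(y^{i-1},m,s^{i-1})$, while the second factor yields $\tilde B(s_i,x_i,y_i,y_{i+1}^n\mid m,s^{i-1})$, giving the desired conditional independence. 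Causal encoding is essential here so that $X_j$ for $j<i$ does not depend on $(S_i,\ldots,S_n)$, and the memoryless channel is essential so that $Y_j$ depends only on $(X_j,S_j)$.

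Combining these two steps, the right-hand side of the marginalization becomes $\sum_{y^{i-1}} \PP(y^{i-1}\mid y_i,w_{1,i},w_{2,i})\, \tilde h_i(v_i\mid y^n) = \PP(v_i\mid y_i,w_{1,i},w_{2,i})$, which yields the claim. The main obstacle is the bookkeeping of marginalizing over the many unobserved variables $X^{i-1}, X_{i+1}^n, S_{i+1}^n, V^{i-1}, V_{i+1}^n$ while preserving the ``past/future'' factorization; once the joint distribution is written in the factorized form above, the independence structure reads off immediately and the Markov chain follows.
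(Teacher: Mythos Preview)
Your proposal is correct and follows essentially the same approach as the paper: both marginalize over the missing past variables and use the causal-encoding, memoryless-channel, and non-causal-decoding structure to eliminate $(S_i,X_i)$ from the conditioning. The only cosmetic difference is that the paper introduces $X^{i-1}$ explicitly alongside $Y^{i-1}$ in the marginalization sum and then simplifies three conditional distributions separately (equations \eqref{eq:lemma:CausalEnc2}--\eqref{eq:lemma:CausalEnc4}), whereas you absorb $X^{i-1}$ into the joint-factorization step and reduce everything to the single conditional-independence claim $Y^{i-1}\perp(S_i,X_i)\mid(M,S^{i-1},Y_i,Y_{i+1}^n)$.
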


\begin{proof}[Lemma \ref{lemma2:ConverseCEMarkov1}] 
 \mrb{We identify $w_{1,i} =  ( m , s^{i-1} )$, $ w_{2,i} = y_{i+1}^n $ and for all $(s^n,x^n,w_1^n,w_2^n,y^n,v^n,m)$}
\begin{align}
&\PP(v_i | y_i , w_{1,i} , w_{2,i}  ,  s_i, x_i) 
=\PP(v_i | y_i , m, s^{i-1}     ,  y^n_{i+1}  ,  s_i, x_i) \nonumber \\
=&   \sum_{x^{i-1}, y^{i-1}}  \PP(x^{i-1}  | y_i , m,  s^{i-1}  ,  y^n_{i+1}  ,  s_i, x_i )\nonumber\\
&\times  \PP(  y^{i-1}  |y_i , m, s^{i-1}  ,  y^n_{i+1}  ,  s_i, x_i ,  x^{i-1} ) \nonumber\\
&\times \PP(v_i  |  y_i , m, s^{i-1}  ,  y^n_{i+1}  ,  s_i, x_i  , x^{i-1} , y^{i-1}  ) .
 \end{align}
\mrb{We} can remove $( s_i , x_i)$, in the three conditional distributions
 \begin{align}
&\PP(x^{i-1}  | y_i ,m,  s^{i-1}  ,  y^n_{i+1}  ,  s_i, x_i )  = \PP(x^{i-1}  |m, s^{i-1} ) , \label{eq:lemma:CausalEnc2}\\
&\PP(  y^{i-1}  |y_i , m, s^{i-1}  ,  y^n_{i+1}  ,  s_i, x_i  , x^{i-1} )   \nonumber\\
&=  \PP(  y^{i-1}   | s^{i-1}  , x^{i-1}) ,  \label{eq:lemma:CausalEnc3}\\
&\PP(v_i  | y_i , m, s^{i-1}  ,  y^n_{i+1}  ,  s_i, x_i  , x^{i-1}  , y^{i-1}  )\nonumber\\
& = \PP(v_i  | y_i  ,  y^n_{i+1}   , y^{i-1}  ),  \label{eq:lemma:CausalEnc4}
\end{align}
\normalsize
where \eqref{eq:lemma:CausalEnc2} comes from the causal encoding that induces  $X^{i-1} -\!\!\!\!\minuso\!\!\!\!- (M,  S^{i-1})    -\!\!\!\!\minuso\!\!\!\!- (Y_i ,  Y^n_{i+1}  , X_i,   S_i )$; \eqref{eq:lemma:CausalEnc3} comes from the memoryless channel; \eqref{eq:lemma:CausalEnc4} comes from the non-causal decoding  that induces $V_i -\!\!\!\!\minuso\!\!\!\!- (Y_i  ,  Y^n_{i+1}   , Y^{i-1} )    -\!\!\!\!\minuso\!\!\!\!- (M, S^{i-1}, S_i, X_i, X^{i-1} )$. Hence we have for all $(s^n,x^n,w_1^n,w_2^n,y^n,v^n,m)$:
\begin{align}
& \PP(v_i | y_i , w_{1,i} , w_{2,i}  ,  s_i, x_i)\nonumber\\
=&   \sum_{x^{i-1}, y^{i-1} } \PP( v_i , x^{i-1}  , y^{i-1}   | y_i , m, s^{i-1}  ,  y^n_{i+1}  )\\
=& \PP(v_i | y_i  ,m  , s^{i-1}  ,  y^n_{i+1}   )  =  \PP(v_i | y_i  ,w_{1,i} , w_{2,i}  ).
 \end{align}
\mrb{This} concludes the proof of Lemma \ref{lemma2:ConverseCEMarkov1}.
\end{proof}

%

\begin{lemma}\label{lemma3:ConverseCEMarkov1}
The causal encoding function and the memoryless property of the channel  induce 
$ Y_i  -\!\!\!\!\minuso\!\!\!\!-  ( S_i ,M ,  S^{i-1}) -\!\!\!\!\minuso\!\!\!\!- ( S_{i+1}^n , Y_{i+1}^n  )$, $\forall i \in \{1,\ldots,n\}$.
\end{lemma}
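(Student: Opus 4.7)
The plan is to follow the same computational style as in Lemmas~\ref{lemma1:ConverseCEMarkov1} and~\ref{lemma2:ConverseCEMarkov1}: identify $w_{1,i}=(m,s^{i-1})$ and show by direct manipulation of the joint pmf that $\PP(y_i\mid s_i,m,s^{i-1},s_{i+1}^n,y_{i+1}^n)=\PP(y_i\mid s_i,m,s^{i-1})$. The plan is to marginalize out the input sequence $x^n$ and the past outputs $y^{i-1}$, and then to exploit the causal factorization of the encoder together with the memoryless factorization of the channel so that the conditioning on $(s_{i+1}^n,y_{i+1}^n)$ drops out.

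First, I will write out the full joint law induced by the code,
\begin{align*}
\PP(m,s^n,x^n,y^n) \;=\; \PP(m)\prod_{j=1}^n \PP(s_j)\, f_{X_j|MS^j}(x_j\mid m,s^j)\, \mc{T}_{Y_j|X_jS_j}(y_j\mid x_j,s_j),
\end{align*}
and then compute $\PP(m,s^n,y_i,y_{i+1}^n)=\sum_{x^n,y^{i-1}}\PP(m,s^n,x^n,y^n)$. Using the identities $\sum_{y_j}\mc{T}(y_j\mid x_j,s_j)=1$ and $\sum_{x_j}f(x_j\mid m,s^j)=1$, the blocks for $j<i$ telescope to one, leaving a product of two factors,
\begin{align*}
A(m,s^i,y_i) &\;=\; \sum_{x_i} f(x_i\mid m,s^i)\,\mc{T}(y_i\mid x_i,s_i),\\
B(m,s^n,y_{i+1}^n) &\;=\; \sum_{x_{i+1}^n}\prod_{j>i} f(x_j\mid m,s^j)\,\mc{T}(y_j\mid x_j,s_j).
\end{align*}

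Dividing by $\PP(m,s^n,y_{i+1}^n)=\sum_{y'_i} A(m,s^i,y'_i)\cdot B(m,s^n,y_{i+1}^n)=B(m,s^n,y_{i+1}^n)$ then cancels the $B$-factor and yields $\PP(y_i\mid s^n,m,y_{i+1}^n)=A(m,s^i,y_i)=\PP(y_i\mid m,s^i)$, which is exactly the desired Markov property with $w_{1,i}=(m,s^{i-1})$. There is no real obstacle here; the only subtlety is that causal encoding at times $j>i$ allows $X_j$ to depend on the entire prefix $S^j$ (which overlaps $S_i$), so $B$ genuinely depends on $s^n$ rather than only on $s_{i+1}^n$. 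This does not affect the argument because $B$ cancels in the ratio, and the residual dependence of $A$ is precisely on $(s_i,m,s^{i-1},y_i)$, as required.
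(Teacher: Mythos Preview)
Your proof is correct and follows essentially the same approach as the paper: both exploit the causal factorization of the encoder and the memoryless channel to show that $\PP(y_i\mid s_i,m,s^{i-1},s_{i+1}^n,y_{i+1}^n)$ reduces to $\sum_{x_i} f(x_i\mid m,s^i)\,\mc{T}(y_i\mid x_i,s_i)=\PP(y_i\mid s_i,m,s^{i-1})$. The only cosmetic difference is that the paper introduces $x_i$ alone and invokes the two intermediate Markov chains $X_i -\!\!\!\!\minuso\!\!\!\!- (S_i,M,S^{i-1}) -\!\!\!\!\minuso\!\!\!\!- (S_{i+1}^n,Y_{i+1}^n)$ and $Y_i -\!\!\!\!\minuso\!\!\!\!- (X_i,S_i) -\!\!\!\!\minuso\!\!\!\!- (M,S^{i-1},S_{i+1}^n,Y_{i+1}^n)$, whereas you marginalize out the full $(x^n,y^{i-1})$ and obtain the same cancellation via the explicit $A\cdot B$ factorization.
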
 
 
\begin{proof}[Lemma \ref{lemma3:ConverseCEMarkov1}] 
We have the following equations for all $(s^n,x^n,y^n,m)$:
\begin{align}
&\PP( y_i  |   s_i ,m ,  s^{i-1} ,  s_{i+1}^n , y_{i+1}^n  ) \displaybreak[0]\\
=& \sum_{x_i}  \PP(x_i ,  y_i  |   s_i ,m ,  s^{i-1} ,  s_{i+1}^n , y_{i+1}^n ) \\
=&\sum_{x_i}  \PP(x_i   |   s_i ,m ,  s^{i-1} ,  s_{i+1}^n , y_{i+1}^n   )  \nonumber \\
&\times \PP(  y_i  | x_i ,  s_i ,m ,  s^{i-1} ,  s_{i+1}^n , y_{i+1}^n   ) \label{eq:lemma3ConverseCE1}  \\
=&\sum_{x_i}  \PP(x_i   |   s_i ,m ,  s^{i-1}  )  
 \PP(  y_i  | x_i ,  s_i )  \displaybreak[0]\\
 =& \sum_{x_i}  \PP(x_i ,  y_i  |   s_i ,m ,  s^{i-1}  ) =   \PP(  y_i  |  s_i ,m ,  s^{i-1}  ),\label{eq:lemma3ConverseCE3}  
\end{align}
where \eqref{eq:lemma3ConverseCE3} comes from $X_i -\!\!\!\!\minuso\!\!\!\!- ( S_i ,M ,  S^{i-1}) -\!\!\!\!\minuso\!\!\!\!-  ( S_{i+1}^n , Y_{i+1}^n )$ \mrb{and} 
$Y_i -\!\!\!\!\minuso\!\!\!\!-  (X_i ,  S_i) -\!\!\!\!\minuso\!\!\!\!- (M ,  S^{i-1} ,  S_{i+1}^n , Y_{i+1}^n )$. This concludes the proof of Lemma \ref{lemma3:ConverseCEMarkov1}.
\end{proof}

%

The random \mrb{error} event $E \in \{0,1\}$ of the following Lemmas is defined in  \eqref{eq:ErrorEvent}, \mrb{in which} $E=1$ if $(S^n,X^n,Y^n,V^n)\notin T_{\delta}(\QQ)$.

%
\begin{lemma}\label{lemma:ErrorEventCoordination01}
Fix a distribution $\QQ_{S_TW_1W_2X_TY_TV_T} \in \Q_{\sf{c}}$ and suppose that the error probability $\prob(E=1)$ is small enough such that $\prob(E=1) \cdot \log_2 |\mc{S} |   + h_b\Big(\prob(E=1)\Big)\leq \varepsilon$, \mrb{where $h_b$ denotes the binary entropy function. Then} we have
\begin{align}
 I(S_T ; W_1,W_2,Y_T) \geq& I(S_T ; W_1,W_2,Y_T | E = 0)   -  \varepsilon.
\end{align}
\end{lemma}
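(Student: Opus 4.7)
The strategy is to expand the mutual information $I(S_T;W_1,W_2,Y_T)$ by conditioning on the binary error indicator $E$, and then show that the contribution of the rare event $\{E=1\}$ is small. The key algebraic identity comes from applying the chain rule to $I(S_T;W_1,W_2,Y_T,E)$ in two ways, giving
\begin{align}
I(S_T;W_1,W_2,Y_T) = I(S_T;W_1,W_2,Y_T|E) + I(S_T;E) - I(S_T;E|W_1,W_2,Y_T).\nonumber
\end{align}
Using $I(S_T;E) \geq 0$ and $I(S_T;E|W_1,W_2,Y_T) \leq H(E) \leq h_b(\prob(E=1))$, this yields the lower bound $I(S_T;W_1,W_2,Y_T) \geq I(S_T;W_1,W_2,Y_T|E) - h_b(\prob(E=1))$.

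Next I would expand the conditional mutual information on the two values of $E$:
\begin{align}
I(S_T;W_1,W_2,Y_T|E) &= \prob(E=0) I(S_T;W_1,W_2,Y_T|E=0) \nonumber\\
&\quad + \prob(E=1) I(S_T;W_1,W_2,Y_T|E=1) \nonumber\\
&\geq \prob(E=0) I(S_T;W_1,W_2,Y_T|E=0),\nonumber
\end{align}
by nonnegativity of mutual information. Since $I(S_T;W_1,W_2,Y_T|E=0) \leq H(S_T|E=0) \leq \log_2|\mc{S}|$, I can rewrite $\prob(E=0) I(S_T;W_1,W_2,Y_T|E=0) = I(S_T;W_1,W_2,Y_T|E=0) - \prob(E=1) I(S_T;W_1,W_2,Y_T|E=0)$ and bound the last term by $\prob(E=1)\log_2|\mc{S}|$.

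Combining the two estimates gives
\begin{align}
I(S_T;W_1,W_2,Y_T) \geq I(S_T;W_1,W_2,Y_T|E=0) - \prob(E=1)\log_2|\mc{S}| - h_b(\prob(E=1)),\nonumber
\end{align}
and invoking the hypothesis $\prob(E=1)\log_2|\mc{S}| + h_b(\prob(E=1)) \leq \varepsilon$ closes the argument. No step appears delicate; the only subtlety is the proper splitting of the mutual information through $E$, and the rest follows from standard boundedness of entropy and the nonnegativity of mutual information.
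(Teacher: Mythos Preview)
Your proof is correct and follows essentially the same approach as the paper: the same chain-rule identity through $E$, the same bounds $I(S_T;E)\geq 0$ and $I(S_T;E|W_1,W_2,Y_T)\leq H(E)=h_b(\prob(E=1))$, and the same handling of the $\prob(E=0)$ factor via $I(S_T;W_1,W_2,Y_T|E=0)\leq \log_2|\mc{S}|$.
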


\begin{proof}[Lemma \ref{lemma:ErrorEventCoordination01}]
\begin{align}
& I(S_T ; W_1,W_2,Y_T) =   I(S_T ; W_1,W_2,Y_T | E)  + I(S_T ; E ) \nonumber\\
&-  I(S_T ; E | W_1,W_2,Y_T ) \\
\geq&  \big(1-\prob(E=1)\big)\cdot I(S_T ; W_1,W_2,Y_T | E = 0)   \nonumber\\
&+  \prob(E=1) \cdot I(S_T ; W_1,W_2,Y_T | E = 1)  -  H(E)  \\
\geq& I(S_T ; W_1,W_2,Y_T | E = 0)   \nonumber\\
& -  \prob(E=1)  \cdot \log_2 |\mc{S} |   - h_b\Big(\prob(E=1)\Big) .
\end{align}
\end{proof}
%

\begin{lemma}\label{lemma:ErrorEventCoordination0}
Fix a distribution $\QQ_{S_TW_1W_2X_TY_TV_T} \in \Q_{\sf{c}}$ and suppose that the error probability $\prob(E=1)$ is small enough such that $\prob(E=1) \cdot \log_2 |\mc{Y} | +  h_b\Big(\prob(E=1)\Big)\leq \varepsilon$. Then we have
\begin{align}
I(S_T ,W_{1} ; Y_T) \leq & I(S_T ,W_{1} ; Y_T |E = 0)   +  \varepsilon.\label{eq:ErrorEventCoordination0Third}
\end{align}
\end{lemma}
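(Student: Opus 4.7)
The plan is to mimic almost verbatim the chain of inequalities used for Lemma~\ref{lemma:ErrorEventCoordination01}, but reversing the role of the conditioning on $E$ so as to turn the conditional mutual information into an upper bound. Specifically, I would apply the identity
\begin{align*}
I(S_T,W_1;Y_T) = I(S_T,W_1;Y_T\mid E) + I(S_T,W_1;E) - I(S_T,W_1;E\mid Y_T),
\end{align*}
drop the last (nonnegative) term, and bound $I(S_T,W_1;E) \leq H(E) = h_b(\prob(E=1))$.

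Next, I would expand the conditional mutual information over the two values of $E$:
\begin{align*}
I(S_T,W_1;Y_T\mid E) &= \big(1-\prob(E=1)\big)\, I(S_T,W_1;Y_T\mid E=0) \\
&\quad+ \prob(E=1)\, I(S_T,W_1;Y_T\mid E=1).
\end{align*}
Since $I(S_T,W_1;Y_T\mid E=0)\geq 0$, the first term is upper bounded by $I(S_T,W_1;Y_T\mid E=0)$, and the second term is upper bounded by $\prob(E=1)\cdot H(Y_T\mid E=1)\leq \prob(E=1)\cdot \log_2 |\mc{Y}|$. Combining all three estimates yields
\begin{align*}
I(S_T,W_1;Y_T) \leq I(S_T,W_1;Y_T\mid E=0) + \prob(E=1)\cdot\log_2|\mc{Y}| + h_b\!\big(\prob(E=1)\big),
\end{align*}
and the hypothesis $\prob(E=1)\log_2|\mc{Y}|+h_b(\prob(E=1))\leq\varepsilon$ gives the announced bound.

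There is essentially no obstacle here: the argument is a direct analogue of Lemma~\ref{lemma:ErrorEventCoordination01}, and the only care needed is the sign bookkeeping in the mutual-information decomposition (making sure the error term $I(S_T,W_1;E\mid Y_T)$ enters with the right sign to be discarded) and the use of $\log_2|\mc{Y}|$ rather than $\log_2|\mc{S}|$ since the variable whose alphabet is being bounded is now $Y_T$ and not $S_T$.
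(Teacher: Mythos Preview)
Your proof is correct and essentially the same as the paper's. The only cosmetic difference is that the paper uses the decomposition $I(S_T,W_1;Y_T)=I(S_T,W_1;Y_T\mid E)+I(E;Y_T)-I(E;Y_T\mid S_T,W_1)$ rather than your $I(S_T,W_1;Y_T)=I(S_T,W_1;Y_T\mid E)+I(S_T,W_1;E)-I(S_T,W_1;E\mid Y_T)$, but in both cases the extra additive term is bounded by $H(E)=h_b(\prob(E=1))$ and the remaining steps are identical.
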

\begin{proof}[Lemma \ref{lemma:ErrorEventCoordination0}]
\begin{align}
&I(S_T ,W_{1} ; Y_T) \nonumber\\
=& I(S_T ,W_{1} ; Y_T|E) +I(E ; Y_T   ) -I(E  ; Y_T| S_T,W_{1}) \displaybreak[0]\\
\leq&  \prob(E=0) \cdot I(S_T ,W_{1} ; Y_T|E=0)\nonumber\\
&+  \prob(E=1) \cdot I(S_T ,W_{1} ; Y_T|E=1) +H(E) \displaybreak[0]\\ 
\leq& I(S_T ,W_{1} ; Y_T|E=0)+  \prob(E=1) \cdot  \log_2 |\mc{Y} |   +H(E)\leq\varepsilon.
\end{align}
\end{proof}

We denote by $ \PP_{S_TX_TY_TV_T|E=0}$ the distribution induced by the random variables $(S_T,X_T,Y_T,V_T)$ knowing the event $E=0$ is realized.
\begin{lemma}\label{lemma:3}
\mrb{The} distribution defined by $ \PP_{S_TX_TY_TV_T|E=0}$ is close to the target distribution $\QQ_{SXYV}$, \mrb{i.e.} for all $(s,x,y,v)$
\begin{align}
 \bigg|  \PP_{S_TX_TY_TV_T|E=0}(s,x,y,v) -  \QQ(s,x,y,v)  \bigg|  \leq \varepsilon.
\end{align} 
\end{lemma}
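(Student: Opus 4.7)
\textbf{Proof proposal for Lemma \ref{lemma:3}.} The plan is to rewrite the conditional probability $\PP_{S_TX_TY_TV_T|E=0}(s,x,y,v)$ as the conditional expectation of the empirical distribution $Q^n_{SXYV}(s,x,y,v)$ given the event $\{E=0\}$, and then to exploit the very definition of $E=0$, which forces $(S^n,X^n,Y^n,V^n)\in T_{\delta}(\QQ)$ and hence $\|Q^n_{SXYV} - \QQ_{SXYV}\|_1 \leq \delta$.

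First, I would use that $T$ is uniform over $\{1,\ldots,n\}$ and independent of $(S^n,X^n,Y^n,V^n,E)$, so that
\begin{align*}
\PP_{S_TX_TY_TV_T|E=0}(s,x,y,v) = \frac{1}{n}\sum_{i=1}^n \PP(S_i=s,X_i=x,Y_i=y,V_i=v \mid E=0) = \E\bigl[Q^n_{SXYV}(s,x,y,v) \bigm| E=0\bigr],
\end{align*}
where the last equality rewrites the average over $i$ as the expected normalized count. This single identity is what reduces the lemma to a statement about the empirical distribution.

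Next, by the definition of the error event \eqref{eq:ErrorEvent}, on $\{E=0\}$ the sequence tuple $(S^n,X^n,Y^n,V^n)$ lies in $T_{\delta}(\QQ)$, so $\|Q^n_{SXYV} - \QQ_{SXYV}\|_1 \leq \delta$ pointwise (i.e., almost surely under $\PP(\cdot\mid E=0)$). In particular, for every fixed symbol $(s,x,y,v)\in \mc{S}\times\mc{X}\times\mc{Y}\times\mc{V}$, we have $|Q^n_{SXYV}(s,x,y,v)-\QQ(s,x,y,v)|\leq \delta$ almost surely on $\{E=0\}$. Combining this with the identity above and Jensen's inequality (or simply monotonicity of conditional expectation),
\begin{align*}
\bigl|\PP_{S_TX_TY_TV_T|E=0}(s,x,y,v) - \QQ(s,x,y,v)\bigr| \leq \E\bigl[\bigl|Q^n_{SXYV}(s,x,y,v)-\QQ(s,x,y,v)\bigr|\bigm| E=0\bigr] \leq \delta.
\end{align*}
Choosing the typicality tolerance $\delta\leq \varepsilon$ (which is legitimate since $\delta$ is a free parameter in the converse) yields the claim. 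There is no real obstacle here: the lemma is essentially a bookkeeping step that converts the blockwise empirical coordination constraint into a single-letter closeness of the time-averaged marginal to the target distribution. The only mild subtlety is to make sure that conditioning on $\{E=0\}$ does not destroy the independence between $T$ and the sequences, which is immediate because $T$ is drawn independently of everything in the model.
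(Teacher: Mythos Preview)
Your proposal is correct and follows essentially the same approach as the paper: both arguments identify $\PP_{S_TX_TY_TV_T|E=0}(s,x,y,v)$ with the conditional expectation of the empirical frequency $Q^n_{SXYV}(s,x,y,v)$ given $E=0$ (using independence of $T$ from the sequences and the event), and then bound this by the typicality constraint that holds pointwise on $\{E=0\}$. The paper simply spells out the computation for the marginal $\PP_{S_T|E=0}(s)$ first and then invokes ``similar arguments'' for the full tuple, whereas you go directly to the joint; otherwise the proofs are the same.
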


\begin{proof}[Lemma \ref{lemma:3}]
We evaluate the probability \mrb{value} $\PP_{S_T|E=0}(s)$ and we show \mrb{that} it is close to the  probability \mrb{value} $\PP(s)$, for all $s\in \mc{S}$.
\begin{align}
&\PP_{S_T|E=0}(s)\nonumber\\
=&\sum_{s^n \in T_{\delta}(\PP) } \sum_{i = 1}^n  \prob\big(S^n = s^n , T = i , S_T = s  \big| E=0\big)  \label{eq:Lemma3_Eq1} \\
=&\sum_{s^n \in T_{\delta}(\PP) } \sum_{i = 1}^n   \prob\big(S^n = s^n   \big| E=0\big)   \prob\big( T = i   \big| S^n = s^n , E=0\big) \nonumber\\
&\times \prob\big(S_T = s  \big|S^n = s^n , T = i ,  E=0\big)  \label{eq:Lemma3_Eq2} \\
=&\sum_{s^n \in T_{\delta}(\PP) } \sum_{i = 1}^n   \prob\big(S^n = s^n   \big| E=0\big)  \prob\big( T = i \big) \UN(s_T = s)\label{eq:Lemma3_Eq4} \\
=&\sum_{s^n \in T_{\delta}(\PP) }   \prob\big(S^n = s^n   \big| E=0\big) \sum_{i = 1}^n  \frac{1}{n} \UN(s_T = s)\label{eq:Lemma3_Eq5} \\
=&\sum_{s^n \in T_{\delta}(\PP) }   \prob\big(S^n = s^n   \big| E=0\big)   \frac{N(s|s^n)}{n} ,\label{eq:Lemma3_Eq6}
\end{align}
where \eqref{eq:Lemma3_Eq4} comes from the independence of event $\{T = i\} $ with events $\{S^n = s^n\}$ and $\{E=0\}$; \eqref{eq:Lemma3_Eq6} comes from the definition of the number of occurrence $N(s|s^n) = \sum_{i = 1}^n   \UN(s_T = s )$.

Since the sequences $s^n \in T_{\delta}(\PP)$ are typical, we have for all $\mrb{ \tilde{s}}\in \mc{S}$:
\begin{align}
\PP(\mrb{ \tilde{s}}) - \varepsilon  \leq  \frac{N(\mrb{ \tilde{s}}|s^n)}{n}  \leq  \PP(\mrb{ \tilde{s}}) + \varepsilon,
\end{align}
which provides an upper bound and a lower bound on  $\prob(S_T = \mrb{ \tilde{s}} | E=0)$ as 
\begin{align}
 &\PP(\mrb{ \tilde{s}}) - \varepsilon 
 = \sum_{s^n \in T_{\delta}(\PP) }   \prob\big(S^n = s^n   \big| E=0\big) \Big( \PP(\mrb{ \tilde{s}}) - \varepsilon\Big) \nonumber\\
 \leq&   \prob(S_T = \mrb{ \tilde{s}} | E=0)\\
\leq &  \sum_{s^n \in T_{\delta} (\PP)}   \prob\big(S^n = s^n   \big| E=0\big) \Big( \PP(\mrb{ \tilde{s}}) + \varepsilon\Big) =   \PP(\mrb{ \tilde{s}}) + \varepsilon.
\end{align}
Using \mrb{similar} arguments, we \mrb{show} that the distribution $\PP_{S_TX_TY_TV_T|E=0}(s,x,y,v)$  is close to the target distribution $\QQ(s,x,y,v)$, \mrb{i.e.} for all $(s,x,y,v)$
\begin{align}
 \bigg|  \PP_{S_TX_TY_TV_T|E=0}(s,x,y,v) -  \QQ(s,x,y,v)\bigg|  \leq \varepsilon.
\end{align} 
This concludes the proof of Lemma \ref{lemma:3}.
\end{proof}


\section{Proof of Theorem \ref{theo:Estimation}}\label{sec:ProofTheoremKL}
The conditional distribution $\PP_{W_1^nW_2^nX^n|S^n}$ combined with $\PP_S $ and $\mc{T}_{Y|XS}$ \mrb{induces the} distribution
\begin{align}
 \PP_{S^nW_1^nW_2^nX^nY^n}&= \prod_{i=1}^n \PP_{S_i}    \PP_{W_1^nW_2^nX^n|S^n} \prod_{i=1}^n\mc{T}_{Y_i|X_iS_i}.
\end{align} 
We introduce the random event $F \in \{0,1\}$ depending on whether the random sequences $(S^n,W_1^n,W_2^n,Y^n)$ are jointly typical or not.
\begin{align}
F = \Bigg\{
\begin{array}{lll}
0 &\text{ if }&  (S^n,W_1^n, W_2^n,Y^n) \in T_{\delta}(\QQ) ,\\
1 &\text{ if }& (S^n,W_1^n, W_2^n,Y^n) \notin T_{\delta}(\QQ) .
\end{array}
\Bigg.
\end{align}
Since the target distribution $\QQ_{SW_1W_2XY}$ has full support, the distribution $\PP_{S^n|Y^n}$ is absolutely continuous with respect to $\prod_{i=1}^n \QQ_{S_i|Y_iW_{1,i}W_{2,i}}$, and the conditional \ac{KL}-divergence \mrb{writes}
\begin{align}
& \frac{1}{n} D\bigg(\PP_{S^n|Y^n} \bigg|\bigg| \prod_{i=1}^n \QQ_{S_i|Y_iW_{1,i}W_{2,i}} \bigg)  \nonumber \displaybreak[0]\\
=& \frac{1}{n} \sum_{w_1^n,w_2^n,y^n}\PP(w_1^n,w_2^n,y^n)\nonumber\displaybreak[0]\\
&\times  \sum_{s^n}\PP(s^n|y^n) \log_2 \frac{1}{\prod_{i=1}^n \QQ(s_i|y_i,w_{1,i}, w_{2,i} )} \nonumber \displaybreak[0]\\
-& \frac{1}{n} \sum_{w_1^n,w_2^n,y^n}\PP(w_1^n,w_2^n,y^n) \sum_{s^n}\PP(s^n|y^n) \log_2 \frac{1}{\PP(s^n|y^n)}  \label{eq:LeakageKL2}\displaybreak[0]\\
=&\frac{1}{n} \sum_{w_1^n,w_2^n,\atop y^n,s^n,F}\PP(w_1^n,w_2^n,y^n) \PP(s^n|y^n)\prob(F|s^n,w_1^n,w_2^n,y^n)\nonumber\displaybreak[0]\\
&\times \log_2 \frac{1}{\prod_{i=1}^n \QQ(s_i|y_i,w_{1,i}, w_{2,i} )} -  \frac{1}{n} H(S^n|Y^n) \label{eq:LeakageKL3} \displaybreak[0]\\
=& \prob(F=0) \frac{1}{n} \sum_{(w_1^n,w_2^n,y^n,s^n) \atop\in T_{\delta}(\QQ)}\prob(s^n,w_1^n,w_2^n,y^n|F=0)  \nonumber\displaybreak[0]\\
&\times\log_2 \frac{1}{\prod_{i=1}^n \QQ(s_i|y_i,w_{1,i}, w_{2,i} )} \nonumber \displaybreak[0]\\
+& \prob(F=1) \frac{1}{n} \sum_{w_1^n,w_2^n,y^n,s^n}\prob(s^n,w_1^n,w_2^n,y^n|F=1)  \nonumber\displaybreak[0]\\
&\times\log_2 \frac{1}{\prod_{i=1}^n \QQ(s_i|y_i,w_{1,i}, w_{2,i} )}  - \frac{1}{n} H(S^n|Y^n) \label{eq:LeakageKL4} \displaybreak[0]\\
\leq& \frac{1}{n} \sum_{(w_1^n,w_2^n,y^n,s^n) \atop\in T_{\delta}(\QQ)}\prob(s^n,w_1^n,w_2^n,y^n|F=0)  \nonumber\displaybreak[0]\\
&\times n\bigg(H(S|W_1,W_2,Y) + \delta  \sum_{s,w_1,\atop w_2,y} \log_2  \frac{1}{\QQ(s|w_1,w_2,y)} \bigg)  \nonumber\displaybreak[0]\\
+& \prob(F=1)  \log_2 \frac{1}{\min_{s,y,w_1,w_2} \QQ(s|y,w_{1}, w_{2} )}  - \frac{1}{n} H(S^n|Y^n) \label{eq:LeakageKL5} \displaybreak[0]\\
\leq& H(S|W_1,W_2,Y)  - \frac{1}{n} H(S^n|Y^n) \nonumber\displaybreak[0]\\
& + \delta  \sum_{s,w_1,\atop w_2,y} \log_2  \frac{1}{\QQ(s|w_1,w_2,y)} \nonumber\displaybreak[0]\\
&+  \prob(F=1)  \log_2 \frac{1}{\min_{s,y,w_1,w_2} \QQ(s|y,w_{1}, w_{2} )} \label{eq:LeakageKL6} \displaybreak[0]\\
=&\mrb{ \frac{1}{n} I(S^n;Y^n) - I(S;W_1,W_2,Y)  }\nonumber\displaybreak[0]\\
& + \delta  \sum_{s,w_1,\atop w_2,y} \log_2  \frac{1}{\QQ(s|w_1,w_2,y)} \nonumber\displaybreak[0]\\
&+  \prob(F=1)  \log_2 \frac{1}{\min_{s,y,w_1,w_2} \QQ(s|y,w_{1}, w_{2} )}\label{eq:LeakageKL7}\displaybreak[0]\\
\leq& \mrb{ \mc{L}_{\textsf{e}}(c) - I(S;W_1,W_2,Y)  } + \delta  \sum_{s,w_1,\atop w_2,y} \log_2  \frac{1}{\QQ(s|w_1,w_2,y)} \nonumber \displaybreak[0]\\
&+   \prob\Big((s^n,w_1^n,w_2^n,y^n) \notin T_{\delta}(\QQ) \Big) \nonumber\displaybreak[0]\\
&\times \log_2 \frac{1}{\min_{s,y,w_1,w_2} \QQ(s|y,w_{1}, w_{2} )}\label{eq:LeakageKL8},
\end{align}
where \eqref{eq:LeakageKL2} comes from the definition of the \ac{KL}-divergence; \eqref{eq:LeakageKL3} comes from the introduction of the random event $F \in \{0,1\}$; \eqref{eq:LeakageKL4} is a reformulation of \eqref{eq:LeakageKL3} where $ \prob(F)  \cdot\prob(s^n,w_1^n,w_2^n,y^n|F) = \PP(w_1^n,w_2^n,y^n) \PP(s^n|y^n)\prob(F|s^n,w_1^n,w_2^n,y^n) $ and where the event $F=0$ implies that  the sequences $(w_1^n,w_2^n,y^n,s^n)\in T_{\delta}(\QQ)$; \eqref{eq:LeakageKL5} comes from two properties: 1) the property $2^{-n\cdot\Big(H(S|Y,W_1,W_2) + \delta \sum_{s,w_1,\atop w_2,y} \log_2 \frac{1}{\QQ(s|w_1,w_2,y)} \Big)} \leq \prod_{i=1}^n \QQ_{S_i|Y_iW_{1,i}W_{2,i}}$ for typical sequences $(w_1^n,w_2^n,y^n,s^n)\in T_{\delta}(\QQ)$, stated in \cite[pp. 26]{ElGamalKim(book)11}; 2) the hypothesis $\QQ_{SW_1W_2XY}$ has full support, which implies that $\min_{s,y,w_1,w_2} \QQ(s|y,w_{1}, w_{2} )>0$; \eqref{eq:LeakageKL7} comes from the i.i.d. property of $S$ that implies $H(S) = \frac{1}{n} H(S^n)$; \eqref{eq:LeakageKL8} comes from the definition of the state leakage $\mc{L}_{\textsf{e}}(c)  = \frac{1}{n} I(S^n;Y^n)$ and the event $F=0$.




\section{Achievability proof of Theorem \ref{theo:DecoderEstimation}}\label{sec:AchievabilityTheoDecoderEstimation}

 We consider a pair of rate and distortion $(\textsf{R},\textsf{D}) \in \mc{A}_{\sf{g}}$ that \mrb{satisfies} \eqref{eq:DecoderEstimation1} and \eqref{eq:DecoderEstimation2} with distribution $\PP_S    \QQ_{W_1}   \QQ_{X|SW_1}      \mc{T}_{Y|XS}$ \mrb{and we} introduce the target leakage $\textsf{E} = I(S;W_1,Y)$. Theorem \ref{theo:LeakageCEd} guarantees that the triple $(\textsf{R},\textsf{E},\mrb{\QQ_{SXYV}} )$ is achievable\mrb{, i.e.} for all $\varepsilon>0$, there exists  $\bar{n}\in \N^{\star}$, for all $n \geq \bar{n}$, there exists a code with causal encoding $c\in\mc{C}(n,\mc{M})$ that satisfies
\begin{align}
&\frac{\log_2 |\mc{M}|}{n}  \geq \textsf{R} - \varepsilon,\label{eq:DecoderEstimationKL1}\displaybreak[0]\\
& \bigg| \mc{L}_{\textsf{e}}(c)  - \textsf{E} \; \bigg|  \leq\varepsilon, \;\; \text{\normalsize with } \;\; \mc{L}_{\textsf{e}}(c) = \frac{1}{n} I(S^n;Y^n)  , \label{eq:DecoderEstimationKL2}\displaybreak[0]\\
& \PP_{\textsf{e}}(c)  = \prob\bigg( M \neq \hat{M} \bigg)  
\nonumber\\
& +\prob\bigg(\Big|\Big|\mrb{Q^n_{SXYV} - \QQ_{SXYV}} \Big|\Big|_{1}> \varepsilon\bigg) \leq \varepsilon.  \label{eq:DecoderEstimationKL3}
 \end{align}  
\begin{remark}\label{remark:CorrectDecoding}
The achievability proof of Theorem \ref{theo:LeakageCEd} guarantees that the sequences $(S^n,W_1^n,X^n,Y^n)\in T_{\delta}(\QQ)$ with large probability and that the decoding of $W_1^n$ is correct with large probability, \mrb{i.e.} $ \prob(\hat{W}_1^n \neq W_1^n)\leq \varepsilon$.
\end{remark}

We assume that \mrb{the} distribution $\PP_S    \QQ_{W_1}   \QQ_{X|SW_1}    \mc{T}_{Y|XS}$ has full support. Otherwise, we would consider a sequence of distributions \mrb{with} full support, that converges to the target distribution. By replacing the pair $(W_1,W_2)$ by $W_1$ in Theorem \ref{theo:Estimation}, we obtain:
\begin{align}
& \frac{1}{n} D\bigg(\PP_{S^n|Y^n} \bigg|\bigg| \prod_{i=1}^n \QQ_{S_i|Y_iW_{1,i}} \bigg) \nonumber\\
 \leq&  \mrb{ \mc{L}_{\textsf{e}}(c) -I(S;W_1,Y)  }+ \delta \sum_{s,w_1,y} \log_2  \frac{1}{\QQ(s|w_1,y)}  \nonumber \\
 & +  \prob\Big((s^n,w_1^n,y^n) \notin T_{\delta}(\QQ) \Big)  \log_2 \frac{1}{\min_{s,y,w_1} \QQ(s|y,w_{1})}
 .\label{eq:DecoderEstimationKL0}
\end{align}
\mrb{The equations \eqref{eq:DecoderEstimationKL1}--\eqref{eq:DecoderEstimationKL3} and \eqref{eq:DecoderEstimationKL0} show} that for any $\varepsilon>0$, there exists $\bar{n}\in \N^{\star}$, for all $n \geq \bar{n}$, there exists a code with causal encoding $c\in\mc{C}(n,\mc{M})$ involving an auxiliary sequence $W_1^n$, such that
\begin{align}
\frac{\log_2 |\mc{M}|}{n}  \geq& \textsf{R} - \varepsilon,\label{eq:DecoderEstimationKL4}\\
 \prob\bigg( M \neq \hat{M} \bigg) \leq&\varepsilon,\label{eq:DecoderEstimationKL5}\\
 \frac{1}{n} D\bigg(\PP_{S^n|Y^n} \bigg|\bigg| \prod_{i=1}^n \QQ_{S_i|Y_iW_{1,i}} \bigg) 
 \leq& \varepsilon.\label{eq:DecoderEstimationKL6}
 \end{align}

Following \cite[Notations A.6 and A.7]{LeTreustTomala19}, we define the sets $J_{\alpha}(w_1^n,y^n)$ and $B_{\alpha,\gamma,\delta}$ depending on small parameters $\alpha>0$, $\gamma>0$ and $\delta>0$:
\begin{align}
&J_{\alpha}(w_1^n,y^n) = \bigg\{i \in \{1,\ldots,n\} , \;\;\text{ s.t. }\;\;\nonumber\displaybreak[0]\\
& \;\;D\Big(\PP_{S_i|Y^n}(\cdot|y^n)\Big|\Big|\QQ_{S_i|Y_iW_{1,i}}(\cdot|y_i,w_{1,i})\Big)\leq  \frac{\alpha^2}{2\ln 2} \bigg\},\\
&B_{\alpha,\gamma,\delta}= \bigg\{ (w_1^n,y^n) \;\;\text{ s.t. } \;\; \frac{|J_{\alpha}(w_1^n,y^n)|}{n} \geq 1 - \gamma\;\;\text{ and }\nonumber\displaybreak[0]\\
& \;\;(w_1^n,y^n) \in  T_{\delta}(\QQ)\bigg\}.\label{eq:SetTalpha}
\end{align}
The notation $B_{\alpha,\gamma,\delta}^c$ stands for the complementary set of $B_{\alpha,\gamma,\delta}\subset \mc{W}_1^n \times \mc{Y}^n $. The sequences $(w_1^n,y^n)$ belong to the set $B_{\alpha,\gamma,\delta}$ if they are jointly typical and if the decoder's posterior belief $\PP_{S_i|Y^n}(\cdot|y^n)$ is close in K-L divergence to the target belief $\QQ_{S_i|Y_iW_{1,i}}(\cdot|y_i,w_{1,i})$, for a large fraction of stages $i \in \{1,\ldots,n\}$.

\mrb{ 
\begin{lemma}\label{lemma:BoundSetB}
\begin{align}
&\prob(B_{\alpha,\gamma,\delta}^c) \leq  \frac{2 \ln 2}{\alpha^2 \gamma}  \frac{1}{n} D\Big(\PP_{S^n|Y^n}\Big|\Big|\prod_{i=1}^n\QQ_{S_i|Y_iW_{1,i}}\Big) \nonumber\\
&+  \prob((w_1^n,y^n) \notin  T_{\delta}(\QQ)).\label{eq:LemmaBoundSetB}
\end{align}
\end{lemma}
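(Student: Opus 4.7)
\textbf{Proof plan for Lemma \ref{lemma:BoundSetB}.} The plan is to decompose the event $B_{\alpha,\gamma,\delta}^c$ via a union bound, and to control the ``too many bad stages'' component by a Markov inequality applied to the sum of per-stage K-L divergences. Precisely, observing that $(w_1^n,y^n)\in B_{\alpha,\gamma,\delta}^c$ means that either $(w_1^n,y^n)\notin T_{\delta}(\QQ)$ or $|J_{\alpha}(w_1^n,y^n)|/n < 1-\gamma$, i.e., $|J_{\alpha}^c(w_1^n,y^n)|/n \geq \gamma$, the union bound gives
\begin{align*}
\prob(B_{\alpha,\gamma,\delta}^c) \leq \prob\Big(|J_{\alpha}^c(W_1^n,Y^n)|\geq \gamma n\Big) + \prob\Big((W_1^n,Y^n)\notin T_{\delta}(\QQ)\Big).
\end{align*}

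The core step is to bound the first term. For each realization $(w_1^n,y^n)$, define $\xi_i(w_1^n,y^n) = D\big(\PP_{S_i|Y^n}(\cdot|y^n)\,\big\|\,\QQ_{S_i|Y_iW_{1,i}}(\cdot|y_i,w_{1,i})\big)$. The key fact I would use is the super-additivity of K-L divergence with respect to a product reference measure: for any fixed $(w_1^n,y^n)$,
\begin{align*}
D\Big(\PP_{S^n|Y^n}(\cdot|y^n)\,\Big\|\,\prod_{i=1}^n\QQ_{S_i|Y_iW_{1,i}}(\cdot|y_i,w_{1,i})\Big) \geq \sum_{i=1}^n \xi_i(w_1^n,y^n),
\end{align*}
which is obtained by writing the left-hand side as $\sum_i H_{\mathrm{cross}}(\PP_{S_i|y^n},\QQ_{S_i|y_iw_{1,i}}) - H(\PP_{S^n|y^n})$ and using $H(\PP_{S^n|y^n}) \leq \sum_i H(\PP_{S_i|y^n})$. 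Taking expectation over $(W_1^n,Y^n)$ yields
\begin{align*}
\sum_{i=1}^n \E[\xi_i(W_1^n,Y^n)] \leq D\Big(\PP_{S^n|Y^n}\,\Big\|\,\prod_{i=1}^n\QQ_{S_i|Y_iW_{1,i}}\Big).
\end{align*}

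Since $|J_{\alpha}^c(W_1^n,Y^n)|\geq \gamma n$ implies $\sum_{i=1}^n \xi_i(W_1^n,Y^n) \geq \gamma n \cdot \alpha^2/(2\ln 2)$, Markov's inequality on the non-negative random variable $\sum_i \xi_i$ gives
\begin{align*}
\prob\Big(|J_{\alpha}^c(W_1^n,Y^n)|\geq \gamma n\Big) \leq \frac{2\ln 2}{\alpha^2\gamma}\cdot\frac{1}{n} D\Big(\PP_{S^n|Y^n}\,\Big\|\,\prod_{i=1}^n\QQ_{S_i|Y_iW_{1,i}}\Big),
\end{align*}
and combining with the union bound above yields exactly \eqref{eq:LemmaBoundSetB}. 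The only non-routine step is the super-additivity of K-L divergence against a product reference; once that is in hand the rest is the standard Markov--union bound argument, and I would expect no further complication.
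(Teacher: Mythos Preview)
Your proposal is correct and follows essentially the same route as the paper: a union bound on $B_{\alpha,\gamma,\delta}^c$, then Markov's inequality on $\sum_i \xi_i$ to control the ``too many bad stages'' event, and finally the super-additivity of the KL divergence against a product reference to pass from $\sum_i \E[\xi_i]$ to the block KL divergence. The paper packages the Markov step as a citation to \cite[Lemma~A.21]{LeTreustTomala19} and states the super-additivity as its own Lemma~\ref{lemma:KLproductC} (proved via the identity $D(\PP_{B_1B_2|A_1A_2}\|\QQ_{B_1|A_1}\QQ_{B_2|A_2}) = D(\PP_{B_1|A_1A_2}\|\QQ_{B_1|A_1}) + D(\PP_{B_2|A_1A_2}\|\QQ_{B_2|A_2}) + I(B_1;B_2|A_1,A_2)$), whereas you prove both inline via the cross-entropy decomposition and the subadditivity of entropy; these are the same argument in slightly different dress.
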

Lemma \ref{lemma:BoundSetB} is a reformulation of  \cite[equations (40) - (46)]{LeTreustTomala19}, a detailed proof is stated in Appendix \ref{sec:ProofLemmaBoundSetB}.
Then \cite[Lemma A.8]{LeTreustTomala19}
ensures that for each code with causal encoding $c\in\mc{C}(n,\mc{M})$ we have
\begin{align}
& \bigg| \min_{h_{V^n|Y^n}}  \frac{1}{n} \sum_{i=1}^n\E\Big[d(S_i,V_i)\Big]  - \min_{\PP_{V|W_1Y}} \E \Big[ d(S,V)\Big] \bigg| \nonumber \\
= &\bigg| \frac{1}{n} \sum_{i=1}^n \sum_{y^n} \PP(y^n)\min_{h_{V_i|Y^n}} \sum_{s_i} \PP(s_i|y^n)  d(s_i,v_i)
\nonumber\displaybreak[0]\\
& -  \sum_{w_1,y}\QQ(w_1,y) \min_{\PP_{V|W_1Y}}\sum_{s} \QQ(s|w_{1},y)  d(s,v)\bigg|  \\
\leq &\big(\alpha + 2 \gamma + \delta + \prob(B_{\alpha,\gamma,\delta}^c)\big) \bar{d},  \label{eq:BoundMinDistortion}
\end{align}
where $\bar{d} = \max_{s,v}d(s,v)$ is the maximal distortion value. 
We combine \eqref{eq:BoundMinDistortion} with \eqref{eq:LemmaBoundSetB} and \eqref{eq:DecoderEstimationKL6}, and then we choose the parameters $\alpha>0$, $\gamma>0$, $\delta>0$ small and $n$ large such as to obtain
\begin{align}
 \bigg| \min_{h_{V^n|Y^n}}  \frac{1}{n} \sum_{i=1}^n\E\Big[d(S_i,V_i)\Big]  - \min_{\PP_{V|W_1Y}} \E \Big[ d(S,V)\Big] \bigg|  \leq  \varepsilon.
\end{align}}
This concludes the achievability proof of Theorem \ref{theo:DecoderEstimation}.


\subsection{Proof of Lemma \ref{lemma:BoundSetB}}\label{sec:ProofLemmaBoundSetB}

The union bound implies
\begin{align}
\prob(B_{\alpha,\gamma,\delta}^c) 
=& \prob\bigg((W_1^n,Y^n)  \;\;\text{ s.t. } \;\; \frac{|J_{\alpha}(W_1^n,Y^n)|}{n} < 1 - \gamma\;\; \nonumber\\
&\text{ or }\;\;(W_1^n,Y^n) \notin  T_{\delta}(\QQ)\bigg)\\
\leq& \prob\bigg((W_1^n,Y^n)  \;\;\text{ s.t. } \;\; \frac{|J_{\alpha}(W_1^n,Y^n)|}{n} < 1 - \gamma \bigg) \nonumber \\
+& \prob\Big((W_1^n,Y^n) \notin  T_{\delta}(\QQ)\Big). \label{eq:ProofLemmaB1} 
\end{align}
Moreover, 
\begin{align}
&\prob\bigg((W_1^n,Y^n)  \;\;\text{ s.t. } \;\; \frac{|J_{\alpha}(W_1^n,Y^n)|}{n} < 1 - \gamma \bigg)\displaybreak[0]\nonumber\\
 =& \prob\bigg( \frac{1}{n} \bigg| \bigg\{i \text{ s.t. }
D\Big(\PP_{S_i|Y^n}(\cdot|Y^n)\Big|\Big|\QQ_{S_i|Y_iW_{1,i}}(\cdot|Y_i,W_{1,i})\Big)\nonumber\displaybreak[0]\\
&\leq  \frac{\alpha^2}{2\ln 2} \bigg\}\bigg|  < 1 -   \gamma   \bigg) \label{eq:ProofLemmaB2}\displaybreak[0]\\
=& \prob\bigg( \frac{1}{n} \bigg| \bigg\{i \text{ s.t. }D\Big(\PP_{S_i|Y^n}(\cdot|Y^n)\Big|\Big|\QQ_{S_i|Y_iW_{1,i}}(\cdot|Y_i,W_{1,i})\Big)\nonumber\displaybreak[0]\\
&>  \frac{\alpha^2}{2\ln 2} \bigg\}\bigg|  \geq  \gamma \bigg) \label{eq:ProofLemmaB4}\displaybreak[0]\\
\leq &\frac{2 \ln 2}{\alpha^2 \gamma} \E\bigg[ \frac{1}{n} \sum_{i=1}^n D\Big(\PP_{S_i|Y^n}(\cdot|Y^n)\Big|\Big|\QQ_{S_i|Y_iW_{1,i}}(\cdot|Y_i,W_{1,i})\Big)
\bigg]\label{eq:ProofLemmaB5} \displaybreak[0]\\
\leq &\frac{2 \ln 2}{\alpha^2 \gamma}  \frac{1}{n} D\Big(\PP_{S^n|Y^n}\Big|\Big|\prod_{i=1}^n\QQ_{S_i|Y_iW_{1,i}}\Big),\label{eq:ProofLemmaB6}
\end{align}
where  \eqref{eq:ProofLemmaB2}-\eqref{eq:ProofLemmaB4} are reformulations;~\eqref{eq:ProofLemmaB5} comes from \cite[Lemma A.21]{LeTreustTomala19};~\eqref{eq:ProofLemmaB6} comes from Lemma \ref{lemma:KLproductC}.


\begin{lemma}\label{lemma:KLproductC}
We consider the distributions $\PP_{A_1A_2B_1B_2}$, $\QQ_{B_1|A_1}$ and $\QQ_{B_2|A_2}$. We have
\begin{align}
&D\Big(\PP_{B_1B_2|A_1A_2} \Big|\Big| \QQ_{B_1|A_1}  \QQ_{B_2|A_2} \Big)\nonumber \\
=&D\Big(\PP_{B_1|A_1A_2} \Big|\Big| \QQ_{B_1|A_1}  \Big) + D\Big(\PP_{B_2|A_1A_2} \Big|\Big| \QQ_{B_2 |A_2} \Big) \nonumber\displaybreak[0]\\
&+ I(B_1;B_2|A_1,A_2),
\end{align}
where the mutual information $ I(B_1;B_2|A_1,A_2)$ is evaluated with respect to $\PP_{A_1A_2B_1B_2}$. In particular, this implies for all $n\geq1$
 \begin{align}
D\bigg(\PP_{B^n|A^n} \bigg|\bigg| \prod_{i=1}^n \QQ_{B_i|A_i} \bigg)  \geq \sum_{i=1}^n D\Big(\PP_{B_i|A^n} \Big|\Big| \QQ_{B_i|A _i}  \Big).
\end{align}
 \end{lemma}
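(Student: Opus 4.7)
The plan is to derive both assertions from a single algebraic decomposition of the log-likelihood ratio inside the \ac{KL}-divergence. In both cases I would insert and subtract the logarithm of an auxiliary product distribution, so that the log-ratio splits into a ``correlation'' piece plus a sum of marginal log-ratios; taking expectations then yields the identity (first part) or, after dropping a manifestly non-negative term, the inequality (second part).

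For the identity, I would start from the pointwise decomposition
\begin{align*}
\log \frac{\PP(b_1, b_2 | a_1, a_2)}{\QQ(b_1|a_1)\QQ(b_2|a_2)} =& \log \frac{\PP(b_1,b_2|a_1,a_2)}{\PP(b_1|a_1,a_2)\PP(b_2|a_1,a_2)} \\
&+ \log \frac{\PP(b_1|a_1,a_2)}{\QQ(b_1|a_1)} + \log \frac{\PP(b_2|a_1,a_2)}{\QQ(b_2|a_2)}.
\end{align*}
Taking expectation under $\PP_{A_1A_2B_1B_2}$ converts the first term into $I(B_1;B_2|A_1,A_2)$ and the other two into the conditional divergences $D\Big(\PP_{B_1|A_1A_2} \Big|\Big| \QQ_{B_1|A_1}\Big)$ and $D\Big(\PP_{B_2|A_1A_2} \Big|\Big| \QQ_{B_2|A_2}\Big)$ respectively. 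No further ingredient is required.

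For the $n$-letter inequality, I would introduce the conditionally-independent surrogate $\widetilde{\PP}(b^n|a^n) := \prod_{i=1}^n \PP_{B_i|A^n}(b_i|a^n)$, which shares the marginals $\PP_{B_i|A^n}$ with $\PP_{B^n|A^n}$ but makes the $B_i$ conditionally independent given $A^n$. The same splitting trick gives
\begin{align*}
\log \frac{\PP(b^n|a^n)}{\prod_{i=1}^n \QQ(b_i|a_i)} = \log \frac{\PP(b^n|a^n)}{\widetilde{\PP}(b^n|a^n)} + \sum_{i=1}^n \log \frac{\PP(b_i|a^n)}{\QQ(b_i|a_i)}.
\end{align*}
Taking expectation under $\PP_{A^nB^n}$, the first term becomes the conditional total correlation $\sum_i H(B_i|A^n) - H(B^n|A^n)\geq 0$, while the sum becomes precisely $\sum_{i=1}^n D\Big(\PP_{B_i|A^n} \Big|\Big| \QQ_{B_i|A_i}\Big)$; discarding the non-negative correlation term yields the stated inequality.

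The only subtlety is the well-definedness of the logarithms, which is handled by the full-support hypothesis already used around Theorem~\ref{theo:Estimation} where the lemma is invoked, or more generally by the usual convention $0\log 0 = 0$ on the support of $\PP$. As an alternative route, the inequality could be obtained by induction directly from the first identity, discarding the non-negative $I(B_1;B_2|A_1,A_2)$ term at each step; this amounts to the same argument repackaged, and so no genuine obstacle is anticipated.
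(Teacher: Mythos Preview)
Your proof is correct and essentially the same as the paper's. The paper expands the divergence into cross-entropy terms minus the joint conditional entropy and then adds and subtracts $H(B_1|A_1,A_2)+H(B_2|A_1,A_2)$; your pointwise insertion of $\PP(b_1|a_1,a_2)\PP(b_2|a_1,a_2)$ into the log-ratio is the same manipulation carried out before integrating. For the $n$-letter inequality the paper only says ``In particular, this implies\ldots'' without further detail, so your total-correlation argument (and the equivalent induction you mention) actually supplies more than the paper does there.
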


 \begin{proof}[Lemma \ref{lemma:KLproductC}]
 \begin{align}
&D\Big(\PP_{B_1B_2|A_1A_2} \Big|\Big| \QQ_{B_1|A_1} \times  \QQ_{B_2|A_2} \Big) \nonumber \\
=&\sum_{a_1,a_2}\PP(a_1,a_2) \sum_{b_1} \PP(b_1| a_1,a_2)  \log_2 \frac{ 1}{\QQ(b_1|a_1 )}\nonumber\\
&+\sum_{a_1,a_2}\PP(a_1,a_2) \sum_{b_2} \PP(b_2| a_1,a_2)  \log_2 \frac{1}{   \QQ(b_2|a_2 )} \nonumber\\
&- H(B_1,B_2|A_1,A_2)\displaybreak[0]\\
=&\sum_{a_1,a_2}\PP(a_1,a_2) \sum_{b_1} \PP(b_1| a_1,a_2)  \log_2 \frac{ \PP(b_1| a_1,a_2)}{\QQ(b_1|a_1 )}\nonumber\\
&+\sum_{a_1,a_2}\PP(a_1,a_2) \sum_{b_2} \PP(b_2| a_1,a_2)  \log_2 \frac{\PP(b_2| a_1,a_2) }{   \QQ(b_2|a_2 )}\nonumber\\
&+H(B_1|A_1,A_2) + H(B_2|A_1,A_2) - H(B_1,B_2|A_1,A_2)\\
=&D\Big(\PP_{B_1|A_1A_2} \Big|\Big|  \QQ_{B_1|A_1} \Big) + D\Big(\PP_{B_2|A_1A_2} \Big|\Big|  \QQ_{B_2|A_2} \Big) \nonumber\displaybreak[0]\\
&+ I(B_1;B_2|A_1,A_2).
\end{align}
\end{proof}


\section{Converse proof of Theorem \ref{theo:DecoderEstimation}}\label{sec:ConverseTheoDecoderEstimation}

We introduce the random event $F \in \{0,1\}$ indicating whether $M$ is correctly decoded or not.
We assume that there exists a code with causal encoding $c\in\mc{C}(n,\mc{M})$ that satisfies
\begin{align}
\frac{\log_2 |\mc{M}|}{n}  \geq& \textsf{R} - \varepsilon,\label{eq:DefinitionEstimation1}\\
 \PP_{\textsf{e}}(c)  = \prob\bigg( M \neq \hat{M} \bigg) \leq& \varepsilon, \label{eq:DefinitionEstimation2} \\
\bigg|\min_{h_{V^n|Y^n}} \frac{1}{n} \sum_{i=1}^n\E\Big[d(S_i,V_i)\Big] - \textsf{D}\bigg| \leq &\varepsilon. \label{eq:DefinitionEstimation3}
 \end{align}
\mrb{We first show the constraint \eqref{eq:DecoderEstimation1}.}
\begin{align}
\textsf{R} \leq& \frac{\log_2 |\mc{M}|}{n}  + \varepsilon  =  \frac{1}{n} H(M)  + \varepsilon\nonumber\displaybreak[0]\\
 =& \frac{1}{n} I(M;Y^n) + \frac{1}{n} H(M|Y^n)  + \varepsilon  \label{eqConverseFirst1} \displaybreak[0]\\
\leq& \frac{1}{n} I(M;Y^n) + 2\varepsilon  = \frac{1}{n} \sum_{i=1}^n I(M;Y_i|Y^{i-1}) + 2\varepsilon \nonumber\displaybreak[0]\\
\leq&  \frac{1}{n} \sum_{i=1}^n I(M,Y^{i-1},Y^n_{i+1};Y_i) + 2\varepsilon\label{eqConverseFirst5} \displaybreak[0]\\
=& \frac{1}{n} \sum_{i=1}^n I(W_{1,i};Y_i) + 2\varepsilon =   I(W_{1,T};Y_T|T) + 2\varepsilon \nonumber\displaybreak[0]\\
 \leq&  I(W_{1,T},T;Y_T) + 2\varepsilon \leq I(W_{1};Y) + 2\varepsilon\label{eqConverseFirst9},
\end{align}
where \eqref{eqConverseFirst1} comes from assumption \eqref{eq:DefinitionEstimation1} and \mrb{the} uniform distribution of the message $M$; \eqref{eqConverseFirst5} comes from Fano's inequality \cite[Theorem 2.10.1]{cover-book-2006} and assumption \eqref{eq:DefinitionEstimation2} and adding the mutual informations $ I(Y^{i-1};Y_i)$ and $ I(Y^n_{i+1};Y_i|M,Y^{i-1})$; 
\eqref{eqConverseFirst9} comes from the identification of the auxiliary random variable $W_{1,i} = (M,Y^{i-1},Y^n_{i+1})$ and the introduction of the uniform random variable $T$ over the indices $\{1,\ldots,n\}$ and $(W_{1,T},Y_T)$ and by identifying $W_{1} = (W_{1,T},T)$ and $Y_T=Y$. 

\mrb{We now show the constraint \eqref{eq:DecoderEstimation2}.} The notation $y^{-i}=(y^{i-1},y_{i+1}^n)$ stands for the subsequence of $y^n$ where $y_i$ has been removed.  We assume that the event $F=0$ is realized. Then, the average distortion satisfies
\begin{align}
&\!\!\min_{h_{V^n|Y^n}}\!\! \sum_{s^n,y^n,\atop m,v^n} \prob(s^n,y^n,m|F=0)  h(v^n|y^n) \bigg[\frac{1}{n} \sum_{i=1}^n d(s_i,v_i)\bigg]\displaybreak[0]\\
=&   \sum_{y^n,m} \prob(y^n,m|F=0)  \!\! \min_{\tilde{h}:\mc{Y}^n\times   \mc{M} \to \mc{V}^n} \sum_{s^n} \prob(s^n|y^n,m,F=0)\nonumber\\
&\times \bigg[\frac{1}{n} \sum_{i=1}^n d(s_i,v_i)\bigg]  \displaybreak[0]\label{eq:ConverseSecond0}\\
=&  \sum_{y_i,y^{-i},m} \prob(y^n,m|F=0)   \frac{1}{n} \sum_{i=1}^n\nonumber\\
&\times \min_{\hat{h}:\mc{Y}^n\times \mc{M} \to \mc{V}_i} \sum_{s_i} \prob(s_i|y_i,y^{-i},m,F=0)  d(s_i,v_i) \label{eq:ConverseSecond2}\displaybreak[0]\\
=& \sum_{y_i,w_{1,i}} \PP(y_i,w_{1,i})   \frac{1}{n} \sum_{i=1}^n \nonumber\\
&\times\min_{h:\mc{Y}_i \times \mc{W}_{1,i} \to \mc{V}_i} \sum_{s_i} \PP(s_i|y_i,w_{1,i})  d(s_i,v_i) \label{eq:ConverseSecond3}\displaybreak[0]\\
=& \sum_{y_T,w_{1,T},T} \PP(y_T,w_{1,T},T)  \nonumber\\
&\times  \min_{h:\mc{Y} \times \mc{W}_{1} \to \mc{V}} \sum_{s_T} \PP(s_T|y_T,w_{1,T},T)  d(s_T,v_T) \label{eq:ConverseSecond4}\displaybreak[0]\\
=& \sum_{y,w_{1}} \PP(y,w_{1})    \min_{h:\mc{Y} \times \mc{W}_{1} \to \mc{V}} \sum_{s} \PP(s|y,w_{1})  d(s,v) \displaybreak[0]\\
&= \min_{\PP_{V|W_1Y}} \E \Big[ d(S,V)\Big] \label{eq:ConverseSecond6},
\end{align}
where \eqref{eq:ConverseSecond0}, comes from the hypothesis $F=0$, which guarantees the correct decoding of the message $m$ based on the observation of $y^n$; \eqref{eq:ConverseSecond2} is a reformulation; 
\eqref{eq:ConverseSecond3} comes from the identification of the auxiliary random variable $w_{1,i} = (m,y^{-i})$; 
\eqref{eq:ConverseSecond4} comes from the introduction of the uniform random variable $T$; 
\eqref{eq:ConverseSecond6} comes from \mrb{the identification of} the auxiliary random variables $w_{1} = (w_{1,T},T)$. \mrb{We have}
\begin{align}
&\bigg| \min_{\PP_{V|W_1Y}} \E \Big[ d(S,V)\Big] - \textsf{D}  \bigg|\\
\leq &\bigg| \min_{\PP_{V|W_1Y}} \E \Big[ d(S,V)\Big]  - \min_{h_{V^n|Y^n}} \frac{1}{n} \sum_{i=1}^n\E\Big[d(S_i,V_i)\Big]\bigg|\nonumber\\
& + \bigg| \min_{h_{V^n|Y^n}} \frac{1}{n} \sum_{i=1}^n\E\Big[d(S_i,V_i)\Big] -   \textsf{D}  \bigg| \label{eq:ConverseThird1}\displaybreak[0]\\
\leq &\bigg| \min_{\PP_{V|W_1Y}} \E \Big[ d(S,V)\Big]  - \min_{h_{V^n|Y^n}} \frac{1}{n} \sum_{i=1}^n\E\Big[d(S_i,V_i)\Big]\bigg| + \varepsilon \label{eq:ConverseThird2}\displaybreak[0]\\
= &\bigg| \min_{\PP_{V|W_1Y}} \E \Big[ d(S,V)\Big]  -  \sum_{y^n,m, F} \prob(y^n,m, F) \nonumber\\
& \times  \min_{h_{V^n|Y^n}} \sum_{s^n} \prob(s^n|y^n,F) \bigg[\frac{1}{n} \sum_{i=1}^n d(s_i,v_i)\bigg] \bigg| + \varepsilon \label{eq:ConverseThird3}\displaybreak[0]\\
\leq &\bigg| \min_{\PP_{V|W_1Y}} \E \Big[ d(S,V)\Big]  -  \sum_{y^n,m} \prob(y^n,m| F=0)\nonumber\\
&  \times  \min_{h_{V^n|Y^n}} \sum_{s^n} \prob(s^n|y^n,F=0) \bigg[\frac{1}{n} \sum_{i=1}^n d(s_i,v_i)\bigg] \bigg| \nonumber \\&+ \prob(F=1) 2\bar{d} + \varepsilon \label{eq:ConverseThird4}\displaybreak[0]\\
= &\bigg| \min_{\PP_{V|W_1Y}} \E \Big[ d(S,V)\Big]  -  \min_{\PP_{V|W_1Y}} \E \Big[ d(S,V)\Big] \bigg|\nonumber\\
&  + \prob(F=1) 2\bar{d} + \varepsilon \label{eq:ConverseThird5}\displaybreak[0]\\
= & \prob(F=1) 2\bar{d} + \varepsilon  \leq \varepsilon \cdot(2\bar{d} + 1)\label{eq:ConverseThird6},
\end{align}
where \eqref{eq:ConverseThird1} comes from the triangle inequality; 
\eqref{eq:ConverseThird2} comes from assumption \eqref{eq:DefinitionEstimation3}; 
\eqref{eq:ConverseThird3} is a reformulation that introduces the random event $F\in \{0,1\}$; 
\eqref{eq:ConverseThird4} comes from removing the \mrb{term} $\prob(F=1) \bar{d}$ from the  triangle inequality; 
\eqref{eq:ConverseThird5} comes from \eqref{eq:ConverseSecond6}; 
\eqref{eq:ConverseThird6} comes from assumption \eqref{eq:DefinitionEstimation2}. \mrb{This concludes the converse proof of Theorem \ref{theo:DecoderEstimation}.}


 



\begin{IEEEbiographynophoto}{Ma\"{e}l~Le~Treust}
(Member, IEEE) received the M.Sc. degree in optimization, game theory, and economics (OJME) from the Universit\'{e} de Paris VI (UPMC), France, in 2008, and the Ph.D. degree from the Laboratoire des signaux et syst\`{e}mes UMR 8506, Universit\'{e} de Paris Sud XI, Gif-sur-Yvette, France, in 2011. From 2012 to 2013, he was a Post-Doctoral Researcher with the Institut d'\'{e}lectronique et d'informatique Gaspard Monge, Universit\'{e} Paris- Est, Marne-la-Vall\'{e}e, France, and with the Centre \'{E}nergie, Mat\'{e}riaux et T\'{e}l\'{e}communication, Universit\'{e} INRS, Montr\'{e}al, Canada. He has been a CNRS Researcher with the ETIS Laboratory UMR 8051, CY Cergy Paris Universit\'{e}, ENSEA, CNRS, Cergy, France, since October 2013. His research interests include information theory, wireless communications, game theory, economics, and optimization.
\end{IEEEbiographynophoto}

\begin{IEEEbiographynophoto}{Matthieu~R.~Bloch}
 (Senior Member, IEEE) received the Engineering degree from Sup\'{e}lec, Gif-sur-Yvette, France, the M.S. degree in electrical engineering from the Georgia Institute of Technology, Atlanta, GA, USA, in 2003, the Ph.D. degree in engineering science from the Universit\'{e} de Franche-Comt\'{e}, Besan\c{c}on, France in 2006, and the Ph.D. degree in electrical engineering from the Georgia Institute of Technology in 2008. From 2008 to 2009, he was a Post-Doctoral Research Associate with the University of Notre Dame, South Bend, IN, USA. Since July 2009, he has been on the faculty of the School of Electrical and Computer Engineering. From 2009 to 2013, he was based with Georgia Tech Lorraine. He is currently an Associate Professor with the School of Electrical and Computer Engineering, Georgia Institute of Technology. He is a coauthor of the textbook \emph{Physical-Layer Security: From Information Theory to Security Engineering} (Cambridge University Press). 
His research interests are in the areas of information theory, error-control coding, wireless communications, and security. He was a co-recipient of the IEEE Communications Society and the IEEE Information Theory Society 2011 Joint Paper Award. He has served on the organizing committee of several international conferences. He was the Chair of the Online Committee of the IEEE Information Theory Society from 2011 to 2014. He has been on the Board of Governors of the IEEE Information Theory Society since 2016. He was an Associate Editor of the IEEE TRANSACTIONS ON INFORMATION THEORY from 2016 to 2019. He has been an Associate Editor of the IEEE TRANSACTIONS ON INFORMATION FORENSICS AND SECURITY since 2019.
\end{IEEEbiographynophoto}


%
%

\newpage
\setcounter{page}{1}

\begin{center}
\begin{large}
\textsc{\textbf{State Leakage and Coordination\\ with Causal State Knowledge at the Encoder}\\
-- Supplementary Materials -- }\\
\end{large}
Ma\"{e}l~Le~Treust and Matthieu~R.~Bloch\\
\end{center}

\mrb{
\section{Case of $H(S)=0$ in the achievability proof of Theorem \ref{theo:LeakageCE}}\label{sec:HSzeroAchievabilityCE}
In this section, we consider the case where $H(S)=0$. We choose a target distribution $\QQ_{XYV}$ that decomposes as $\QQ_{XYV} =   \QQ_{X} \mc{T}_{Y|X}      \QQ_{V|XY}$. Standard channel coding arguments, see \cite[Sec. 3.1]{ElGamalKim(book)11}, show that a pair  $(\textsf{R},\QQ_{XYV} )$ is achievable if and only if 
\begin{align}
\textsf{R} &\leq  I(X; Y).\label{eq:theoremCEachie3_HSzero}
\end{align}
We recover the information constraints of Theorem \ref{theo:LeakageCE} by setting $\textsf{E}=0$, $W_1=X$ and $|\mc{W}_2|=1$.}

\mrb{
\section{Case of equality in \eqref{eq:RateConstraint2ce}}\label{sec:EqualityAchievabilityCE}
In this section, we consider $(\textsf{R},\textsf{E})$ and a distribution $ \QQ_{SW_1W_2XYV} \in \Q_{\sf{c}}$ that satisfies
\begin{align}
 I(S;W_1,W_2,Y) = \textsf{E}&\leq H(S),\label{eq:RateConstraint2ceEquality}\\
 \textsf{R} + \textsf{E} &\leq I(W_1,S; Y). \label{eq:RateConstraint3ceEquality}
\end{align}
\begin{itemize}
\item[$\bullet$] If $ I(S;W_1,W_2,Y)<I(W_1,S; Y)$, then  we select $\widetilde{\textsf{E}}$ close to $\textsf{E}$ such that $I(S;W_1,W_2,Y) < \widetilde{\textsf{E}}\leq H(S)$ and ${\textsf{R}} + \widetilde{\textsf{E}} \leq I(W_1,S; Y)$, and thus the achievability proof of Appendix \ref{sec:AchievabilityLeakageCE} applies.
\item[$\bullet$] If $ I(S;W_1,W_2,Y)=I(W_1,S; Y)$ and the channel capacity, see  \cite[pp. 176]{ElGamalKim(book)11}, is strictly positive 
\begin{align}
\max_{\widetilde{\PP}_{W_1} ,\widetilde{\PP}_{X|W_1S}} I(W_1;Y) >0,\label{eq:ChannelCapacity}
\end{align}
then we refer to Sec. \ref{sec:StrictlyPositiveChannelCapacity}.
\item[$\bullet$] If $ I(S;W_1,W_2,Y)=I(W_1,S; Y)$ and the channel capacity in \eqref{eq:ChannelCapacity} is equal to zero, then we refer to Sec. \ref{sec:ZeroChannelCapacity}.
\end{itemize}}

\mrb{
\subsection{Strictly positive channel capacity}\label{sec:StrictlyPositiveChannelCapacity}
In this section, we assume that $ I(S;W_1,W_2,Y)=I(W_1,S; Y)$ and the channel capacity is strictly positive. 
We denote by $\PP^{\star}_{W_1}$ and $\PP^{\star}_{X|W_1S}$ the distributions that achieve the maximum in \eqref{eq:ChannelCapacity}. We define $\QQ^{\star}_{SW_1W_2XYV} = \PP_S     \PP^{\star}_{W_1}   \PP_{W_2}  \PP^{\star}_{X|W_1S} \mc{T}_{Y|XS} \QQ_{V} $ where  $V$ is independent of $(S,W_1,W_2,X,Y)$ and $W_2$ is independent of $(S,W_1,X,Y)$, hence $I(W_2;S|W_1)=I(W_2;Y|W_1)=0$, and therefore
\begin{align}
I_{\QQ^{\star}}(W_1,W_2;Y) - I_{\QQ^{\star}}(W_2;S|W_1)  = I_{\QQ^{\star}}(W_1;Y) >0.\label{eq:DistributionStar}
\end{align}
We assume that the distributions $\QQ_{SW_1W_2XYV}$ and $\QQ^{\star}_{SW_1W_2XYV}$ are selected according to the random variable $T\in\{0,1\}$, where $T=0$ corresponds to the distribution $\QQ_{SW_1W_2XYV}$ and $T=1$ corresponds to the distribution $\QQ^{\star}_{SW_1W_2XYV}$. By hypothesis, we have
\begin{align}
0<&\prob(T=1) \cdot I_{\QQ^{\star}}(W_1;Y) \\
=& \prob(T=0) \cdot\Big(I_{\QQ}(W_1,W_2;Y|T=0) - I_{\QQ}(W_2;S|W_1,T=0)\Big)\\
& + \prob(T=1) \cdot\Big(I_{\QQ^{\star}}(W_1,W_2;Y|T=1) - I_{\QQ^{\star}}(W_2;S|W_1,T=1)\Big)\\
=&I(W_1,W_2;Y|T) - I(W_2;S|W_1,T)\\
\leq &I(W_1,T,W_2;Y) - I(W_2;S|W_1,T)\label{eq:ICstrict}.
\end{align}
This show the existence of a distribution $\widetilde{\QQ}_{SW_1TW_2XYV} $ that decomposes as
\begin{align}
 \PP_S  \widetilde{\QQ}_{W_1T}    \widetilde{\QQ}_{W_2|SW_1T}  \widetilde{\QQ}_{X|SW_1T}
     \mc{T}_{Y|XS} \widetilde{\QQ}_{V|YW_1TW_2} , \label{eq:distributionT}
\end{align}
that converges to the distribution $\QQ_{SW_1W_2XYV}$ as $\prob(T=1)$ goes to zero, and that satisfies the information constraint \eqref{eq:ICstrict} with strict inequality. Thus the achievability proof of Appendix \ref{sec:AchievabilityLeakageCE} applies.}

\mrb{
\subsection{Strictly positive channel capacity}\label{sec:ZeroChannelCapacity}
In this section, we assume that $ I(S;W_1,W_2,Y)=I(W_1,S; Y)$ and the channel capacity is equal to zero, hence
\begin{align}
0 =& \max_{\PP_{W_1}, \PP_{X|W_1S}}  I(W_1;Y)  \label{eq:ZeroCapacity1}\\
\geq&  I(W_1 ; Y  ) \label{eq:ZeroCapacity2}\\
=&  I(S;W_1,W_2,Y) - I(S; Y|W_1) \label{eq:ZeroCapacity3}\\
=&   I(S;W_2|W_1,Y)  \geq 0, \label{eq:ZeroCapacity4}
\end{align}
where \eqref{eq:ZeroCapacity3} comes from the hypothesis $  I(S;W_1,W_2,Y)=I(W_1,S; Y)$ and  \eqref{eq:ZeroCapacity4} comes from the independence between $W_1$ and $S$. }

\mrb{
Moreover, the two Markov chains $W_2 -\!\!\!\!\minuso\!\!\!\!-   (W_1 ,S) -\!\!\!\!\minuso\!\!\!\!- X$ and $Y -\!\!\!\!\minuso\!\!\!\!-   (X,S) -\!\!\!\!\minuso\!\!\!\!- (W_1, W_2)$ induce $W_2 -\!\!\!\!\minuso\!\!\!\!-   (W_1 ,S) -\!\!\!\!\minuso\!\!\!\!- Y$. Therefore, we have
\begin{align}
W_2 -\!\!\!\!\minuso\!\!\!\!-   (W_1 ,S) -\!\!\!\!\minuso\!\!\!\!- Y  \Longleftrightarrow &I( W_2 ; Y | W_1,S) = 0 ,\label{eq:W2MarkovChain1}\\
W_2 -\!\!\!\!\minuso\!\!\!\!-   (W_1 ,Y) -\!\!\!\!\minuso\!\!\!\!- S  \Longleftrightarrow &I( W_2 ; S | W_1,Y) = 0 .\label{eq:W2MarkovChain2}
\end{align}
As a consequence, for all $(w_1,s,y)\in\mc{W}_1\times \mc{S}\times \mc{Y}$ such that $\QQ(w_1,s,y)>0$, we have 
\begin{align}
\QQ(w_2|w_1,s,y) = \QQ(w_2|w_1,s) = \QQ(w_2|w_1,y).\label{eq:TwoMarkovProperties}
\end{align}
We design a coding scheme for this special case.\\
 \emph{Codebook.} We choose a typical sequence $W_1^n$ that is known by both the encoder and the decoder.\\
 \emph{Encoder.}  At stage $i \in \{1, \ldots,n\}$, the encoder observes the symbol $S_i$, recalls the sequence $W_1^n$ and generates $X^n$ according to the i.i.d. conditional distribution $\QQ_{X|SW_1}$.\\
 \emph{Decoder.}  The decoder observes the sequence of channel output $Y^n$, recalls the pre-defined sequence $W_1^n$ and generates sequence $W_2^n$ according to the i.i.d. conditional distribution 
 $\QQ_{W_2|YW_1}$ which, according to \eqref{eq:TwoMarkovProperties}, satisfies $\QQ_{W_2}(\cdot|w_1,y)=\QQ_{W_2}(\cdot|w_1,s)$ for all $(w_1,s,y)\in\mc{W}_1\times \mc{S}\times \mc{Y}$ such that $\QQ(w_1,s,y)>0$. Then, the decoder generates $V^n$ by using the conditional distribution $\QQ_{V|W_1W_2Y}$.}

\mrb{This coding scheme achieves the target distribution 
\begin{align}
&\PP_S    \QQ_{W_1}    \QQ_{W_2|SW_1}  \QQ_{X|SW_1}     \mc{T}_{Y|XS}    \QQ_{V|YW_1W_2}. \end{align}}

\section{Cardinality Bound for Theorem \ref{theo:LeakageCE}}\label{sec:Cardinality Bound}

\begin{lemma}\label{lemma:CardinalityBound}
We consider the following information constraints  with two auxiliary random variables $(W_1,W_2) $
\begin{align}
I(S;W_1,W_2,Y) \leq \textsf{E}&\leq H(S), \\
\textsf{R} + \textsf{E} &\leq  I(W_1,S; Y).
\end{align}
The cardinality of the supports of the auxiliary random variables $(W_1,W_2)$ are bounded by
\begin{align*}
\max(|\mc{W}_1|,|\mc{W}_2|) \leq   d+1 ,\qquad \text{ with } \qquad d= | \mc{S}\times  \mc{X}\times \mc{Y}\times \mc{V} | .
\end{align*}
\end{lemma}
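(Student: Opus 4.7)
The plan is to invoke the Fenchel-Eggleston-Carath\'eodory support lemma (see, e.g., \cite[Appendix C]{ElGamalKim(book)11}) and replace each auxiliary random variable, one at a time, by a surrogate on a smaller alphabet, while preserving the target marginal $\QQ_{SXYV}$, the two mutual informations appearing in \eqref{eq:theoremCEachie2}--\eqref{eq:theoremCEachie3}, and the Markov structure encoded by the set $\Q_{\sf{c}}$.

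First I would bound $|\mc{W}_1|$. Freezing the transition kernel $w_1 \mapsto \PP(w_2, s, x, y, v \mid w_1)$, which by \eqref{eq:distribution} amounts to freezing the collection $\QQ_{W_2 \mid S, W_1 = w_1}$, $\QQ_{X \mid S, W_1 = w_1}$, and $\QQ_{V \mid Y, W_1 = w_1, W_2}$, I apply the support lemma to the marginal $\PP(w_1)$ to preserve the following $d+1$ continuous functionals: the $d-1$ free entries of $\QQ_{SXYV}(s,x,y,v) = \sum_{w_1} \PP(w_1) \PP(s,x,y,v \mid w_1)$; the scalar $H(Y \mid S, W_1) = \sum_{w_1} \PP(w_1) H(Y \mid S, W_1 = w_1)$, which together with the already-preserved $H(Y)$ recovers $I(W_1, S; Y)$; and the scalar $H(S \mid W_1, W_2, Y) = \sum_{w_1} \PP(w_1) H(S \mid W_2, Y, W_1 = w_1)$, which together with $H(S)$ recovers $I(S; W_1, W_2, Y)$. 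The support lemma then yields a surrogate $W_1'$ with $|\mc{W}_1'| \leq d+1$.

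Second, with $|\mc{W}_1|$ already bounded, for each value $w_1$ I would freeze the kernel $w_2 \mapsto \PP(s, x, y, v \mid w_1, w_2)$ and apply the support lemma to the conditional law $\PP(w_2 \mid w_1)$. The Markov chain $Y - (W_1, S) - W_2$ noted in Remark~\ref{remark:MarkovCausal} makes $I(W_1, S; Y)$ insensitive to $\QQ_{W_2 \mid S W_1}$, so only the $d-1$ free entries of $\PP(s, x, y, v \mid w_1)$ and the scalar $H(S \mid W_2, Y, W_1 = w_1)$ need to be preserved, yielding $|\mc{W}_2^{(w_1)}| \leq d$ for each $w_1$. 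After relabeling the values of $W_2$ so that a common alphabet of size at most $d$ is used across all $w_1$, I obtain $|\mc{W}_2| \leq d \leq d+1$.

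The delicate point will be checking that the Markov structure of $\Q_{\sf{c}}$ survives each application of the support lemma. The independence of $S$ and $W_1$ persists because $\PP_S$ is exogenous and only $\PP(w_1)$ is modified; the chains $W_2 - (S, W_1) - X$, $Y - (X, S) - (W_1, W_2)$, and $V - (Y, W_1, W_2) - (S, X)$ persist because, in each reduction, the corresponding conditional kernels are frozen and the channel $\mc{T}_{Y \mid XS}$ is never touched. Verifying these invariances for the surrogate variables $(W_1', W_2')$ is the only nontrivial step of the proof.
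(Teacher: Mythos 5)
Your proof is correct, and it takes a genuinely different route from the paper's. The paper applies the support lemma a single time to the block variable $(W_1,W_2)$: it freezes the kernel $\PP_{SXYV\mid W_1W_2}$, preserves the $d-1$ free entries of $\QQ_{SXYV}$ together with $H(Y\mid S,W_1,W_2)$ and $H(S\mid Y,W_1,W_2)$, and reads off $\max(|\mc{W}'_1|,|\mc{W}'_2|)\le d+1$ from the fact that the reduced joint law of $(W_1,W_2)$ has at most $d+1$ support points. You instead carry out two sequential reductions: one on $\PP(w_1)$ with the kernel $\PP(w_2,s,x,y,v\mid w_1)$ held fixed, then one on $\PP(w_2\mid w_1)$ for each $w_1$ with $\PP(s,x,y,v\mid w_1,w_2)$ held fixed. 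Two things distinguish your route. First, it handles the structural constraints defining $\Q_{\sf{c}}$ more transparently: in each pass only a single marginal (or conditional) law is reweighted while the downstream kernel is frozen, so the independence of $S$ from $W_1$ and the three Markov chains of $\Q_{\sf{c}}$ survive by direct inspection, exactly as you flag in your last paragraph. The paper's one-shot reweighting of $\PP_{W_1W_2}$ is not manifestly compatible with those constraints --- for instance $\PP'(s\mid w_1)=\sum_{w_2}\PP'(w_2\mid w_1)\PP(s\mid w_1,w_2)$ need not equal $\PP(s)$ once the mixture weights over $w_2$ change --- so you address a verification the paper glosses over. Second, because the Markov chain between $Y$ and $W_2$ through $(W_1,S)$ makes $H(Y\mid S,W_1)$ insensitive to $\QQ_{W_2\mid SW_1}$, your second pass only needs $d$ preserved functionals, giving the marginally tighter $|\mc{W}_2|\le d$; both bounds are of course consistent with the statement of the lemma.
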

This result is based on the Lemma of Fenchel-Eggleston-Carath\'eodory, see \cite[pp. 631]{ElGamalKim(book)11}, \mrb{and the proof is provided below. From} Lemma \ref{lemma:CardinalityBound}, \mrb{we deduce} the cardinality bounds of Theorems \ref{theo:LeakageCE2SI} and \ref{theo:LeakageCENoisyFeedback}.

\begin{proof}[Lemma \ref{lemma:CardinalityBound}]
We denote by $d = | \mc{S}\times \mc{X}\times \mc{Y}  \times \mc{V}  |   $, the cardinality of the product of the discrete sets. We consider the family  of continuous functions $h_i : \Delta(  \mc{S}\times \mc{X}\times \mc{Y}  \times \mc{V}  ) \mapsto \R$, with $i \in \{1,\ldots,d+1\}$, defined as follows:
\begin{align*}
h_i\Big(\PP_{SXYV|W_1W_2}\Big) = 
\begin{cases}
\PP_{SXYV|W_1W_2} (i) ,\; \text{ for } i \in \big\{ 1,\ldots, d -1\big\},\\
H(Y | S, W_1=w_1) ,\quad \text{ for } i= d,\\
H(S | Y, W_1=w_1,W_2=w_2) ,\quad \text{ for } i= d+1.
\end{cases}
\end{align*}
The support Lemma, see \cite[pp. 631]{ElGamalKim(book)11}, implies that there exists a pair of auxiliary random variables $(W'_1, W'_2) \sim \PP_{W_1'W_2'}$  defined on a  set $\mc{W}'_1 \times \mc{W}'_2$ with finite cardinality $\max( |\mc{W}'_1| ,  |\mc{W}'_2| ) \leq  d+1$ such that for all $i \in \{1 , \ldots,d+1\}$ we have:
\begin{align*}
\int_{\mc{W}_1 \times \mc{W}_2} h_i\Big(\PP_{SXYV|W_1W_2}\Big) dF(w_1,w_2) =& \sum_{(w'_1,w'_2) \in \mc{W}'_1 \times  \mc{W}'_2} h_i\Big(\PP_{SXYV|W_1'W_2'}\Big) \PP(w'_1w'_2).
\end{align*}
This implies that the probability $\PP_{SXYV}$ is preserved and we have:
\begin{align*}
\PP_{SXYV} (i) =& \int_{\mc{W}_1\times \mc{W}_2}  \PP_{SXYV|W_1W_2} (i) dF(w_1,w_2) \nonumber \\
=& \sum_{ (w'_1,w'_2) \in \mc{W}'_1 \times  \mc{W}'_2 }  \PP_{SXYV|W_1'W_2'} (i) \PP(w'_1,w'_2), \quad \text{ for } i \in \big\{ 1,\ldots,  d-1\big\}\\
H(Y | S, W_1) =& \int_{\mc{W}_1 }  H(Y| S, W_1 = w_1) dF(w_1) ) \nonumber \\
=& \sum_{(w'_1)  \in \mc{W}'_1  }  H(Y| S,W'_1 = w'_1) \PP(w'_1) = H(Y| S,W'_1),\\
H(S |Y, W_1 , W_2) =& \int_{\mc{W}_1 \times \mc{W}_2}  H(S|Y,W_1 = w_1 , W_2 = w_2) dF(w_1,w_2) ) \nonumber \\
=& \sum_{(w'_1,w'_2)  \in \mc{W}'_1 \times \mc{W}'_2 }  H(S|Y,W'_1 = w'_1, W'_2 = w'_2) \PP(w'_1,w'_2) = H(S|Y,W'_1,W'_2).
\end{align*}
Hence the three information constraints remain equal with $\max( |\mc{W}'_1| ,  |\mc{W}'_2| ) \leq  d+1$. 
\begin{align*}
I(S ; W_1,W_2,Y) =& H(S) - H(S|W'_1,W'_2,Y) = I(S ; W'_1,W'_2,Y),\\
I(W_1 , S ; Y )  =& H(Y) - H(Y | W'_1,S) = I(W'_1 , S ; Y ) .
\end{align*}
This concludes the proof of  Lemma \ref{lemma:CardinalityBound}. 
\end{proof}

\section{Converse proof of Theorem \ref{theo:LeakageCEd}}\label{sec:ConverseLeakageCEnoCO}

We consider that the pair of rate and state leakage $(\textsf{R},\textsf{E})$ is achievable with \mrb{a code with causal encoding}.  By definition \ref{def:CodeLeakageCEd}, for all $\varepsilon>0$, there exists $\bar{n}\in \N^{\star}$, for all $n\geq \bar{n}$, there exists a code $c^{\star} \in \C(n,\mc{M})$ that satisfies
\begin{align}
\frac{\log_2 |\mc{M}|}{n}  \geq& \textsf{R} - \varepsilon,\label{eq:converse1ceb} \\
 \bigg| \mc{L}_{\textsf{e}}(c)  - \textsf{E} \; \bigg|  =& \bigg|  \frac{1}{n} I(S^n;Y^n)  - \textsf{E} \;\bigg| \leq  \varepsilon,  \label{eq:converse3ceb} \\
 \PP_{\textsf{e}}(c)  =& \prob\bigg( M \neq \hat{M} \bigg)  \leq \varepsilon.\label{eq:converse2ceb} 
\end{align}
We introduce the auxiliary random variables $ W_{1,i} =  ( M , S^{i-1} )$  that satisfy the Markov chains of the set of distribution $\Q_{\sf{c}}$ for all $i \in \{1,\ldots,n\}$:
\begin{align}
&S_i \text{ independent of } W_{1,i}, \label{eq:ConverseCEbMarkov1} \\
&Y_i -\!\!\!\!\minuso\!\!\!\!- (X_i , S_i ) -\!\!\!\!\minuso\!\!\!\!-    W_{1,i} ,  \label{eq:ConverseCEbMarkov3}   
\end{align}
where \eqref{eq:ConverseCEbMarkov1} comes from the i.i.d. property of the source that induces the independence between $S_i$ and  $  ( M ,  S^{i-1} ) = W_{1,i} $; \eqref{eq:ConverseCEbMarkov3} comes from the memoryless property of the channel $ \mc{T}_{Y|XS}$.

We introduce the  random variable  $T$ that is uniformly distributed over the indices $\{1 , \ldots, n\}$ and the corresponding mean random variables $(S_T,X_T,W_{1,T},Y_T)$.  The  auxiliary random variables $W_1=(W_{1,T} , T)$   belongs to the set of distributions $\Q_{\sf{c}}$  and satisfies the three information constraints of Theorem \ref{theo:LeakageCEd}
\begin{align}
I(S;Y |W_1) \leq \textsf{E}&\leq H(S),\label{eq:theoremCEconverse2bbis}\\
\textsf{R} + \textsf{E} &\leq I(W_1,S; Y). \label{eq:theoremCEconverse3bbis}
\end{align}

 \textbf{First Constraint:}
\begin{align}
n \textsf{E} \geq & I(S^n ; Y^n) -  n\varepsilon\label{eq:SecondConverse1ceb}\\
= & I(S^n;Y^n , M)  -  I(S^n ; M | Y^n) -  n\varepsilon\label{eq:SecondConverse2ceb}\\
\geq& \sum_{i=1}^n  I(S_i;Y^n , M | S^{i-1}) -  H(M | Y^n)-  n\varepsilon \label{eq:SecondConverse3ceb}\\
\geq& \sum_{i=1}^n  I(S_i;Y^n , M | S^{i-1}) -  n2\varepsilon \label{eq:SecondConverse4ceb}\\
=& \sum_{i=1}^n  I(S_i;Y^n | M , S^{i-1}) -    n2\varepsilon \label{eq:SecondConverse5ceb}\\
\geq& \sum_{i=1}^n  I(S_i; Y_i | M , S^{i-1}) -    n2\varepsilon \label{eq:SecondConverse6ceb}\\
=& \sum_{i=1}^n  I(S_i; Y_i| W_{1,i}  ) -    n2\varepsilon \label{eq:SecondConverse7ceb}\\
=& n   I(S_T;  Y_T | W_{1,T} , T ) -    n2\varepsilon \label{eq:SecondConverse8ceb}\\
= & n  \bigg(  I(S ; Y | W_1)  -  2\varepsilon \bigg),  \label{eq:SecondConverse10ceb}
\end{align}
where  \eqref{eq:SecondConverse1ceb} comes from the definition of achievable state leakage rate $\textsf{E}$, stated in equation \eqref{eq:converse3ceb}; \eqref{eq:SecondConverse2ceb}   and  \eqref{eq:SecondConverse3ceb}  come from the properties of the mutual information; \eqref{eq:SecondConverse4ceb} comes from equation \eqref{eq:converse2ceb} and  Fano's inequality, see  \cite[pp. 19]{ElGamalKim(book)11}; \eqref{eq:SecondConverse5ceb}  comes from the independence between the message $M$ and the channel states $(S^{i-1} , S_i)$; \eqref{eq:SecondConverse6ceb}  comes from the properties of the mutual information; \eqref{eq:SecondConverse7ceb}  comes from the introduction of the auxiliary random variable $ W_{1,i} =  ( M,S^{i-1}  )$, for all $i \in \{1,\ldots,n\}$; \eqref{eq:SecondConverse8ceb}   comes from the introduction of the uniform random variable $T$ over $\{1 , \ldots, n\}$ and the corresponding mean random variables  $S_T$,  $W_{1,T}$,  $Y_T$; \eqref{eq:SecondConverse10ceb}   comes from identifying $W_1=(W_{1,T} , T)$ and $S = S_T$,  $Y = Y_T$.

\textbf{Second Constraint:}
From \eqref{eq:ThirdConverse3ce}, we have
\begin{align}
n \textsf{E} \leq &  n  \bigg( H(S) + \varepsilon \bigg). 
\end{align}


\textbf{Third Constraint:}
From \eqref{eq:FourthConverse1ce} - \eqref{eq:FourthConverse16ce}, we have
\begin{align}
n \bigg( \textsf{E} +  \textsf{R} \bigg) 
\leq n   \bigg( I(S ,W_1 ; Y)   +  3\varepsilon \bigg).
\end{align}

 \textbf{Conclusion:} 
If the pair of rate and state leakage $(\textsf{R},\textsf{E})$ is achievable with \mrb{a code with} causal encoding, then the following equations are satisfied  for all $\varepsilon>0$
\begin{align}
 I(S;Y | W_1) - 2 \varepsilon \leq \textsf{E}&\leq H(S) + \varepsilon,\\
\textsf{R} + \textsf{E} &\leq I(S ,W_1 ; Y)+ 3 \varepsilon.
\end{align}
This corresponds to equations  \eqref{eq:theoremCEachie2noCO} and  \eqref{eq:theoremCEachie3noCO} and \mrb{this} concludes the converse proof of Theorem \ref{theo:LeakageCEd}.

\begin{remark}
For the converse proof of Theorem \ref{theo:LeakageCEd}, the \mrb{code with} causal encoding is not necessarily deterministic. The same optimal performances can be obtained by considering a stochastic \mrb{code with} causal encoding.
\end{remark}

\section{Converse proof of Theorem \ref{theo:LeakageCE2SI}}\label{sec:ConverseTheoSCE2SI}

Consider that the triple of rate, state leakage and distribution $(\textsf{R},\textsf{E},\mrb{\QQ_{USZXYV}})$ is achievable with a \mrb{code with} causal \mrb{encoding}. \mrb{We simplify the notation by using $\QQ$ in place of $\QQ_{USZXYV}$, and we} introduce the random  event of error $E \in \{0,1\}$ defined with respect to the achievable distribution $\QQ$ \mrb{by}
\begin{align}
E = \Bigg\{
\begin{array}{lll}
0 \text{ if }& \big|\big|{Q}^n - \QQ  \big|\big|_{1} \leq \varepsilon\quad \Longleftrightarrow \quad (U^n,S^n,Z^n,X^n,Y^n,V^n) \in T_{\delta}(\QQ) ,\\
1 \text{ if }& \big|\big|{Q}^n - \QQ  \big|\big|_{1} > \varepsilon \quad \Longleftrightarrow \quad (U^n,S^n,Z^n,X^n,Y^n,V^n) \notin T_{\delta}(\QQ).
\end{array}
\Bigg.
\end{align}
The event $E=1$ occurs if the sequences $(U^n,S^n,Z^n,X^n,Y^n,V^n)\notin T_{\delta}(\QQ)$ for the target distribution $\QQ$. By definition \ref{def:CodeLeakageCE2SI}, for all $\varepsilon>0$, there exists $\bar{n}\in \N^{\star}$, for all $n\geq \bar{n}$, there exists a code $c^{\star} \in \C(n,\mc{M})$ that satisfies 
\begin{align}
\frac{\log_2 |\mc{M}|}{n}  \geq& \textsf{R} - \varepsilon, \label{eq:converse1ce2si}  \\
 \bigg| \mc{L}_{\textsf{e}}(c^{\star})  - \textsf{E} \bigg|  =& \bigg|  \frac{1}{n} I(U^n,S^n;Y^n,Z^n)  - \textsf{E} \bigg| \leq  \varepsilon,   \label{eq:converse3ce2si} \\
\PP_{\textsf{e}}(c^{\star})  =& \prob\bigg( M \neq \hat{M} \bigg) + \prob\bigg(\Big|\Big|Q^n - \QQ \Big|\Big|_{1}> \varepsilon\bigg) \leq \varepsilon. \label{eq:converse2ce2si}  
\end{align}
We introduce the auxiliary random variables $ W_{1,i} =  ( M , U^{i-1},S^{i-1} )$ and $ W_{2,i} =  (Y_{i+1}^n ,Z_{i+1}^n ) $, that satisfy the Markov chain of the set of distribution $\Q_{\textsf{s}}$ for all $i \in \{1,\ldots,n\}$:
\begin{align}
&(U_i,S_i) \text{ independent of } W_{1,i}, \label{eq:ConverseCE2SIMarkov1} \\
&X_i -\!\!\!\!\minuso\!\!\!\!- (U_i,S_i ,W_{1,i}) -\!\!\!\!\minuso\!\!\!\!-  W_{2,i} , \label{eq:ConverseCE2SIMarkov2} \\ 
&Y_i -\!\!\!\!\minuso\!\!\!\!- (X_i , S_i ) -\!\!\!\!\minuso\!\!\!\!-  ( U_i, Z_i,W_{1,i},W_{2,i} ) , \label{eq:ConverseCE2SIMarkov3}\\
&Z_i -\!\!\!\!\minuso\!\!\!\!- (U_i , S_i ) -\!\!\!\!\minuso\!\!\!\!-  ( X_i, Y_i,W_{1,i},W_{2,i} ) , \label{eq:ConverseCE2SIMarkov4} \\
&V_i -\!\!\!\!\minuso\!\!\!\!- (Y_i,Z_i,W_{1,i}, W_{2,i} ) -\!\!\!\!\minuso\!\!\!\!-  (U_i,S_i ,X_i) , \label{eq:ConverseCE2SIMarkov5}
\end{align}
where \eqref{eq:ConverseCE2SIMarkov1} comes from the i.i.d. property of the source that induces the independence between $(U_i,S_i)$ and  $  ( M , U^{i-1},S^{i-1} ) = W_{1,i} $; \eqref{eq:ConverseCE2SIMarkov2} comes from Lemma  \ref{lemma1:ConverseCE2SIMarkov1}. It is a direct consequence of the causal encoding function, the memoryless property of the channel and the i.i.d. property of the source; \eqref{eq:ConverseCE2SIMarkov3} comes from the memoryless property of the channel $ \mc{T}_{Y|XS}$; \eqref{eq:ConverseCE2SIMarkov4} comes from the i.i.d. property of the source $   \PP_{USZ}$; \eqref{eq:ConverseCE2SIMarkov5} comes from Lemma \ref{lemma2:ConverseCE2SIMarkov1}. It is a direct consequence of the causal encoding function, the non-causal decoding function, the memoryless property of the channel and the i.i.d. property of the source.

We introduce the random variable $T$ that is uniformly distributed over the indices $\{1 , \ldots, n\}$ and the corresponding mean random variables $W_{1,T}$, $W_{2,T}$, $U_T$, $S_T$, $Z_T$, $X_T$, $Y_T$, $V_T$. The  auxiliary random variables $W_1=(W_{1,T} , T)$ and $W_2 = W_{2,T}$ belong to the set of distributions $\Q_{\textsf{s}}$ and satisfy the three information constraints of Theorem \ref{theo:LeakageCE2SI}:
\begin{align}
 I(U,S;W_1,W_2,Y,Z) \leq \textsf{E}&\leq H(U,S),\label{eq:theoremCE2SIconverse2b}\\
 \textsf{R} + \textsf{E} &\leq I(W_1,U,S; Y,Z). \label{eq:theoremCE2SIconverse3b}
\end{align}

 \textbf{First Constraint:}
\begin{align}
n \textsf{E} \geq & I(U^n , S^n ; Y^n , Z^n) -  n\varepsilon\label{eq:SecondConverse1ce2si}\\
= & I(U^n , S^n;Y^n , Z^n , M)  -  I(U^n , S^n ; M | Y^n , Z^n) -  n\varepsilon\label{eq:SecondConverse2ce2si}\displaybreak[0]\\
\geq& \sum_{i=1}^n  I(U_i , S_i;Y^n , Z^n , M | U^{i-1} , S^{i-1}) -  H(M | Y^n , Z^n)-  n\varepsilon \label{eq:SecondConverse3ce2si}\displaybreak[0]\\
\geq& \sum_{i=1}^n  I(U_i , S_i;Y^n , Z^n , M | U^{i-1} , S^{i-1}) -  n2\varepsilon \label{eq:SecondConverse4ce2si}\displaybreak[0]\\
=& \sum_{i=1}^n  I(U_i , S_i;Y^n , Z^n , M , U^{i-1} , S^{i-1}) -    n2\varepsilon \label{eq:SecondConverse5ce2si}\displaybreak[0]\\
\geq& \sum_{i=1}^n  I(U_i , S_i;Y_{i+1}^n , Z_{i+1}^n , M , U^{i-1} , S^{i-1},Y_i , Z_i) -    n2\varepsilon \label{eq:SecondConverse6ce2si}\displaybreak[0]\\
=& \sum_{i=1}^n  I(U_i , S_i; W_{1,i} , W_{2,i},Y_i , Z_i ) -    n2\varepsilon \label{eq:SecondConverse7ce2si}\displaybreak[0]\\
=& n   I(U_T , S_T; W_{1,T} , W_{2,T} ,Y_T , Z_T| T ) -    n2\varepsilon \label{eq:SecondConverse8ce2si}\displaybreak[0]\\
=& n   I(U_T , S_T; W_{1,T} , W_{2,T},Y_T , Z_T, T ) -    n2\varepsilon \label{eq:SecondConverse9ce2si}\displaybreak[0]\\
= & n  \bigg(  I(U_T , S_T ; W_1,W_2,Y_T , Z_T)  -  2\varepsilon \bigg)  \label{eq:SecondConverse10ce2si}\displaybreak[0]\\
\geq& n  \bigg(  I(U_T , S_T ; W_1,W_2,Y_T , Z_T | E=0)  -  3\varepsilon \bigg)  \label{eq:SecondConverse11ce2si}\displaybreak[0]\\
\geq & n  \bigg(  I( U , S ; W_1,W_2,Y, Z)  -  4\varepsilon \bigg)  \label{eq:SecondConverse12ce2si},
\end{align}
where  \eqref{eq:SecondConverse1ce2si} comes from the definition of achievable state leakage rate $\textsf{E}$, stated in equation \eqref{eq:converse3ce2si}; \eqref{eq:SecondConverse2ce2si}   and  \eqref{eq:SecondConverse3ce2si}  come from the properties of the mutual information; \eqref{eq:SecondConverse4ce2si} comes from equation \eqref{eq:converse2ce2si} and  Fano's inequality, see \cite[pp. 19]{ElGamalKim(book)11}; \eqref{eq:SecondConverse5ce2si}  comes from the i.i.d. property of the channel states that implies $(U_i , S_i)$ is independent of $(U^{i-1} , S^{i-1})$; \eqref{eq:SecondConverse6ce2si}  comes from the properties of the mutual information; \eqref{eq:SecondConverse7ce2si}  comes from the introduction of the auxiliary random variables $ W_{1,i} =  ( M,U^{i-1} , S^{i-1}  )$ and $ W_{2,i} =   (Y_{i+1}^n , Z_{i+1}^n )$, for all $i \in \{1,\ldots,n\}$; \eqref{eq:SecondConverse8ce2si}   comes from the introduction of the uniform random variable $T$ over $\{1 , \ldots, n\}$ and the corresponding mean random variables  $U_T $, $S_T$,  $W_{1,T}$, $W_{2,T}$,  $Y_T$, $Z_T$; \eqref{eq:SecondConverse9ce2si}   comes from the independence between $T$ and $(U_T , S_T)$; \eqref{eq:SecondConverse10ce2si}   comes from \mrb{the identification of} the auxiliary random variables $W_1=(W_{1,T} , T)$ and $W_2 = W_{2,T}$; \eqref{eq:SecondConverse11ce2si}  comes from the empirical coordination requirement as stated in Lemma \ref{lemma:ErrorEventCoordination01}. The sequences of symbols $(U^n , S^n,Z^n, X^n,Y^n, V^n)$ are not jointly typical with small error probability $\prob(E=1)$; \eqref{eq:SecondConverse12ce2si} comes from Lemma \ref{lemma:3}. The sequences of symbols $(U^n , S^n,Z^n, X^n,Y^n, V^n)$ are jointly typical, hence the distribution of the mean random variables $ \PP_{U_TS_TZ_TX_TY_TV_T|E=0}$ is close to the target distribution $\QQ_{USZXYV}$. The \mrb{result of} \cite[Lemma 2.7, pp. 19]{CsiszarKorner(Book)11} concludes.

 \textbf{Second Constraint:}
\begin{align}
n \textsf{E} \leq & I(U^n , S^n ; Y^n, Z^n) + n\varepsilon \label{eq:ThirdConverse1ce2si}\\
\leq & H(U^n , S^n)  + n\varepsilon \label{eq:ThirdConverse2ce2si}\\
= & n  \bigg( H(U, S) + \varepsilon \bigg), \label{eq:ThirdConverse3ce2si}
\end{align}
where \eqref{eq:ThirdConverse1ce2si} comes from the definition of the achievable state leakage rate $\textsf{E}$, stated in equation \eqref{eq:converse3ce2si}; \eqref{eq:ThirdConverse2ce2si} comes from the properties of the mutual information; \eqref{eq:ThirdConverse3ce2si} comes from the i.i.d. property of the channel states $(U,S)$.

 \textbf{Third Constraint:}
\begin{align}
n \bigg( \textsf{E} +  \textsf{R} \bigg) 
\leq & I(U^n , S^n ; Y^n , Z^n)  + H(M) +  n2\varepsilon \label{eq:FourthConverse1ce2si}\displaybreak[0]\\
= &  I(U^n , S^n ; Y^n , Z^n)  + I(M ; Y^n , Z^n ) +  H(M |Y^n , Z^n) +  n2\varepsilon \label{eq:FourthConverse2ce2si}\displaybreak[0]\\
\leq &  I(U^n , S^n ; Y^n , Z^n)  + I(M ; Y^n , Z^n ) +  n3\varepsilon \label{eq:FourthConverse3ce2si}\displaybreak[0]\\
\leq &  I(U^n , S^n ; Y^n , Z^n)  + I(M ; Y^n , Z^n | U^n , S^n) +  n3\varepsilon \label{eq:FourthConverse4ce2si}\displaybreak[0]\\
= &  I(U^n , S^n ,M ; Y^n , Z^n)  +  n3\varepsilon \label{eq:FourthConverse5ce2si}\displaybreak[0]\\
= & \sum_{i=1}^n I(U^n , S^n ,M ; Y_i , Z_i | Y_{i+1}^n , Z_{i+1}^n )  +  n3\varepsilon \label{eq:FourthConverse6ce2si}\displaybreak[0]\\
\leq & \sum_{i=1}^n I(U^n , S^n ,M ,  Y_{i+1}^n , Z_{i+1}^n   ; Y_i , Z_i )  +  n3\varepsilon \label{eq:FourthConverse7ce2si}\displaybreak[0]\\
= & \sum_{i=1}^n I(U_i , S_i ,M , U^{i-1} , S^{i-1}   ; Y_i , Z_i )\nonumber \displaybreak[0]\\
+&  \sum_{i=1}^n I(U_{i+1}^n , S_{i+1}^n , Y_{i+1}^n , Z_{i+1}^n   ;  Y_i  | U_i , S_i ,M , U^{i-1} , S^{i-1} )\nonumber \displaybreak[0]\\
+&  \sum_{i=1}^n I(U_{i+1}^n , S_{i+1}^n , Y_{i+1}^n , Z_{i+1}^n   ;  Z_i  | U_i , S_i ,M , U^{i-1} , S^{i-1} , Y_i ) +  n3\varepsilon \label{eq:FourthConverse8ce2si}\displaybreak[0]\\
= & \sum_{i=1}^n I(U_i , S_i ,M , U^{i-1} , S^{i-1}   ; Y_i , Z_i )\nonumber \displaybreak[0]\\
+&  \sum_{i=1}^n I(U_{i+1}^n , S_{i+1}^n , Y_{i+1}^n , Z_{i+1}^n   ;  Z_i  | U_i , S_i ,M , U^{i-1} , S^{i-1} , Y_i ) +  n3\varepsilon \label{eq:FourthConverse9ce2si}\displaybreak[0]\\
= & \sum_{i=1}^n I(U_i , S_i ,M , U^{i-1} , S^{i-1}    ; Y_i , Z_i ) +  n3\varepsilon \label{eq:FourthConverse10ce2si}\displaybreak[0]\\
= &  \sum_{i=1}^n I(U_i , S_i ,W_{1,i} ; Y_i , Z_i)  +  n\cdot3\varepsilon  \label{eq:FourthConverse11ce2si}\displaybreak[0]\\
= &  n   I(U_T , S_T ,W_{1,T} ; Y_T , Z_T | T)   +  n\cdot3\varepsilon \label{eq:FourthConverse12ce2si}\displaybreak[0]\\
\leq &  n   I(U_T , S_T ,W_{1,T} ,T; Y_T , Z_T)   +  n\cdot3\varepsilon \label{eq:FourthConverse13ce2si}\displaybreak[0]\\
= &  n   \bigg( I(U_T , S_T ,W_{1} ; Y_T , Z_T)   +  3\varepsilon \bigg)\label{eq:FourthConverse14ce2si}\displaybreak[0]\\
\leq &  n   \bigg( I(U_T , S_T ,W_{1} ; Y_T , Z_T |E = 0)   +  4\varepsilon \bigg)\label{eq:FourthConverse15ce2si}\displaybreak[0]\\
\leq &  n   \bigg( I(U,S ,W_{1} ; Y,Z )   +  5\varepsilon \bigg)\label{eq:FourthConverse16ce2si},
\end{align}
where \eqref{eq:FourthConverse1ce2si} comes from the definition of achievable rate and information  leakage $( \textsf{R} ,  \textsf{E})$, stated in equations \eqref{eq:converse1ce2si} and  \eqref{eq:converse3ce2si};  \eqref{eq:FourthConverse3ce2si} comes from equation \eqref{eq:converse2ce2si} and  Fano's inequality, see \cite[pp. 19]{ElGamalKim(book)11}; \eqref{eq:FourthConverse4ce2si}  comes from the independence between the message $M$ and the channel states $(U^n , S^n)$, hence $I(M ; Y^n , Z^n ) \leq I(M ; Y^n , Z^n ,U^n , S^n ) = I(M ; Y^n , Z^n |U^n , S^n ) $; \eqref{eq:FourthConverse9ce2si}  comes from the Markov chain $ Y_i  -\!\!\!\!\minuso\!\!\!\!-  (U_i , S_i ,M , U^{i-1} , S^{i-1}) -\!\!\!\!\minuso\!\!\!\!- (U_{i+1}^n , S_{i+1}^n , Y_{i+1}^n , Z_{i+1}^n )$ stated in Lemma \ref{lemma3:ConverseCE2SIMarkov1}, \mrb{which} is a direct consequence of the causal encoding function and the memoryless property of the channel; \eqref{eq:FourthConverse10ce2si}  comes from the i.i.d. property of the source that induces the  Markov chain $ Z_i  -\!\!\!\!\minuso\!\!\!\!-   (U_i , S_i  ) -\!\!\!\!\minuso\!\!\!\!-  ( U_{i+1}^n , S_{i+1}^n , Y_{i+1}^n , Z_{i+1}^n ,M , U^{i-1} , S^{i-1} , Y_i) $, hence we have:  $ I(U_{i+1}^n , S_{i+1}^n , Y_{i+1}^n , Z_{i+1}^n   ;  Z_i  | U_i , S_i ,M , U^{i-1} , S^{i-1} , Y_i )  = 0$; \eqref{eq:FourthConverse11ce2si}  comes from the introduction of the auxiliary random variable $ W_{1,i} =  ( M,U^{i-1} , S^{i-1}  )$, for all $i \in \{1,\ldots,n\}$; \eqref{eq:FourthConverse12ce2si}  comes from the introduction of the uniform random variable $T$ over $\{1 , \ldots, n\}$ and the corresponding mean random variables  $U_T , S_T$,  $W_{1,T}$, $Y_T , Z_T$; \eqref{eq:FourthConverse14ce2si}   comes from \mrb{the identification of} the auxiliary random variable $W_1=(W_{1,T} , T)$; \eqref{eq:FourthConverse15ce2si}  comes from the empirical coordination requirement as stated in Lemma \ref{lemma:ErrorEventCoordination0}. The sequences of symbols $(U^n , S^n,Z^n, X^n,Y^n , V^n)$ are not jointly typical with small error probability $\prob(E=1)$; \eqref{eq:FourthConverse16ce2si} comes from Lemma \ref{lemma:3}, \mrb{since} the sequences of symbols $(U^n , S^n,Z^n, X^n,Y^n, V^n)$ are jointly typical, hence the distribution of the mean random variables $ \PP_{U_TS_TZ_TX_TY_TV_T|E=0}$ is close to the target distribution $\QQ_{USZXYV}$.  The \mrb{result of} \cite[Lemma 2.7, pp. 19]{CsiszarKorner(Book)11} concludes.

 \textbf{Conclusion:} If the triple of rate, state leakage and distribution$(\textsf{R},\textsf{E},\mrb{\QQ_{USZXYV}})$ is achievable with \mrb{a code with} causal encoding, then  the following equations are satisfied  for all $\varepsilon>0$:
\begin{align}
 I(U,S;W_1,W_2,Y,Z) - 4 \varepsilon \leq \textsf{E}&\leq H(S) + \varepsilon,\\
\textsf{R} + \textsf{E} &\leq I(U,S ,W_{1} ; Y,Z)+ 5 \varepsilon.
\end{align}
This corresponds to \eqref{eq:theoremCE2SIachie2} and  \eqref{eq:theoremCE2SIachie3} and \mrb{this} concludes the converse proof of Theorem \ref{theo:LeakageCE2SI}. 

\begin{remark}
For the converse proof of Theorem \ref{theo:LeakageCE2SI}, the \mrb{a code with} causal encoding is not necessarily deterministic. The same optimal performances can be obtained by considering a stochastic \mrb{code with} causal encoding.
\end{remark}


\begin{lemma}\label{lemma1:ConverseCE2SIMarkov1}
The causal encoding function, the memoryless property of the channel and the i.i.d. property of the source induce the Markov chain property 
 \begin{align}
  X_i -\!\!\!\!\minuso\!\!\!\!- (U_i , S_i ,  W_{1,i}) -\!\!\!\!\minuso\!\!\!\!-    W_{2,i} .
\end{align}
This Markov chain  is satisfied with $ W_{1,i} =  ( M , U^{i-1},S^{i-1} )$ and $ W_{2,i} =  (Y_{i+1}^n ,Z_{i+1}^n ) $, for all $i \in \{1,\ldots,n\}$.
\end{lemma}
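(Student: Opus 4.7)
The plan is to verify the Markov chain by direct computation, following the same template as Lemma \ref{lemma1:ConverseCEMarkov1} but with the additional side-information sequence $Z^n$ accounted for. Concretely, with the identifications $w_{1,i} = (m, u^{i-1}, s^{i-1})$ and $w_{2,i} = (y_{i+1}^n, z_{i+1}^n)$, I would show that for every tuple $(u^n,s^n,z^n,x^n,y^n,m)$ of positive probability,
\begin{align*}
\PP(y_{i+1}^n, z_{i+1}^n \mid u_i, s_i, w_{1,i}, x_i) = \PP(y_{i+1}^n, z_{i+1}^n \mid u_i, s_i, w_{1,i}).
\end{align*}

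First, I would expand the left-hand side by marginalizing over the latent future variables $(u_{i+1}^n, s_{i+1}^n, x_{i+1}^n)$ and factorize the summand into three conditional distributions: the law of $(u_{i+1}^n, s_{i+1}^n, z_{i+1}^n)$ given the conditioning set; the encoder's law for $x_{i+1}^n$; and the channel's law for $y_{i+1}^n$.

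Next, I would simplify each factor using a separate structural assumption. The i.i.d. property of the source $\PP_{USZ}$ makes $(u_{i+1}^n, s_{i+1}^n, z_{i+1}^n)$ independent of $(u^i, s^i, m)$ and of $x_i$ (the latter because $X_i$ is a stochastic function of $(M, U^i, S^i)$ and the encoder's local randomness), so this factor collapses to $\prod_{j>i} \PP_{USZ}(u_j,s_j,z_j)$. The causal encoding property makes $x_{i+1}^n$ conditionally independent of $x_i$ given $(m, u^n, s^n)$, yielding $\prod_{j>i} f_j(x_j\mid m, u^j, s^j)$. The memoryless property of $\mc{T}_{Y|XS}$ gives $\prod_{j>i} \mc{T}(y_j\mid x_j, s_j)$. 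Substituting these factorizations, the variable $x_i$ disappears from every factor, so the sum coincides with the same expansion of $\PP(y_{i+1}^n, z_{i+1}^n \mid u_i, s_i, w_{1,i})$, which proves the claim.

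The only point that I expect to require genuine care is the first simplification, namely that the future source remains independent of $X_i$ even after conditioning on the past source, the state and the message. This is handled by noting that $X_i$ is a function of $(M, U^i, S^i)$ and an independent encoder randomization variable, all of which are independent of $(U_{i+1}^n, S_{i+1}^n, Z_{i+1}^n)$ by the i.i.d. source assumption and the independence of $M$ from the source. Once this is in hand, the remaining steps are purely algebraic bookkeeping modeled directly on the one-sided argument in Lemma \ref{lemma1:ConverseCEMarkov1}.
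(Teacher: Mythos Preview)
Your proposal is correct and follows essentially the same approach as the paper. The only cosmetic difference is the order of factorization: the paper first groups $(u_{i+1}^n,s_{i+1}^n,x_{i+1}^n)$ together and then $(y_{i+1}^n,z_{i+1}^n)$, invoking the causal-encoding Markov chain $X_i -\!\!\!\!\minuso\!\!\!\!- (U_i,S_i,M,U^{i-1},S^{i-1}) -\!\!\!\!\minuso\!\!\!\!- (U_{i+1}^n,S_{i+1}^n,X_{i+1}^n)$ to drop $x_i$ from the first factor and the memoryless channel/i.i.d.\ source to handle the second, whereas you split into three factors (source, encoder, channel) and remove $x_i$ from each in turn; the substance is identical.
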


\begin{proof}[Lemma \ref{lemma1:ConverseCE2SIMarkov1}] 
The auxiliary random variables $ W_{1,i} =  ( M , U^{i-1},S^{i-1} )$ and $ W_{2,i} =  (Y_{i+1}^n ,Z_{i+1}^n ) $ satisfy the following equations for all $(u^n,s^n,z^n,x^n,y^n,v^n,m)$:
\begin{align}
&\PP( w_{2,i}   | u_i,s_i , w_{1,i} ,x_i)  = \PP( y_{i+1}^n ,z_{i+1}^n  | u_i,s_i , m , u^{i-1},s^{i-1} ,x_i) \nonumber \displaybreak[0]\\
=&\sum_{u_{i+1}^n ,s_{i+1}^n,x_{i+1}^n}  \PP( u_{i+1}^n ,s_{i+1}^n,x_{i+1}^n ,y_{i+1}^n ,z_{i+1}^n  | u_i,s_i , m , u^{i-1},s^{i-1} ,x_i)  \label{eq:lemma1ConverseCE2sI1}   \displaybreak[0]\\
=&\sum_{u_{i+1}^n ,s_{i+1}^n,x_{i+1}^n}  \PP( u_{i+1}^n ,s_{i+1}^n,x_{i+1}^n  | u_i,s_i , m , u^{i-1},s^{i-1} ,x_i) \nonumber \displaybreak[0]\\
& \PP( y_{i+1}^n ,z_{i+1}^n |u_{i+1}^n ,s_{i+1}^n,x_{i+1}^n ,u_i,s_i , m , u^{i-1},s^{i-1} ,x_i)   \label{eq:lemma1ConverseCE2sI2}  \displaybreak[0]\\
=&\sum_{u_{i+1}^n ,s_{i+1}^n,x_{i+1}^n}  \PP( u_{i+1}^n ,s_{i+1}^n,x_{i+1}^n  | u_i,s_i , m , u^{i-1},s^{i-1} )  \nonumber \displaybreak[0]\\
& \PP( y_{i+1}^n ,z_{i+1}^n |u_{i+1}^n ,s_{i+1}^n,x_{i+1}^n ,u_i,s_i , m , u^{i-1},s^{i-1} ,x_i)    \label{eq:lemma1ConverseCE2sI3}  \displaybreak[0]\\
=&\sum_{u_{i+1}^n ,s_{i+1}^n,x_{i+1}^n}  \PP( u_{i+1}^n ,s_{i+1}^n,x_{i+1}^n  | u_i,s_i , m , u^{i-1},s^{i-1} )\PP( y_{i+1}^n   | s_{i+1}^n,x_{i+1}^n) \PP(z_{i+1}^n  | u_{i+1}^n ,s_{i+1}^n)   \label{eq:lemma1ConverseCE2sI4}  \displaybreak[0]\\
=&\sum_{u_{i+1}^n ,s_{i+1}^n,x_{i+1}^n} \PP( u_{i+1}^n ,s_{i+1}^n,x_{i+1}^n ,y_{i+1}^n ,z_{i+1}^n  | u_i,s_i , m , u^{i-1},s^{i-1} )   \label{eq:lemma1ConverseCE2sI44} \displaybreak[0]\\
=& \PP( y_{i+1}^n ,z_{i+1}^n  | u_i,s_i , m , u^{i-1},s^{i-1} ) =  \PP( w_{2,i}   | u_i,s_i , w_{1,i}  ) ,  \label{eq:lemma1ConverseCE2sI5}
\end{align}
where \eqref{eq:lemma1ConverseCE2sI3} comes from the causal encoding function that induces the Markov chain $X_i -\!\!\!\!\minuso\!\!\!\!- (U_i,S_i , M , U^{i-1},S^{i-1}) -\!\!\!\!\minuso\!\!\!\!-  (U_{i+1}^n ,S_{i+1}^n,X_{i+1}^n)$; \eqref{eq:lemma1ConverseCE2sI4} comes from the memoryless property of the channel $ Y_{i+1}^n -\!\!\!\!\minuso\!\!\!\!- (S_{i+1}^n,X_{i+1}^n ) -\!\!\!\!\minuso\!\!\!\!-  (U_{i+1}^n , U_i,S_i , M , U^{i-1},S^{i-1} ,X_i)$ and the i.i.d. property of the source $ Z_{i+1}^n -\!\!\!\!\minuso\!\!\!\!- (S_{i+1}^n,U_{i+1}^n ) -\!\!\!\!\minuso\!\!\!\!-  (U_{i+1}^n , U_i,S_i , M , U^{i-1},S^{i-1} ,X_i,Y_{i+1}^n)$.  This concludes the proof of Lemma \ref{lemma1:ConverseCE2SIMarkov1}.
\end{proof}


\begin{lemma}\label{lemma2:ConverseCE2SIMarkov1}
The causal encoding function, the non-causal decoding function, the memoryless property of the channel and the i.i.d. property of the source induce the Markov chain property
 \begin{align}
 V_i  -\!\!\!\!\minuso\!\!\!\!- ( Y_i ,Z_i, W_{1,i} , W_{2,i}   )   -\!\!\!\!\minuso\!\!\!\!-  (U_i, S_i, X_i).\end{align}
This Markov chain  is satisfied with $ W_{1,i} =  ( M , U^{i-1},S^{i-1} )$ and $ W_{2,i} =  (Y_{i+1}^n ,Z_{i+1}^n ) $, for all $i \in \{1,\ldots,n\}$.
\end{lemma}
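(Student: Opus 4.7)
The plan is to mirror the proof of Lemma~\ref{lemma2:ConverseCEMarkov1}, with the channel state $S$ replaced by the pair $(U,S)$ and the channel output $Y$ replaced by the pair $(Y,Z)$, while carefully accounting for the extra side information variable $Z^n$ that enters through both the memoryless-source structure and the non-causal decoder. Concretely, I would fix an arbitrary realization $(u^n,s^n,z^n,x^n,y^n,v^n,m)$, identify $w_{1,i}=(m,u^{i-1},s^{i-1})$ and $w_{2,i}=(y_{i+1}^n,z_{i+1}^n)$, and expand
\[
\PP(v_i \mid y_i,z_i,w_{1,i},w_{2,i},u_i,s_i,x_i)
= \sum_{x^{i-1},y^{i-1},z^{i-1}} \PP(x^{i-1},y^{i-1},z^{i-1},v_i \mid y_i,z_i,m,u^{i-1},s^{i-1},y_{i+1}^n,z_{i+1}^n,u_i,s_i,x_i)
\]
and then apply the chain rule to split this summand into four conditional factors in the order $x^{i-1}$, $y^{i-1}$, $z^{i-1}$, $v_i$.

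The core of the argument is to drop the conditioning on $(u_i,s_i,x_i)$ in each of these four factors using a distinct structural property of the model. I would invoke:
\begin{enumerate}
\item[(a)] the causal encoding, giving $X^{i-1}-\!\!\!\!\minuso\!\!\!\!-(M,U^{i-1},S^{i-1})-\!\!\!\!\minuso\!\!\!\!-(U_i,S_i,X_i,Y_i,Z_i,Y_{i+1}^n,Z_{i+1}^n)$, to simplify the first factor to $\PP(x^{i-1}\mid m,u^{i-1},s^{i-1})$;
\item[(b)] the memoryless channel, giving $Y^{i-1}-\!\!\!\!\minuso\!\!\!\!-(S^{i-1},X^{i-1})-\!\!\!\!\minuso\!\!\!\!-(\text{everything else})$, to simplify the second factor to $\PP(y^{i-1}\mid s^{i-1},x^{i-1})$;
\item[(c)] the i.i.d.\ joint source $\PP_{USZ}$, giving $Z^{i-1}-\!\!\!\!\minuso\!\!\!\!-(U^{i-1},S^{i-1})-\!\!\!\!\minuso\!\!\!\!-(\text{everything at or after time }i\text{ together with }X^{i-1},Y^{i-1})$, to simplify the third factor to $\PP(z^{i-1}\mid u^{i-1},s^{i-1})$;
\item[(d)] the non-causal decoder, giving $V_i-\!\!\!\!\minuso\!\!\!\!-(Y^n,Z^n)-\!\!\!\!\minuso\!\!\!\!-(M,U^n,S^n,X^n)$, to simplify the fourth factor to $\PP(v_i\mid y_i,y_{i+1}^n,y^{i-1},z_i,z_{i+1}^n,z^{i-1})$.
\end{enumerate}

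Once each of the four factors has been stripped of its dependence on $(u_i,s_i,x_i)$, I would recombine them back into the joint conditional $\PP(x^{i-1},y^{i-1},z^{i-1},v_i\mid y_i,z_i,m,u^{i-1},s^{i-1},y_{i+1}^n,z_{i+1}^n)$, sum over the past triple $(x^{i-1},y^{i-1},z^{i-1})$, and read off $\PP(v_i\mid y_i,z_i,w_{1,i},w_{2,i})$, which is exactly the claimed Markov property.

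The main obstacle will be the bookkeeping in step (c): unlike the one-sided case, the decoder's side information $Z^{i-1}$ must be decoupled from both the current-stage quantities $(U_i,S_i,X_i,Y_i,Z_i)$ and from the future block $(Y_{i+1}^n,Z_{i+1}^n)$, as well as from the previously introduced $(X^{i-1},Y^{i-1})$. This requires combining the i.i.d.\ structure of $\PP_{USZ}$ (which handles the source variables) with the causal encoding and memoryless channel (which ensure that $(X^{i-1},Y^{i-1})$ are functions only of $(M,U^{i-1},S^{i-1})$ and past noise realizations), so that conditioning on $(U^{i-1},S^{i-1})$ is sufficient to render $Z^{i-1}$ independent of everything else. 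Provided these conditional independences are invoked in the correct order, the derivation becomes a direct analogue of~\eqref{eq:lemma:CausalEnc2}--\eqref{eq:lemma:CausalEnc4}.
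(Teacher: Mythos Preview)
Your proposal is correct and follows essentially the same approach as the paper's proof: expand over the past triple $(x^{i-1},y^{i-1},z^{i-1})$, factor via the chain rule, strip $(u_i,s_i,x_i)$ from each factor using causal encoding, the memoryless channel, the i.i.d.\ source, and the non-causal decoder, then recombine. The only difference is the order of factoring---the paper peels off $z^{i-1}$ first (then $x^{i-1}$, $y^{i-1}$, $v_i$) whereas you take $x^{i-1}$ first (then $y^{i-1}$, $z^{i-1}$, $v_i$)---and this is immaterial, since in both orderings each conditional simplifies by the same structural Markov chain.
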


\begin{proof}[Lemma \ref{lemma2:ConverseCE2SIMarkov1}] 
The auxiliary random variables $ W_{1,i} =  ( M , U^{i-1},S^{i-1} )$ and $ W_{2,i} =  (Y_{i+1}^n ,Z_{i+1}^n ) $ satisfy the following equations for all $(u^n,s^n,z^n,w_1^n,w_2^n,x^n,y^n,v^n,m)$:
\begin{align}
&\PP(v_i | y_i ,z_i, w_{1,i} , w_{2,i}  , u_i, s_i, x_i) \nonumber \\
=&\PP(v_i | y_i ,z_i, m,u^{i-1}, s^{i-1}     ,  y^n_{i+1} ,z^n_{i+1} , u_i, s_i, x_i) \nonumber \\
=&   \sum_{x^{i-1}, y^{i-1}, z^{i-1}}\PP(v_i , x^{i-1} , y^{i-1}  , z^{i-1}  | y_i ,z_i, m, u^{i-1}, s^{i-1}     ,  y^n_{i+1} ,z^n_{i+1} , u_i, s_i, x_i )\\
=&   \sum_{x^{i-1}, y^{i-1}, z^{i-1}} \PP(z^{i-1}  | y_i ,z_i, m, u^{i-1}, s^{i-1}  ,  y^n_{i+1} ,z^n_{i+1} , u_i, s_i, x_i)\nonumber \\
&\PP(x^{i-1}  | y_i ,z_i, m, u^{i-1}, s^{i-1}  ,  y^n_{i+1} ,z^n_{i+1} , u_i, s_i, x_i, z^{i-1} )\nonumber \\
&\PP(  y^{i-1}  |y_i ,z_i, m,u^{i-1}, s^{i-1}  ,  y^n_{i+1} ,z^n_{i+1} , u_i, s_i, x_i ,  z^{i-1} ,x^{i-1} ) \nonumber\\
&\PP(v_i  |  y_i ,z_i, m,u^{i-1}, s^{i-1}  ,  y^n_{i+1} ,z^n_{i+1} , u_i, s_i, x_i , z^{i-1} , x^{i-1} , y^{i-1}  ) .  \end{align}
\mrb{We} can remove $(u_i , s_i , x_i)$, in the four conditional distributions
 \begin{align}
\PP(z^{i-1}  | y_i ,z_i, m,u^{i-1}, s^{i-1}  ,  y^n_{i+1} ,z^n_{i+1} , u_i, s_i, x_i) 
=&\PP(z^{i-1}  | u^{i-1}, s^{i-1}), \label{eq:lemma:CausalEnc2sI1}\\
\PP(x^{i-1}  | y_i ,z_i,m, u^{i-1}, s^{i-1}  ,  y^n_{i+1} ,z^n_{i+1} , u_i, s_i, x_i, z^{i-1} )  
=& \PP(x^{i-1}  |m,u^{i-1}, s^{i-1} ) , \label{eq:lemma:CausalEnc2sI2}\\
\PP(  y^{i-1}  |y_i ,z_i, m,u^{i-1}, s^{i-1}  ,  y^n_{i+1} ,z^n_{i+1} , u_i, s_i, x_i, z^{i-1}  , x^{i-1} )   =&  \PP(  y^{i-1}   | s^{i-1}  ,  x^{i-1}) ,  \label{eq:lemma:CausalEnc2sI3}\\
\PP(v_i  | y_i ,z_i, m,u^{i-1}, s^{i-1}  ,  y^n_{i+1} ,z^n_{i+1} , u_i, s_i, x_i, z^{i-1}  , x^{i-1}  , y^{i-1}  )  
=& \PP(v_i  | y_i ,z_i, y^n_{i+1} ,z^n_{i+1} , z^{i-1}  , y^{i-1}  ),  \nonumber\\
&\label{eq:lemma:CausalEnc2sI4}
\end{align}
where \eqref{eq:lemma:CausalEnc2sI1} comes from the i.i.d. property of the information source: $Z^{i-1} $ only depends on $(U^{i-1} ,S^{i-1} )$; \eqref{eq:lemma:CausalEnc2sI2} comes from the causal encoding that induces the Markov chain $X^{i-1} -\!\!\!\!\minuso\!\!\!\!- (M, U^{i-1}, S^{i-1})    -\!\!\!\!\minuso\!\!\!\!- (Y_i ,Z_i,  Y^n_{i+1} ,Z^n_{i+1} , X_i, Z^{i-1},  U_i, S_i )$; \eqref{eq:lemma:CausalEnc2sI3} comes from the memoryless property of the channel:  $Y^{i-1} $ only depends  on $(X^{i-1} , S^{i-1})$; \eqref{eq:lemma:CausalEnc2sI4} comes from the non-causal decoding  that induces the Markov chain $V_i -\!\!\!\!\minuso\!\!\!\!- (Y_i ,Z_i, Y^n_{i+1} ,Z^n_{i+1} , Z^{i-1}  , Y^{i-1} )    -\!\!\!\!\minuso\!\!\!\!- (  M,U^{i-1}, S^{i-1}  ,U_i, S_i, X_i,   X^{i-1}  )$. Hence, \mrb{for all $(u^n,s^n,z^n,x^n,y^n,v^n,m)$ we have}
\begin{align}
& \PP(v_i | y_i ,z_i, w_{1,i} , w_{2,i}  , u_i, s_i, x_i)\nonumber\\
=&   \sum_{x^{i-1}, y^{i-1}, z^{i-1}} \PP( v_i , x^{i-1}  , y^{i-1} , z^{i-1}   | y_i ,z_i, m,u^{i-1}, s^{i-1}  ,  y^n_{i+1} ,z^n_{i+1} )\\
=& \PP(v_i | y_i ,z_i ,m,u^{i-1}  , s^{i-1}  ,  y^n_{i+1} ,z^n_{i+1} )  \\
=& \PP(v_i | y_i ,z_i ,w_{1,i} , w_{2,i}  ).
\end{align}
The above equation corresponds to the Markov chain $V_i  -\!\!\!\!\minuso\!\!\!\!- ( Y_i ,Z_i, W_{1,i} , W_{2,i}   )  -\!\!\!\!\minuso\!\!\!\!- ( U_i , S_i, X_i)$ and it concludes the proof of Lemma \ref{lemma2:ConverseCE2SIMarkov1}.
\end{proof}


\begin{lemma}\label{lemma3:ConverseCE2SIMarkov1}
The causal encoding function and the memoryless property of the channel  induce the following Markov chain property
 \begin{align}
 Y_i  -\!\!\!\!\minuso\!\!\!\!-  (U_i , S_i ,M , U^{i-1} , S^{i-1}) -\!\!\!\!\minuso\!\!\!\!- (U_{i+1}^n , S_{i+1}^n , Y_{i+1}^n , Z_{i+1}^n ).
\end{align}
This  Markov chain  is satisfied for all $i \in \{1,\ldots,n\}$.
\end{lemma}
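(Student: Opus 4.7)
The plan is to mimic the argument used in Lemma \ref{lemma3:ConverseCEMarkov1}, adapted to the two-sided state information setting. The goal is to show, for every realization $(u^n, s^n, z^n, x^n, y^n, m)$ of positive probability, the identity
\begin{align*}
&\PP\big(y_i \,\big|\, u_i, s_i, m, u^{i-1}, s^{i-1}, u_{i+1}^n, s_{i+1}^n, y_{i+1}^n, z_{i+1}^n\big)
= \PP\big(y_i \,\big|\, u_i, s_i, m, u^{i-1}, s^{i-1}\big).
\end{align*}
First, I would introduce the channel input $x_i$ into the conditioning by summing over $\mc{X}$ and factoring as
\begin{align*}
&\PP\big(y_i \,\big|\, u_i, s_i, m, u^{i-1}, s^{i-1}, u_{i+1}^n, s_{i+1}^n, y_{i+1}^n, z_{i+1}^n\big) \\
=& \sum_{x_i}\PP\big(x_i \,\big|\, u_i, s_i, m, u^{i-1}, s^{i-1}, u_{i+1}^n, s_{i+1}^n, y_{i+1}^n, z_{i+1}^n\big) \, \PP\big(y_i \,\big|\, x_i, u_i, s_i, m, u^{i-1}, s^{i-1}, u_{i+1}^n, s_{i+1}^n, y_{i+1}^n, z_{i+1}^n\big).
\end{align*}

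The next step is to simplify each of the two conditional distributions. For the first, the causal encoding $f_i$ produces $X_i$ as a function of $(M, U^i, S^i)$ only, so $X_i$ is conditionally independent of $(U_{i+1}^n, S_{i+1}^n, Y_{i+1}^n, Z_{i+1}^n)$ given $(M, U^i, S^i)$; this gives $\PP(x_i \mid \cdot) = \PP(x_i \mid u_i, s_i, m, u^{i-1}, s^{i-1})$. For the second factor, the memoryless channel $\mc{T}_{Y_1Y_2|XS}$ ensures $Y_i$ depends on the conditioning only through $(X_i, S_i)$, which collapses it to $\PP(y_i \mid x_i, s_i)$. Substituting and undoing the product then yields $\sum_{x_i}\PP(x_i, y_i \mid u_i, s_i, m, u^{i-1}, s^{i-1}) = \PP(y_i \mid u_i, s_i, m, u^{i-1}, s^{i-1})$, which is the claimed Markov chain.

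The only subtle point—and the part I would flag as the main (modest) obstacle—is justifying the first collapse $\PP(x_i \mid u_i, s_i, m, u^{i-1}, s^{i-1}, u_{i+1}^n, s_{i+1}^n, y_{i+1}^n, z_{i+1}^n) = \PP(x_i \mid u_i, s_i, m, u^{i-1}, s^{i-1})$. This requires that, conditional on $(M, U^i, S^i)$, the randomization used by the causal encoder $f_i$ is independent of the future source/state/output variables. This independence follows from the product-form joint distribution induced by a code with causal encoding (see \eqref{eq:GeneralDistribution} adapted to the two-sided model), in which the encoder's stochastic kernel $f_{X_i|M U^i S^i}$ does not depend on any future quantity, combined with the i.i.d.\ property of the source $\PP_{USZ}$ and the memorylessness of the channel $\mc{T}_{Y|XS}$, which together ensure that $(U_{i+1}^n, S_{i+1}^n, Z_{i+1}^n, Y_{i+1}^n)$ is generated from $(M, U^i, S^i)$ without any additional dependence on the local randomness at stage $i$. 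Once this point is settled, the rest of the proof is a purely mechanical marginalization entirely parallel to the single-sided argument of Lemma \ref{lemma3:ConverseCEMarkov1}.
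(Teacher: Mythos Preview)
Your proposal is correct and follows essentially the same argument as the paper: introduce $x_i$ by summation, factor, collapse the first factor via the causal-encoding Markov chain $X_i -\!\!\!\!\minuso\!\!\!\!- (U_i,S_i,M,U^{i-1},S^{i-1}) -\!\!\!\!\minuso\!\!\!\!- (U_{i+1}^n,S_{i+1}^n,Y_{i+1}^n,Z_{i+1}^n)$, collapse the second via the memoryless channel, and recombine. One minor slip: the channel here is $\mc{T}_{Y|XS}$, not $\mc{T}_{Y_1Y_2|XS}$ (that notation belongs to the noisy-feedback model of Section~\ref{sec:NoisyFeedback}).
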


\begin{proof}[Lemma \ref{lemma3:ConverseCE2SIMarkov1}] 
\mrb{For all $(u^n,s^n,z^n,x^n,y^n,v^n,m)$,} we have
\begin{align}
&\PP( y_i  |  u_i , s_i ,m , u^{i-1} , s^{i-1} , u_{i+1}^n , s_{i+1}^n , y_{i+1}^n , z_{i+1}^n )  \nonumber \\
=&\sum_{x_i}  \PP(x_i ,  y_i  |  u_i , s_i ,m , u^{i-1} , s^{i-1} , u_{i+1}^n , s_{i+1}^n , y_{i+1}^n , z_{i+1}^n ) \\
=&\sum_{x_i}  \PP(x_i   |  u_i , s_i ,m , u^{i-1} , s^{i-1} , u_{i+1}^n , s_{i+1}^n , y_{i+1}^n , z_{i+1}^n )  \nonumber \\
& \PP(  y_i  | x_i , u_i , s_i ,m , u^{i-1} , s^{i-1} , u_{i+1}^n , s_{i+1}^n , y_{i+1}^n , z_{i+1}^n ) \label{eq:lemma3ConverseCE2sI1}  \\
=&\sum_{x_i}  \PP(x_i   |  u_i , s_i ,m , u^{i-1} , s^{i-1}  )   \PP(  y_i  | x_i , u_i , s_i ,m , u^{i-1} , s^{i-1} , u_{i+1}^n , s_{i+1}^n , y_{i+1}^n , z_{i+1}^n ) \label{eq:lemma3ConverseCE2sI2}  \\
=&\sum_{x_i}  \PP(x_i   |  u_i , s_i ,m , u^{i-1} , s^{i-1}  ) \cdot
\PP(  y_i  | x_i ,  s_i  ) \label{eq:lemma3ConverseCE2sI3}  \\
=&\sum_{x_i}  \PP(x_i ,  y_i  |  u_i , s_i ,m , u^{i-1} , s^{i-1}  ) =   \PP(  y_i  |  u_i , s_i ,m , u^{i-1} , s^{i-1}  ),
\end{align}
where \eqref{eq:lemma3ConverseCE2sI2} comes from the causal encoding function that induces the Markov chain $X_i -\!\!\!\!\minuso\!\!\!\!- (U_i , S_i ,M , U^{i-1} , S^{i-1}) -\!\!\!\!\minuso\!\!\!\!-  (U_{i+1}^n , S_{i+1}^n , Y_{i+1}^n , Z_{i+1}^n)$; \eqref{eq:lemma3ConverseCE2sI3} comes from the memoryless property of the channel that induces the Markov chain: $Y_i -\!\!\!\!\minuso\!\!\!\!-  (X_i ,  S_i) -\!\!\!\!\minuso\!\!\!\!- (U_i ,M , U^{i-1} , S^{i-1} , U_{i+1}^n , S_{i+1}^n , Y_{i+1}^n , Z_{i+1}^n )$. This concludes the proof of Lemma \ref{lemma3:ConverseCE2SIMarkov1}.
\end{proof}


\section{Converse proof of Theorem \ref{theo:LeakageCENoisyFeedback} }\label{sec:ConverseLeakageCENoisyFeedback}

The converse proof for information constraints \eqref{eq:theoremCENoisyFeedback2} and \eqref{eq:theoremCENoisyFeedback3} \mrb{follows from similar arguments as for the converse proof of} Theorem \ref{theo:LeakageCE}, in Appendix \ref{sec:ConverseLeakageCE}. We \mrb{now} prove the converse result for the information constraint \eqref{eq:theoremCENoisyFeedback1}. We consider the distribution $\QQ_{SXY_1Y_2V}$, \mrb{also denoted by $\QQ$,} and we introduce the random  event of error $E \in \{0,1\}$ defined \mrb{by}
\begin{align}
E = \Bigg\{
\begin{array}{lll}
0 \text{ if }& \big|\big|{Q}^n - \QQ  \big|\big|_{1} \leq \varepsilon\quad \Longleftrightarrow \quad (S^n,X^n,Y_1^n,Y_2^n,V^n) \in T_{\delta}(\QQ) ,\\
1 \text{ if }& \big|\big|{Q}^n - \QQ  \big|\big|_{1} > \varepsilon \quad \Longleftrightarrow \quad (S^n,X^n,Y_1^n,Y_2^n,V^n) \notin T_{\delta}(\QQ).
\end{array}
\Bigg.
\end{align}
Consider a sequence of code $c(n) \in \mc{C}$ that achieves the distribution $\QQ_{SXY_1Y_2V}$, \textit{i.e.} for which the probability of error  $\PP_{\sf{e}}(c) = \prob(E=1)$ goes to zero. We have
\begin{align}
n \textsf{R} \leq& \log_2 |\mc{M}| + n\varepsilon  \label{eq:ConvCausalEncodingNoisyF01}\\
=& H(M) + n\varepsilon  \label{eq:ConvCausalEncodingNoisyF02}\\
=&I(M;Y_1^n ) +  H(M|Y_1^n )   + n\varepsilon  \label{eq:ConvCausalEncodingNoisyF03}\\
\leq&I(M;Y_1^n  ) +  n2\varepsilon   \label{eq:ConvCausalEncodingNoisyF04}\\
=& \sum_{i=1}^n I(M;Y_{1,i}| Y_{1,i+1}^n  ) +  n\varepsilon   \label{eq:ConvCausalEncodingNoisyF05}\\
=& \sum_{i=1}^n I(M,S^{i-1} , Y_2^{i-1}  ;Y_{1,i} | Y_{1, i+1}^n  ) - \sum_{i=1}^n I(S^{i-1} , Y_2^{i-1}  ;Y_{1,i}|Y_{1,i+1}^n,M ) +  n2\varepsilon   \label{eq:ConvCausalEncodingNoisyF06}\\
=& \sum_{i=1}^n I(M,S^{i-1} , Y_2^{i-1}  ;Y_{1,i} | Y_{1, i+1}^n  ) -   \sum_{i=1}^n I(Y^n_{1,i+1}   ;   S_i,Y_{2,i} |   S^{i-1} , Y_2^{i-1}, M  ) +  n2\varepsilon   \label{eq:ConvCausalEncodingNoisyF07}\\
\leq& \sum_{i=1}^n I(M,S^{i-1} , Y_2^{i-1}, Y_{1, i+1}^n   ;Y_{1,i}   ) -   \sum_{i=1}^n I(Y^n_{1,i+1}   ;   S_i,Y_{2,i} |   S^{i-1} , Y_2^{i-1}, M  ) +  n2\varepsilon   \label{eq:ConvCausalEncodingNoisyF08}\\
=& \sum_{i=1}^n I(W_{1,i} , W_{2,i}    ;Y_{1,i}   ) -   \sum_{i=1}^n I(W_{2,i}  ;   S_i,Y_{2,i} |  W_{1,i}   ) +  n2\varepsilon   \label{eq:ConvCausalEncodingNoisyF09},
\end{align}
where \eqref{eq:ConvCausalEncodingNoisyF01}  comes from the definition of achievable rate $\textsf{R}$, stated in equation \eqref{eq:converse1ce};  \eqref{eq:ConvCausalEncodingNoisyF02}  comes from the uniform distribution of the random message $M$; \eqref{eq:ConvCausalEncodingNoisyF03}  comes from the definition of the mutual information; \eqref{eq:ConvCausalEncodingNoisyF04} comes from equation \eqref{eq:converse2ce} and Fano's inequality, see  \cite[pp. 19]{ElGamalKim(book)11};  \eqref{eq:ConvCausalEncodingNoisyF07} comes from Csisz\'{a}r Sum Identity, see \cite[pp. 25]{ElGamalKim(book)11}; \eqref{eq:ConvCausalEncodingNoisyF09} comes from the introduction of the auxiliary random variables $ W_{1,i} =  ( M,   S^{i-1} , Y_2^{i-1}  )$ and $ W_{2,i} =  Y^n_{1,i+1}$, that satisfy the properties corresponding to the  set of distributions $\Q_{\sf{f}}$, as proved in Lemma \ref{lemma:MarkovNF}.


\begin{lemma}\label{lemma:MarkovNF}
For all $i \in \{1,\ldots,n\}$, the auxiliary random variables $ W_{1,i} =  ( M,   S^{i-1} , Y_2^{i-1}  )$ and $ W_{2,i} =  Y^n_{1,i+1}$ satisfy following the properties corresponding to the  set of distributions $\Q_{\sf{f}}$
\begin{align}
& ( S_i) \text{ are independent of } W_{1,i} , \label{eq:ConvCENoisyFMarkov1} \\
& (Y_{1,i} , Y_{2,i})  -\!\!\!\!\minuso\!\!\!\!-   (X_i,S_i) -\!\!\!\!\minuso\!\!\!\!-  W_{1,i} , \label{eq:ConvCENoisyFMarkov2}  \\
& W_{2,i} -\!\!\!\!\minuso\!\!\!\!-   (S_i ,Y_{2,i},W_{1,i} ) -\!\!\!\!\minuso\!\!\!\!-   (X_i, Y_{1,i}) , \label{eq:ConvCENoisyFMarkov3}  \\
& V_i -\!\!\!\!\minuso\!\!\!\!-   (Y_{1,i},  W_{1,i},  W_{2,i}) -\!\!\!\!\minuso\!\!\!\!-  (X_i,S_i,Y_{2,i}). \label{eq:ConvCENoisyFMarkov4} 
\end{align}
\end{lemma}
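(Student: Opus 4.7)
The plan is to verify each of the four Markov chains directly from the structural assumptions of the model: the source $\PP_S$ is i.i.d., the channel $\mc{T}_{Y_1 Y_2|XS}$ is memoryless, the encoder $f_i$ is causal in $S$ and uses only the past noisy feedback $Y_2^{i-1}$ (and \emph{not} $Y_1^{i-1}$), and the decoder $h$ observes only $Y_1^n$. For each chain, my approach is to expand the induced joint distribution into its factorized form and verify that the dependencies collapse as claimed.

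Property \eqref{eq:ConvCENoisyFMarkov1} is immediate: $M$ is drawn independently of $S^n$, $S_i$ is independent of $S^{i-1}$ by the i.i.d. property, and $Y_2^{i-1}$ is generated through the channel from $(X^{i-1},S^{i-1})$, where $X^{i-1}$ itself depends on $S^j$ only for $j\le i-1$. None of these pieces involves $S_i$. Property \eqref{eq:ConvCENoisyFMarkov2} follows from the memoryless channel alone, since $(Y_{1,i},Y_{2,i})\sim \mc{T}(\cdot,\cdot\mid X_i,S_i)$ regardless of anything else in the system.

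For \eqref{eq:ConvCENoisyFMarkov3}, I would expand
\begin{align*}
\PP(y_{1,i+1}^n \mid s_i,y_{2,i},w_{1,i},x_i,y_{1,i}) &= \sum_{s_{i+1}^n,x_{i+1}^n,y_{2,i+1}^n} \prod_{j=i+1}^n \PP_S(s_j)\, f_j(x_j\mid m,s^j,y_2^{j-1})\, \mc{T}(y_{1,j},y_{2,j}\mid x_j,s_j),
\end{align*}
and observe that the right-hand side depends only on $(m,s^i,y_2^i)=(w_{1,i},s_i,y_{2,i})$, not on $X_i$ or $Y_{1,i}$. The reason is that the encoder at time $j>i$ reads only $(M,S^j,Y_2^{j-1})$, so $X_i$ never enters the product, while the memoryless channel at time $j$ only sees $(X_j,S_j)$.

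Property \eqref{eq:ConvCENoisyFMarkov4} is the main obstacle and requires one additional step, because the decoder kernel uses the \emph{entire} sequence $Y_1^n$, whereas the conditioning $(Y_{1,i},W_{1,i},W_{2,i})$ contains $(Y_{1,i},Y_{1,i+1}^n)$ but not the past block $Y_1^{i-1}$. The plan is first to establish the strong decoupling
\begin{align*}
Y_1^{i-1}\ \perp\ (X_i,S_i,Y_{1,i},Y_{2,i},Y_{1,i+1}^n)\ \mid\ (M,S^{i-1},Y_2^{i-1}),
\end{align*}
by factorizing the induced joint as a product of a term depending only on $y_1^{i-1}$ (produced by the encoder and channel up to time $i-1$) and a term depending only on the time-$i$-and-later variables; this factorization uses crucially that the future encoder never reads $Y_1^{i-1}$ or $X_i$, only $Y_2^{j-1}$ for $j>i$, together with the memoryless channel. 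The weak-union property of conditional independence then yields $Y_1^{i-1}\perp (X_i,S_i,Y_{2,i})\mid (Y_{1,i},Y_{1,i+1}^n,W_{1,i})$, and since $V_i$ is obtained from $Y_1^n=(Y_1^{i-1},Y_{1,i},Y_{1,i+1}^n)$ through the decoder kernel $h$, integrating out $Y_1^{i-1}$ against this conditional law yields \eqref{eq:ConvCENoisyFMarkov4}. The delicate point I expect to track most carefully is the bookkeeping of which variables each future encoder and channel factor may depend on, exploiting throughout that the feedback to the encoder is $Y_2$ and not $Y_1$.
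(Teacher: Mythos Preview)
Your proposal is correct and follows essentially the same approach as the paper's proof: properties \eqref{eq:ConvCENoisyFMarkov1} and \eqref{eq:ConvCENoisyFMarkov2} are dispatched identically, and for \eqref{eq:ConvCENoisyFMarkov3} and \eqref{eq:ConvCENoisyFMarkov4} both you and the paper exploit the same structural facts (causal encoding with $Y_2$-only feedback, memoryless channel, decoder seeing only $Y_1^n$) to factor the relevant conditional distributions. The only cosmetic differences are that for \eqref{eq:ConvCENoisyFMarkov3} the paper verifies the Markov chain in the reverse direction (showing $\PP(x_i,y_{1,i}\mid s_i,y_{2,i},w_{1,i},w_{2,i})$ does not depend on $w_{2,i}$, rather than your expansion of $\PP(y_{1,i+1}^n\mid\cdots)$), and for \eqref{eq:ConvCENoisyFMarkov4} the paper introduces $(X^{i-1},Y_1^{i-1})$ and removes $(x_i,s_i,y_{2,i})$ from each factor explicitly, whereas you phrase the same step via a block conditional independence followed by weak union.
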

\mrb{Then, \eqref{eq:ConvCausalEncodingNoisyF09} shows}
\begin{align}
n \textsf{R} \leq&   \sum_{i=1}^n I(W_{1,i} , W_{2,i}    ;Y_{1,i}   ) -   \sum_{i=1}^n I(W_{2,i}  ;   S_i,Y_{2,i} |  W_{1,i}   ) +  n2\varepsilon  \nonumber\displaybreak[0]\\
=&  n \bigg( I(  W_{1,T}  , W_{2,T}  ; Y_{1,T}  |T)   -    I( W_{2,T}   ;  S_T ,Y_{2,T}  |   W_{1,T} ,T)   + 2\varepsilon \bigg) \label{eq:ConvCausalEncodingNoisyF11} \displaybreak[0]\\
\leq&  n \bigg( I( T,  W_{1,T}  , W_{2,T}  ; Y_{1,T}  )   -    I( W_{2,T}   ;  S_T ,Y_{2,T}  |   W_{1,T} ,T)   + 2\varepsilon \bigg) \label{eq:ConvCausalEncodingNoisyF12} \displaybreak[0]\\
\leq&  n \max_{{\QQ}\in \Q_{\sf{f}} } \bigg( I(   W_{1}  , W_{2}  ; Y_{1,T}  )   -    I( W_{2}   ;  S_T ,Y_{2,T}  |   W_{1} )   + 2\varepsilon \bigg) \label{eq:ConvCausalEncodingNoisyF13} \displaybreak[0]\\
\leq&  n \max_{{\QQ}\in \Q_{\sf{f}} } \bigg( I(   W_{1}  , W_{2}  ; Y_{1,T}  | E=0)   -    I( W_{2}   ;  S_T ,Y_{2,T}  |   W_{1} ,E=0 )   + 3\varepsilon \bigg) \label{eq:ConvCausalEncodingNoisyF14} \displaybreak[0]\\
\leq&  n \max_{{\QQ}\in \Q_{\sf{f}} } \bigg( I(   W_{1}  , W_{2}  ; Y_{1} )   -    I( W_{2}   ;  S ,Y_{2}  |   W_{1}  )   + 4\varepsilon \bigg) \label{eq:ConvCausalEncodingNoisyF15} ,
\end{align}
where \eqref{eq:ConvCausalEncodingNoisyF11} comes from the introduction of the uniform random variable $T$ over $\{1 , \ldots, n\}$ and the introduction of the corresponding mean random variables $S_T$,  $W_{1,T}$,  $W_{2,T}$, $X_T$, $Y_{1,T}$,  $Y_{2,T}$, $V_T$; \eqref{eq:ConvCausalEncodingNoisyF12} and comes from the properties of the mutual information; \eqref{eq:ConvCausalEncodingNoisyF13} comes from the \mrb{identification of} $W_1$ with $(W_{1,T} , T)$,  $W_2$ with $W_{2,T}$ and taking the maximum over the distributions that belong to the set $\Q_{\sf{f}}$; \eqref{eq:ConvCausalEncodingNoisyF14} comes from the empirical coordination requirement, as stated in Lemma \ref{lemma:ErrorEventCoordination0} in Section \ref{lemma:CE}, since the sequences are not jointly typical with small error probability $\prob(E=1)$; \eqref{eq:ConvCausalEncodingNoisyF15} comes from Lemma \ref{lemma:3} in Appendix  \ref{lemma:CE}, that states that the distribution induced by the coding scheme $ \PP_{S_TX_TY_{1,T}Y_{2,T}V_T|E=0}$ is close to the target distribution $\QQ(s,x,y_1,y_2,v)$.  The \mrb{result of} \cite[Lemma 2.7, pp. 19]{CsiszarKorner(Book)11} concludes the converse proof of Theorem \ref{theo:LeakageCENoisyFeedback}.

\begin{proof}[Lemma \ref{lemma:MarkovNF}] 
Equation \eqref{eq:ConvCENoisyFMarkov1} comes from the i.i.d. property of the source $S$, the independence of $S$  with respect to the message $M$ and the causal encoding function, hence $S_i$ is independent of the past channel inputs $X^{i-1}$, hence $S_i$ is independent of   $Y_2^{i-1}$; \eqref{eq:ConvCENoisyFMarkov2} comes from the memoryless property of the channel, hence $(Y_{1,i} , Y_{2,i})$ is drawn with $(X_i,S_i)$; \eqref{eq:ConvCENoisyFMarkov3} comes from 
\begin{align}
&   \PP(x_i, y_{1,i} | s_i ,y_{2,i},w_{1,i}, w_{2,i}   )\\
=& \PP(x_i | s_i ,y_{2,i},m,   s^{i-1} , y_2^{i-1},y^n_{1,i+1}   ) \PP(y_{1,i} | x_i, s_i ,y_{2,i},m,   s^{i-1} , y_2^{i-1},y^n_{1,i+1}   ) \label{eq:markovNF2}\\
=& \PP(x_i | s_i ,y_{2,i},m,   s^{i-1} , y_2^{i-1}   ) \PP(y_{1,i} | x_i, s_i ,y_{2,i},m,   s^{i-1} , y_2^{i-1},y^n_{1,i+1}   ) \label{eq:markovNF3}\\
=& \PP(x_i | s_i ,y_{2,i},m,   s^{i-1} , y_2^{i-1}   ) \PP(y_{1,i} | x_i, s_i ,y_{2,i},m,   s^{i-1} , y_2^{i-1}
   ) \label{eq:markovNF4}\\
   =& \PP(x_i , y_{1,i} | s_i ,y_{2,i},w_{1,i}  )\label{eq:markovNF5} \qquad \forall (s^n,x^n,y_1^n,y_2^n,w_1^n,w_2^n).
\end{align}
Equation \eqref{eq:markovNF2} comes from the choice of the auxiliary random variables $ W_{1,i} =  ( M,   S^{i-1} , Y_2^{i-1}  )$ and $ W_{2,i} =  Y^n_{1,i+1}$; \eqref{eq:markovNF3} comes from the causal encoding that implies $X_i$ is a function of $(S_i ,M,   S^{i-1} , Y_2^{i-1})$ but not of $Y^n_{1,i+1}$; \eqref{eq:markovNF4} comes from the memoryless property of the channel that implies $Y_{1,i} $  is drawn depending on $( X_i, S_i ,Y_{2,i})$ and not on $Y^n_{1,i+1}$; \eqref{eq:markovNF5} concludes that the Markov chain  \eqref{eq:ConvCENoisyFMarkov3} holds.

Equation \eqref{eq:ConvCENoisyFMarkov4} comes from the following equations, for all $(u^n,s^n,z^n,x^n,y^n,v^n,m)$
\begin{align}
&   \PP(v_i |  y_{1,i},  w_{1,i},  w_{2,i} , x_i,s_i,y_{2,i})\\
=& \sum_{x^{i-1}, y_1^{i-1}}  \PP(v_i ,x^{i-1}, y_1^{i-1} |  y_{1,i},  m,   s^{i-1} , y_2^{i-1},  y^n_{1,i+1} , x_i,s_i,y_{2,i})\label{eq:markovLemmaNF2} \displaybreak[0]\\
=& \sum_{x^{i-1}, y_1^{i-1}}  \PP(x^{i-1} |  y_{1,i},  m,   s^{i-1} , y_2^{i-1},  y^n_{1,i+1} , x_i,s_i,y_{2,i}) \nonumber \displaybreak[0]\\
& \PP(y_1^{i-1} |  x^{i-1},y_{1,i},  m,   s^{i-1} , y_2^{i-1},  y^n_{1,i+1} , x_i,s_i,y_{2,i}) \nonumber \displaybreak[0]\\
 &\PP(v_i |  x^{i-1}, y_1^{i-1},y_{1,i},  m,   s^{i-1} , y_2^{i-1},  y^n_{1,i+1} , x_i,s_i,y_{2,i})\label{eq:markovLemmaNF3} \displaybreak[0]\\
=& \sum_{x^{i-1}, y_1^{i-1}}  \PP(x^{i-1} |  y_{1,i},  m,   s^{i-1} , y_2^{i-1},  y^n_{1,i+1} ) \label{eq:markovLemmaNF4}  \displaybreak[0]\\
&\PP(y_1^{i-1} |  x^{i-1},  y_{1,i},  m,   s^{i-1} , y_2^{i-1},  y^n_{1,i+1}) \label{eq:markovLemmaNF5} \displaybreak[0] \\
 &\PP(v_i  |  x^{i-1}, y_1^{i-1}, y_{1,i},  m,   s^{i-1} , y_2^{i-1},  y^n_{1,i+1} )\label{eq:markovLemmaNF6}\displaybreak[0] \\
 =& \PP(v_i |  y_{1,i},  w_{1,i},  w_{2,i}), \label{eq:markovLemmaNF7} 
\end{align}
where \eqref{eq:markovLemmaNF2} comes from the choice of the auxiliary random variables $ W_{1,i} =  ( M,   S^{i-1} , Y_2^{i-1}  )$ and $ W_{2,i} =  Y^n_{1,i+1}$; \eqref{eq:markovLemmaNF3} comes from the decomposition of the probability; \eqref{eq:markovLemmaNF4} comes from the causal encoding that implies $X^{i-1}$ is a function of $(M,   S^{i-1} , Y_2^{i-2})$ but not of $(X_i,S_i,Y_{2,i})$; \eqref{eq:markovLemmaNF5} comes from the memoryless property of the channel that implies $Y_1^{i-1} $  depends only on $( X^{i-1}, S^{i-1} , Y_2^{i-1} )$ and not on $(X_i,S_i,Y_{2,i})$; \eqref{eq:markovLemmaNF6} comes from the non-causal decoding that implies $V_i$  is a function of $(Y_1^{i-1}, Y_{1,i} , Y^n_{1,i+1} )$ but not of $(X_i,S_i,Y_{2,i})$; \eqref{eq:markovLemmaNF7} concludes that the Markov chain  \eqref{eq:ConvCENoisyFMarkov4} holds. This concludes the proof of Lemma \ref{lemma:MarkovNF}.
\end{proof}


\section{Converse proof of Theorem \ref{theo:LeakageSCE} }\label{sec:ConverseTheoSCE}

We consider that the triple of rate, information leakage and distribution $(\textsf{R},\textsf{E},\QQ_{SXYV})$ is achievable with a \mrb{code with strictly causal encoding}. We introduce the random  event of error $E \in \{0,1\}$ defined with respect to the achievable distribution $\QQ_{SXYV}$, \mrb{also denoted by $\QQ$, by}
\begin{eqnarray}
E = \Bigg\{
\begin{array}{lll}
0 &\text{ if }& \big|\big|{Q}^n - \QQ  \big|\big|_{\sf{tv}} \leq \varepsilon\quad \Longleftrightarrow \quad (S^n,X^n,Y^n,V^n) \in T_{\delta}(\QQ) ,\\
1 &\text{ if }& \big|\big|{Q}^n - \QQ  \big|\big|_{\sf{tv}} > \varepsilon \quad \Longleftrightarrow \quad (S^n,X^n,Y^n,V^n) \notin T_{\delta}(\QQ).
\end{array}
\Bigg.
\end{eqnarray}
The event $E=1$ occurs if the sequences $(S^n,X^n,Y^n,V^n)\notin T_{\delta}(\QQ)$ for the target distribution $\QQ$. By definition \ref{def:CodeLeakageSCE}, for all $\varepsilon>0$, there exists $\bar{n}\in \N^{\star}$ such that for all $n\geq \bar{n}$, there exists a code $c^{\star} \in \C(n,\mc{M})$ that satisfies
\begin{eqnarray}
\frac{\log_2 |\mc{M}|}{n}  &\geq& \textsf{R} - \varepsilon, \label{eq:converse1sce}  \\
 \bigg| \mc{L}_{\textsf{e}}(c^{\star})  - \textsf{E} \bigg|  &=& \bigg|  \frac{1}{n} I(S^n;Y^n)  - \textsf{E} \bigg| \leq  \varepsilon,   \label{eq:converse3sce} \\
\PP_{\textsf{e}}(c^{\star})  &=& \prob\bigg( M \neq \hat{M} \bigg) + \prob\bigg(\Big|\Big|Q^n - \QQ \Big|\Big|_{\sf{tv}}> \varepsilon\bigg) \leq \varepsilon. \label{eq:converse2sce}  
\end{eqnarray}
We introduce the auxiliary random variables $ W_{2,i} =  ( M , S^{i-1} ,Y_{i+1}^n)$ that satisfy the Markov chains of the set of distribution $\Q_{\sf{sc}}$, for all $i \in \{1,\ldots,n\}$:
\begin{eqnarray}
&&S_i \text{ independent of } X_i, \label{eq:ConverseSCEMarkov1} \\
&&Y_i -\!\!\!\!\minuso\!\!\!\!- (X_i , S_i ) -\!\!\!\!\minuso\!\!\!\!-  W_{2,i}  ,  \label{eq:ConverseSCEMarkov3}  \\
&&V_i -\!\!\!\!\minuso\!\!\!\!- (Y_i,  X_i,W_{2,i} ) -\!\!\!\!\minuso\!\!\!\!-  S_i  , \label{eq:ConverseSCEMarkov4} 
\end{eqnarray}
where \eqref{eq:ConverseSCEMarkov1} \mrb{follows} from the strictly causal encoding function; \eqref{eq:ConverseSCEMarkov3} \mrb{follows} from the memoryless property of the channel $ \mc{T}_{Y|XS}$; \eqref{eq:ConverseSCEMarkov4} \mrb{is obtained with a} slight modification of Lemma \ref{lemma2:ConverseCEMarkov1} in which only the random variable $S_i$ is removed from~\eqref{eq:lemma:CausalEnc2}, \eqref{eq:lemma:CausalEnc3}, and \eqref{eq:lemma:CausalEnc4}. It is a direct consequence of the strictly causal encoding function, the non-causal decoding function and the memoryless property of the channel $ \mc{T}_{Y|XS}$.

We introduce the random variable $T$ that is uniformly distributed over the indices $\{1 , \ldots, n\}$ and the corresponding mean random variables  $W_{2,T}$, $S_T$, $X_T$, $Y_T$, $V_T$. The  auxiliary random variable $W_2=(W_{2,T} , T)$   belongs to the set of distributions $\Q_{\sf{sc}}$  and satisfies the information constraints of Theorem \ref{theo:LeakageSCE}
\begin{align}
 I(S;X,W_2,Y) \leq \textsf{E}&\leq H(S),\label{eq:theoremSCEconverse2bis}\\
 \textsf{R} + \textsf{E} &\leq I(X,S; Y). \label{eq:theoremSCEconverse3bis}
\end{align}

 \textbf{First Constraint:}
\begin{align}
n \textsf{E} \geq & I(S^n ; Y^n) -  n\varepsilon\label{eq:SecondConverse1}\displaybreak[0]\\
= & I(S^n;Y^n , M)  -  I(S^n ; M | Y^n) -  n\varepsilon\label{eq:SecondConverse2}\displaybreak[0]\\
\geq & n H(S)  - H(S^n | Y^n , M)  -  H(M | Y^n)-  n\varepsilon \label{eq:SecondConverse3}\displaybreak[0]\\
\geq & n H(S)  - H(S^n | Y^n , M)  -  n2\varepsilon \label{eq:SecondConverse4}\\
= & n H(S)  -  \sum_{i = 1}^n H(S_i | Y^n , M, S^{i-1} )  -  n2\varepsilon \label{eq:SecondConverse5}\displaybreak[0]\\
\geq & n H(S)  -  \sum_{i = 1}^n H(S_i | Y_{i+1}^n , Y_i , M, S^{i-1} )  -  n2\varepsilon \label{eq:SecondConverse6}\displaybreak[0]\\
= & n H(S)  -  \sum_{i = 1}^n H(S_i | Y_{i+1}^n ,Y_i, M, S^{i-1},X_i )  -  n2\varepsilon \label{eq:SecondConverse7}\displaybreak[0]\\
= & n H(S)  -  \sum_{i = 1}^n H(S_i | W_{2,i} ,X_i,Y_i )  -  n2\varepsilon \label{eq:SecondConverse8} \displaybreak[0]  \\
= & n H(S)  -  n  H(S_T |W_{2,T} , X_T , Y_T , T )  -  n2\varepsilon \label{eq:SecondConverse9}\displaybreak[0] \\
= & n H(S)  -  n  H(S_T |W_{2} , X_T , Y_T  )  -  n2\varepsilon \label{eq:SecondConverse10} \displaybreak[0]\\
\geq & n H(S)  -  n  H(S_T |W_{2} , X_T , Y_T ,E=0 )  -  n3\varepsilon \label{eq:SecondConverse11}\displaybreak[0] \\
\geq & n H(S)  -  n  H(S |W_{2} , X , Y )  -  n4\varepsilon \label{eq:SecondConverse12} \displaybreak[0]\\
= & n  \bigg(  I(S ; W,X,Y)  -  4\varepsilon \bigg),  \label{eq:SecondConverse13}
\end{align}
where  \eqref{eq:SecondConverse1} comes from the definition of achievable information leakage rate $\textsf{E}$, stated in equation \eqref{eq:converse3sce}; \eqref{eq:SecondConverse4} comes from equation \eqref{eq:converse2sce} and  Fano's inequality, see \cite[pp. 19]{ElGamalKim(book)11}; \eqref{eq:SecondConverse7} comes from the strictly causal encoding $X_i = f_i(M , S^{i-1})$ that  implies $I(S_i  ; X_i | Y_{i+1}^n ,Y_i, M, S^{i-1} ) =0$,  for all $i\in \{1,\ldots,n\}$; \eqref{eq:SecondConverse8}  comes from the introduction of the auxiliary random variable $ W_{2,i} =  ( M,S^{i-1} , Y_{i+1}^n )$, for all $i \in \{1,\ldots,n\}$; \eqref{eq:SecondConverse9}  comes from the introduction of the uniform random variable $T$ over $\{1 , \ldots, n\}$ and the introduction of the corresponding mean random variables  $S_T$,  $W_{2,T}$, $X_T$, $Y_T$; \eqref{eq:SecondConverse10} comes from identifying $W_2=(W_{2,T} , T)$; \eqref{eq:SecondConverse11}  comes from the empirical coordination requirement as stated in Lemma \ref{lemma:ErrorEventCoordination01}, \mrb{since} the sequences of symbols $(S^n,X^n,Y^n,V^n)$ are not jointly typical with small error probability $\prob(E=1)$; \eqref{eq:SecondConverse12} comes from Lemma \ref{lemma:3}. The sequences of symbols $(S^n,X^n,Y^n,V^n)$ are jointly typical, hence the distribution of the mean random variables $ \PP_{S_TX_TY_TV_T|E=0}$ is close to the target distribution $\QQ_{SXYV}$.  The \mrb{result of} \cite[Lemma 2.7, pp. 19]{CsiszarKorner(Book)11} concludes.

\textbf{Second Constraint:}
 From \eqref{eq:ThirdConverse3ce}, we have
\begin{align}
n \textsf{E} \leq &  n  \bigg( H(S) + \varepsilon \bigg). 
\end{align}

 \textbf{Third Constraint:}
\begin{eqnarray}
n \bigg( \textsf{E} +  \textsf{R} \bigg) 
&\leq & I(S^n ; Y^n)  + H(M) +  n2\varepsilon \label{eq:FourthConverse1}\\
&= &  I(S^n ; Y^n)  + I(M ; Y^n ) +  H(M |Y^n) +  n2\varepsilon \label{eq:FourthConverse2}\\
&\leq &  I(S^n ; Y^n)  + I(M ; Y^n ) +  n3\varepsilon \label{eq:FourthConverse3}\\
&\leq &  I(S^n ; Y^n)  + I(M ; Y^n | S^n) +  n3\varepsilon \label{eq:FourthConverse4}\\
&= &  I(S^n ,M ; Y^n)  +  n3\varepsilon \label{eq:FourthConverse5}\\
&\leq &  I(S^n ,X^n ; Y^n)  +  n3\varepsilon \label{eq:FourthConverse6}\\
&\leq &  n \bigg( I(S ,X ; Y)  +  3\varepsilon \bigg), \label{eq:FourthConverse7}
\end{eqnarray}
where  \eqref{eq:FourthConverse1} comes from the definition of achievable rate and information  leakage $( \textsf{R} ,  \textsf{E})$, stated in equations \eqref{eq:converse1sce} and  \eqref{eq:converse3sce}; \eqref{eq:FourthConverse3} comes from equation \eqref{eq:converse2sce} and  Fano's inequality, see \cite[pp. 19]{ElGamalKim(book)11}; \eqref{eq:FourthConverse4}  comes from the independence between the message $M$ and the channel states $S^n$, hence $I(M ; Y^n ) \leq I(M ; Y^n ,S^n ) = I(M ; Y^n |S^n ) $; \eqref{eq:FourthConverse6}  comes from the Markov chain $Y^n  -\!\!\!\!\minuso\!\!\!\!- ( X^n,S^n  )   -\!\!\!\!\minuso\!\!\!\!- M$, induced by the channel; \eqref{eq:FourthConverse7}  comes from the memoryless property of the channel that implies: $ H( Y^n | S^n ,X^n) = n H( Y |X,S)$.

 \textbf{Conclusion:} 
If the triple of rate, information leakage and distribution $(\textsf{R},\textsf{E},\mrb{\QQ_{SXYV}})$ is achievable with \mrb{a code with} strictly causal encoding, then the following equations are satisfied  for all $\varepsilon>0$:
\begin{eqnarray}
 I(S;X,Y,W_2) - 4 \varepsilon \leq &\textsf{E}&\leq H(S) + \varepsilon,\\
&\textsf{R} + \textsf{E} &\leq I(X,S; Y)+ 3 \varepsilon.
\end{eqnarray}
This corresponds to  \eqref{eq:theoremSCEachie2} and  \eqref{eq:theoremSCEachie3} and \mrb{this} concludes the converse proof of Theorem \ref{theo:LeakageSCE}.

\end{document}